\documentclass[12pt]{article}
\pdfoutput=1

\DeclareFontFamily{T1}{calligra}{}
\DeclareFontShape{T1}{calligra}{m}{n}{<->s*[1.44]callig15}{}
\DeclareMathAlphabet\mathcalligra   {T1}{calligra} {m} {n}
\DeclareMathAlphabet\mathzapf       {T1}{pzc} {mb} {it}
\DeclareMathAlphabet\mathchorus     {T1}{qzc} {m} {n}
\DeclareMathAlphabet\mathrsfso      {U}{rsfso}{m}{n}
\DeclareMathAlphabet\mathfrcal      {T1}{frcursive}{m}{it}
\DeclareFontFamily{T1}{frcursive}{}
\DeclareFontShape{T1}{frcursive}{m}{n}{<->s*[1.44]callig15}{}
\DeclareMathAlphabet\mathfrcal      {T1}{frcursive}{m}{it}

\usepackage{amsmath}
\usepackage{amssymb}
\usepackage{amsthm}
\usepackage{graphicx}
\usepackage[dvipsnames]{xcolor}
\usepackage[english]{babel}
\numberwithin{equation}{section}
\usepackage{array}
\usepackage{mathtools}
\usepackage{dsfont}
\usepackage{mathrsfs}
\usepackage{tikz}\usetikzlibrary{matrix,fit}
\usepackage{varwidth}
\usepackage{enumerate}
\usepackage{appendix}
\usepackage{ytableau}

\usepackage{caption}

\usepackage[compat=1.1.0]{tikz-feynman}

\usepackage[margin=1in]{geometry}
\usepackage{nicematrix}

\usepackage[
    backend=bibtex,
    style=alphabetic,
maxbibnames=99,
giveninits=true,
minalphanames=1,
maxalphanames=3,url=false
  ]{biblatex}

  \bibliography{Geodesics}
  
\usepackage{mathabx}
\usepackage{empheq}

\setlength{\fboxsep}{1em}

\setcounter{tocdepth}{2}

\usepackage{setspace}

\usepackage{fontawesome5}

\usepackage{slashed}
\usepackage{upgreek}
\usepackage{appendix}

\usepackage{tabstackengine}

\usepackage{wrapfig}
\usepackage[abs]{overpic}

\usepackage{float}

\usepackage{mathtools}

\fixTABwidth{T}

\textheight=21.5cm \textwidth=15.5cm \oddsidemargin=0.5cm
\topmargin=-0.5cm

\newcommand{\dd}{\partial}

\newcommand{\CP}{\mathbb{CP}}
\newcommand{\CC}{\mathbb{C}}

\newcommand{\bea}{\begin{equation}}
\newcommand{\eea}{\end{equation}}
\newcommand{\bear}{\begin{eqnarray}}
\newcommand{\eear}{\end{eqnarray}}
\newcommand{\bearr}{\begin{eqnarray*}}
\newcommand{\eearr}{\end{eqnarray*}}

\newtheorem{prop}{Proposition}
\newtheorem{lem}{Lemma}

\newdimen\mytextwidth
\newcommand\rem[2][cyan!40!green]{\noindent\nobreak\hfil\penalty1000\hfilneg
\mytextwidth=\linewidth\advance\mytextwidth by 2mm
\begin{tikzpicture}[baseline=-\the\dimexpr\fontdimen22\textfont2\relax]\node[outer sep=0pt,draw=black,fill=#1,fill opacity=1,text opacity=1,rectangle,rounded corners]{\begin{varwidth}{\mytextwidth}\textcolor{white}{#2}\end{varwidth}};
\end{tikzpicture}\allowbreak
}

\newcommand\whiterem[2][white!]{\noindent\nobreak\hfil\penalty1000\hfilneg
\mytextwidth=\linewidth\advance\mytextwidth by 2mm
\begin{tikzpicture}[baseline=-\the\dimexpr\fontdimen22\textfont2\relax]\node[outer sep=0pt,draw=black,fill=#1,fill opacity=1,text opacity=1,rectangle,rounded corners,line width=1.5pt]{\begin{varwidth}{\mytextwidth}\textcolor{black}{#2}\end{varwidth}};
\end{tikzpicture}\allowbreak
}

\usepackage{accents}

\newcommand{\Id}{\mathrm{Id}}
\newcommand{\SU}{\mathrm{SU}}

\renewbibmacro{in:}{}

\usepackage{mdframed}

\ExecuteBibliographyOptions{doi=false}
\ExecuteBibliographyOptions{isbn=false}
\newbibmacro{string+doi}[1]{
  \iffieldundef{doi}{#1}{\href{http://dx.doi.org/\thefield{doi}}{#1}}}
\DeclareFieldFormat{title}{\usebibmacro{string+doi}{\mkbibemph{#1}}}
\DeclareFieldFormat[article]{title}{\usebibmacro{string+doi}{\mkbibquote{#1}}}

\setlength{\fboxsep}{0.3cm}

\newmdenv[
  topline=false,
  bottomline=false,
  rightline=false,
  linewidth=2pt,
  skipabove=\topsep,
  skipbelow=\topsep
]{siderules}

\newmdenv[
  topline=false,
  bottomline=false,
  linewidth=2pt,
  skipabove=\topsep,
  skipbelow=\topsep
]{siderulesright}

\makeatletter
\renewcommand{\@seccntformat}[1]{\csname the#1\endcsname.\quad}
\makeatother

\usepackage{setspace}
\onehalfspacing

\usepackage{xpatch}

\makeatletter
\renewcommand{\@chap@pppage}{
  \clear@ppage
  \thispagestyle{plain}
  \if@twocolumn\onecolumn\@tempswatrue\else\@tempswafalse\fi
  \null\vfil
  \markboth{}{}
  {\centering
   \interlinepenalty \@M
   \normalfont
   \MakeUppercase \appendixpagename\par}
  \if@dotoc@pp
    \addappheadtotoc
  \fi
  \vfil\newpage
  \if@twoside
    \if@openright
      \null
      \thispagestyle{empty}
      \newpage
    \fi
  \fi
  \if@tempswa
    \twocolumn
  \fi
}
\makeatother

\definecolor{navycol}{RGB}{100,150,160}
   \definecolor{pinkcol}{RGB}{242,55,55}
   \definecolor{greencol}{RGB}{50,205,50}

   \definecolor{bluecol}{RGB}{30,144,255}

\usepackage{titlesec}

\titleformat*{\section}{\large\bfseries}
\titleformat*{\subsection}{\normalsize\bfseries}
\titleformat*{\subsubsection}{\normalsize\bfseries}
\titleformat*{\paragraph}{\large\bfseries}
\titleformat*{\subparagraph}{\large\bfseries}
\titlespacing{\author}{-5pt}{-5pt}{-5pt}[-5pt]

\makeatletter
\renewcommand\subsubsection{\@startsection{subsubsection}{3}{\z@}
                                     {-3.25ex\@plus -1ex \@minus -.2ex}
                                     {-1.5ex \@plus -.2ex}
                                     {\normalfont\normalsize\bfseries}}
\renewcommand\subsection{\@startsection{subsection}{3}{\z@}
                                     {-3.25ex\@plus -1ex \@minus -.2ex}
                                     {-1.5ex \@plus -.2ex}
                                     {\normalfont\normalsize\bfseries}}                                     
\makeatother

\setlength{\columnsep}{20pt}

\usepackage{soul}

\DeclareRobustCommand{\db}[1]
{
	{\begingroup
		\sethlcolor{Yellow}\hl{DB: #1}
		\endgroup}
}

\DeclareFontFamily{U}{solomos}{}
\DeclareErrorFont{U}{solomos}{m}{n}{10}
\DeclareFontShape{U}{solomos}{m}{n}{
  <-> s*[1.1]  gsolomos8r
}{}

   \interfootnotelinepenalty=10000
   
   \usepackage{stmaryrd}

\usepackage{tikz}
\usetikzlibrary{arrows.meta}

\usepackage{indentfirst}

\sloppy

\usepackage{tocloft}
\setlength\cftaftertoctitleskip{0pt}

\setlength\cftparskip{-1pt}
\setlength\cftbeforesecskip{-1pt}

\let \savenumberline \numberline
\def \numberline#1{\savenumberline{#1.}}

\usepackage{etoolbox}
\patchcmd{\tableofcontents}{\@starttoc}{\vspace{-0.3cm}\@starttoc}{}{}

\usepackage{xurl}
\usepackage{scrextend}

\sloppy

\usepackage[hyperfootnotes=]{hyperref}
\hypersetup{
    colorlinks=true,
    linkcolor=blue,
    filecolor=magenta,      
    urlcolor=cyan,
    breaklinks=true
    }

\begin{document}

\title{\vspace{-1.0cm} The classical and quantum particle \\ on a flag manifold\\ \vspace{1.5cm} }

\author{Dmitri Bykov$^{\,a, \,b,\,}$\footnote{Emails:
 bykov@mi-ras.ru, dmitri.v.bykov@gmail.com} \qquad\qquad Andrew Kuzovchikov$^{\,c\,}$\footnote{Email:
 andrkuzovchikov@mail.ru}
\\  \vspace{0cm}  \\
{\small $a)$ \emph{Steklov
Mathematical Institute of Russian Academy of Sciences,}} \\{\small \emph{Gubkina str. 8, 119991 Moscow, Russia} }\\
{\small $b)$ \emph{Institute for Theoretical and Mathematical Physics,}} \\{\small \emph{Lomonosov Moscow State University, 119991 Moscow, Russia}}\\
{\small $c)$ \emph{Saint Petersburg State University,}} \\{\small \emph{Universitetskaya nab. 7/9, 199034 St.Petersburg, Russia}}
}

\date{}

{\let\newpage\relax\maketitle}

\maketitle


    \ytableausetup{centertableaux}
    
    \textbf{Abstract.} In the present paper we consider two related problems, i.e. the description of geodesics and the calculation of the spectrum of the Laplace-Beltrami operator on a flag manifold. We show that there exists a family of invariant metrics such that both problems can be solved simply and explicitly. In order to determine the spectrum of the Laplace-Beltrami operator, we construct natural, finite-dimensional approximations (of spin chain type) to the Hilbert space of functions on a flag manifold.

    \newpage
\tableofcontents
\addtocontents{toc}{\vspace{0.5cm}}
    
    \section{Introduction}
    Searching for geodesics on Riemannian manifolds is an interesting, but rather difficult problem due to nonlinearity of the geodesic equations. This problem is particularly interesting when the manifold in question is a homogeneous space. There is vast literature devoted to the study of  integrability of geodesic flows on some of them, cf. the well-known work \cite{Thimm, Paternain,AVBolsinov_1998} as well as~\cite[Chapter 3]{Cheeger} for general information about geodesics on homogeneous spaces.
    
    There are also articles on non-commutative integrability, e.g. \cite{Bolsinov}: here the main ideas and the connection with ordinary integrability are described in \cite{BolsinovTop}. Other questions related to geodesics have also been studied, for example, the case when the manifold is a space of geodesic orbits~\cite{Jovanovi__2011}. In addition, topological obstructions to integrability are considered in~\cite{BolsinovTop}. 

    In this paper we are interested in a special class of homogeneous spaces, namely the flag manifolds. Let us recall the basic notions. A flag is a sequence of nested linear subspaces in $\mathbb{C}^N$,
    \begin{gather}
        0 \subset L_1 \subset L_2 \subset \ldots \subset L_m = \mathbb{C}^N\,,
    \end{gather}
    of given dimensions $\mathrm{dim} L_a = d_a$. The flag manifold is the manifold of such nested subspaces
    \begin{gather}
        \mathcal{F}_{d_1,d_2,\ldots,d_m}=\{0 \subset L_1 \subset \ldots \subset L_m = \mathbb{C}^N\}.
    \end{gather}
    This is a homogeneous space which can be presented as a quotient space of the unitary group \cite{Affleck_2022}: 
    \begin{gather}
        \mathcal{F}_{d_1,d_2,\ldots,d_m}=
        \frac{\mathrm{U}(N)}{\mathrm{U}(n_1)\times \mathrm{U}(n_2)\times \ldots \times \mathrm{U}(n_m)},
    \end{gather}
    where $n_i=d_i-d_{i-1}$, $i=1,2,\ldots,m$ (we set $d_0=0$). Basic information about flags\footnote{For brevity sometimes we refer to flag manifolds simply as flags, with the hope that this will not confuse the reader.} can be found in \cite{ArvGeometryOfFlags}.
    
    We say that $\mathcal{F}_{1,2,\ldots,N}=\frac{\mathrm{U}(N)}{\mathrm{U}(1)^{N}}$ is a manifold of complete flags. A point on it can be parameterized by an ordered set of $N$ pairwise orthogonal vectors $\{u_i\in \mathbb{CP}^{N-1}\}_{i=1}^N$. For convenience, we can assume that the vectors are normalised, i.e. $\Bar{u}_a \circ u_b := \sum_{j=1}^N \Bar{u}^j_a u^j_b = \delta_{ab}$. Furthermore, since the vectors take values in the projective space, they are determined up to a multiplication by a phase factor. For generic values of the $d_i$'s  $\mathcal{F}_{d_1,d_2,\ldots,d_m}$ are  manifolds of partial flags.

    Geodesics on flag manifolds have also been previously studied in the literature. In~\cite{Paternain} it was shown that an arbitrary invariant metric on the flag manifold $\mathcal{F}_{1,2,3}=\frac{\mathrm{U}(3)
    }{\mathrm{U}(1)^{\times 3}}$ admits a fully integrable geodesic flow if not every geodesic is an orbit of a one-parameter subgroup of $\SU(3)$. In \cite{AlekseevskyGO} the authors studied those  metrics, for which flag manifolds are spaces of geodesic orbits. Finally, the article \cite{CohenEq} focuses on the study of equigeodesics on flag manifolds.

    In this paper we consider special metrics on manifolds of complete (or partial) flags, such that the geodesic equations can be solved explicitly. For these metrics it is shown that the eigenvalues of the Laplace-Beltrami operator can also be found explicitly. We note that the so-called normal metric on a flag manifold  is a point in the space of metrics that we study. In this latter case the eigenvalue problem had been previously solved in \cite{Yamaguchi}.

    In order to study geodesics and eigenvalues of the Laplace-Beltrami operator on $\mathcal{F}_{1,2,\ldots,N}$ we use the following idea stated in \cite{BykLagEmb}. One should consider an $\SU(N)$-spin chain, and in a certain  `infinite spin' limit its Hamiltonian transforms into the Hamiltonian of a free particle on the complete flag manifold. The solution of the classical problem provides the geodesics, while the corresponding quantum problem allows computing the spectrum of the Laplace-Beltrami operator. 

    The paper is organised as follows. In Section~\ref{SU(2)Case} we describe the geodesics and the spectrum of the Laplace-Beltrami operator on a sphere in a formalism that can be generalized to the $\SU(N)$ case and is used later on. Before turning to the most general case, the $\mathrm{SU}(3)$ case is studied in detail in Section~\ref{SU3Case}, i.e. all geodesics on the corresponding complete flag manifold are explicitly determined and the spectrum of the Laplace-Beltrami operator is calculated. Furthermore, the transition from $\mathcal{F}_{1,2,3}$ to the projective space $\CP^2$ (which, in this context, should be viewed as a partial flag manifold) is investigated. This method admits a natural generalization to the case of higher $N$. The corresponding scheme is described in detail in Section~\ref{Nsec}. In particular, in Sections~\ref{SpinChain}-\ref{flagMetr} we outline a method for computing the spectrum of the Laplace-Beltrami operator. Proposition~\ref{geodsolproof} is devoted to the explicit solution of the geodesic equations by induction in $N$. Finally, in Section~\ref{partialFlags} we study the transition from $\mathcal{F}_{1,2,\ldots,N}$ to an arbitrary partial flag manifold.
    \section{$\SU(2)$ case} \label{SU(2)Case}
    Let us recall the basic properties of the well-known behaviour of a free particle (classical or quantum) on the sphere $\mathcal{S}^2$, which is the simplest complete flag manifold~$\mathcal{F}_{1,2}$. The ideas presented here are further developed in what follows and applied to~$\mathcal{F}_{1,2,\ldots,N}$.

    \subsection{Spectrum of the quantum Hamiltonian.}  
    The case of a quantum particle on $\mathcal{S}^2$ can be obtained as a large spin limit of a simple system consisting of two $\SU(2)$ spins. Its Hamiltonian is\footnote{In the following we assume summation w.r.t. repeated indices.}
    \begin{gather}\label{su2Ham}
        \mathcal{H} ={1\over 2}\left(S^a_1+S^a_2\right)^2 =S^a_1 S^a_2+\mathrm{const}\,,
    \end{gather}
    where $S^a_i$ ($a=1,2,3$) are generators of the $\mathfrak{su}(2)$ algebra in the representation\footnote{$S^a_i S^a_i$ is a quadratic Casimir operator acting in the irreducible representation $T^{p\over 2}$ and therefore proportional to the identity operator.} $T^{\frac{p}{2}}$ of spin $\frac{p}{2}$.
    Thus, the operator $\mathcal{H}$ acts in the tensor product of representations $T^{\frac{p}{2}} \otimes T^{\frac{p}{2}}$.

    It is clear that the Hamiltonian~(\ref{su2Ham}) is proportional to the quadratic Casimir operator $C_2=S^a S^a$, where $S^a= S^a_1+S^a_2$ are generators of the $\mathfrak{su}(2)$ algebra in the representation considered. In the following $\mathrm{SU}(N)$ generalization  we will also consider Hamiltonians associated with quadratic Casimir operators.
    
    A well-known result of representation theory is
    \begin{gather}\label{S2pinf}
        \underset{p\to\infty}{\mathrm{lim}}\,T^{\frac{p}{2}} \otimes T^{\frac{p}{2}}=\underset{p\to\infty}{\mathrm{lim}}\,\bigoplus_{j=0}^p T^{j} = L^2(\mathcal{S}^2),
    \end{gather}
    where $L^2(\mathcal{S}^2)$ is the space of square integrable functions on the sphere. Details of the above facts from representation theory can be found, for example, in \cite{Isaev:2018xcg}.

    Thus, the considered system describes a quantum particle on the sphere $\mathcal{S}^2 \cong \mathbb{CP}^1$ in the limit $p\to\infty$.
    The spectrum of $\mathcal{H}$ is proportional to the spectrum of the Casimir operator acting in the space~(\ref{S2pinf}) and is given by a well-known formula 
    \begin{gather}
        \Lambda_j = j(j+1),\quad j\in \mathbb{N}_0.
    \end{gather}
    Clearly, this is also the spectrum of the Laplace-Beltrami operator on the sphere $\mathcal{S}^2$. It is natural to assume that operator~(\ref{su2Ham}) has the Laplace-Beltrami operator as its limit when $p\to\infty$. We restrict ourselves to this observation for the moment: a proof of this claim in the general ($\mathrm{SU}(3)$ and $\mathrm{SU}(N)$) case will be provided  in Section \ref{MetricsFlag}.

    \subsection{Geodesics.} 
    In order to describe the motion of a classical particle on the sphere, we will make use of an embedding
    \begin{gather}
    \mathbb{CP}^1 \hookrightarrow \mathbb{CP}^1 \times \mathbb{CP}^1.
    \end{gather}
    Let us assume that each of the spheres on the right-hand side is parameterized by a unit vector $u_1, u_2 \in \CC^2$, defined up to a phase multiplier. Then the embedding can be described as the orthogonality condition for these two vectors:
    \begin{gather}
    \Bar{u}_1\circ u_2=0.
    \end{gather}
    In the following it is convenient to assume that these vectors are normalised, i.e. $\Bar{u}_i \circ u_j = \delta_{ij}$. In other words, the matrix
    \begin{gather}\label{U2U12quot}
        (\;u_1\;u_2\;)\in {\mathrm{U}(2)\over \mathrm{U}(1)\times \mathrm{U}(1)}\,\cong\,\CP^1.
    \end{gather}
    In these terms the $\SU(2)$-invariant metric on the sphere can be written as
    \begin{gather}\label{S2metr}
        d s^2=
        \left|\Bar{u}_1\circ du_2\right|^2.
    \end{gather}
    Since the $\SU(2)$-invariant metric on $\mathbb{CP}^1$ is unique (up to a constant multiplier), (\ref{S2metr}) is just an unusual form of the standard metric on the sphere.
    This formalism naturally generalizes to more general cases of flag manifolds, which is why we study the two-dimensional sphere in this less familiar formalism.
    
    The geodesic equations reduce to
    \begin{gather}\label{S2EoM}
\frac{D}{Dt}\left(\Bar{u}_2\circ \dot{u}_1\right)=\frac{d}{dt}\left(\Bar{u}_2\circ \dot{u}_1\right)-\Bar{u}_2\circ \dot{u}_1\left(\Bar{u}_1\circ \Dot{u}_1-\Bar{u}_2\circ \Dot{u}_2\right)=0\,,
\end{gather}
where $\frac{D}{Dt}$ is a covariant derivative w.r.t. the gauge group $U(1)^2$, which appears in the denominator of~(\ref{U2U12quot}).
    In the following we use the gauge fixing condition $\Bar{u}_1\circ \Dot{u}_1=\Bar{u}_2\circ \Dot{u}_2=0$, so that $\frac{D}{Dt}=\frac{d}{dt}$. Solving (\ref{S2EoM}), we get $\Bar{u}_2\circ \dot{u}_1 = \mathrm{const}$. Since the vectors $u_1$ and $u_2$ form a basis, we can write
    \begin{gather}
        \dot{u}_1=u_1 (\Bar{u}_1\circ \Dot{u}_1) + u_2 (\Bar{u}_2\circ \Dot{u}_1) = u_2 (\Bar{u}_2\circ \Dot{u}_1), \nonumber \\
        \dot{u}_2=u_1 (\Bar{u}_1\circ \Dot{u}_2) + u_2 (\Bar{u}_2\circ \Dot{u}_2) = u_1 (\Bar{u}_1\circ \Dot{u}_2).
    \end{gather}
    Evolution of these vectors can therefore be easily found:
    \begin{gather}
        \begin{pmatrix}
           u_1(t)\\
           u_2(t)
        \end{pmatrix} = 
        \exp \left[
        \begin{pmatrix}
            0 & A^0_{21}\\
            -\Bar{A}^0_{21} & 0
        \end{pmatrix}t
        \right]
        \begin{pmatrix}
           u_1(0)\\
           u_2(0)
        \end{pmatrix}\,,
        \label{ClassicalSolutionS2}
    \end{gather}
    where $\Bar{u}_2 \circ \Dot{u}_1 = A^0_{21}$.
    \section{$\SU(3)$ case}\label{SU3Case}
 
    This section generalizes the main ideas from the previous section to the $\SU(3)$ case and to the manifold $\mathcal{F}_{1,2,3}$ accordingly. To this end we analyze a specific $\SU(3)$ spin chain and exhibit its relation to the free particle problem on $\mathcal{F}_{1,2,3}$. 
    \subsection{$\SU(3)$ spin chain consisting of three spins.}\label{definitionOfTheSU(3)SpinChain}
    Consider a system of three $\SU(3)$ spins with Hamiltonian of the form
    \begin{gather}
        \mathcal{H}
        =
        \alpha S^a_1 S^a_2 + \beta S^a_2 S^a_3 + \gamma S^a_1 S^a_3 + \mathrm{const},\label{spinChainHamiltonian}  
    \end{gather}
    where $\alpha, \beta, \gamma$ are real, non-negative numbers. For convenience, we introduce an additional additive constant by analogy with equation~(\ref{su2Ham}). Generators $S^a_i$ ($a=1,\ldots,8$) of the $\mathfrak{su}(3)$ algebra are taken in the $p$-th symmetric power of the associated fundamental representation, $\mathrm{Sym}(p)$. This corresponds to a Young diagram with one row of length $p$. Thus, $\mathcal{H}$ acts in the space 
    \\
    \begin{gather}\label{Vpdef}
        \mathrm{V}(p) =
        \underbrace{
        \begin{ytableau}
               ~ & ~ & \dots & ~
        \end{ytableau}
        }_{p}^{1}
        \otimes
        \underbrace{
        \begin{ytableau}
               ~ & ~ & \dots & ~
        \end{ytableau}
        }_{p}^{2}
        \otimes
        \underbrace{
        \begin{ytableau}
               ~ & ~ & \dots & ~
        \end{ytableau}
        }_{p}^{3} 
        = \mathrm{Sym}(p)^{\otimes 3}.
    \end{gather}
   
    Based on the analogy with the sphere~(\ref{S2pinf}), eventually it becomes necessary to take the limit $p \rightarrow \infty$. Specifically, we will show that
    \[
    \underset{p\to\infty}{\mathrm{lim}}\,\mathrm{V}(p) =L^2(\mathcal{F}_{1,2,3})\,,
    \]
    which is a natural generalization of the statement~(\ref{S2pinf}) for the sphere $\mathcal{S}^2=\mathcal{F}_{1,2}$.
    
    Let us introduce $\mathrm{Irr}\big(\mathrm{Sym}(p)^{\otimes n}\big)$ -- the set of Young diagrams corresponding to irreducible representations in $\mathrm{Sym}(p)^{\otimes n}$. Note that for each diagram in $\mathrm{Irr}\big(\mathrm{V}(p)\big)$ there is a diagram in $\mathrm{Irr}\big(\mathrm{V}(p+1)\big)$ with an additional column of three boxes. They both correspond to the same representation. Thus, 
    $\mathrm{Irr}\big(\mathrm{V}(p)\big)\hookrightarrow \mathrm{Irr}\big(\mathrm{V}(p+1)\big)$. All of this follows from the rules for multiplying Young diagrams and mapping those to irreducible representations, cf.~\cite[Appendix A.1]{fulton1991representation}. 

    \subsection{Derivation of the metric on $\mathcal{F}_{1,2,3}$.}\label{MetricsFlag}
    In this section we find a connection between the considered spin chain and the free particle problem on $\mathcal{F}_{1,2,3}$. In particular, it is shown that the spectrum of the Hamiltonian $\mathcal{H}$ coincides (in the limit $p\to \infty$) with that of the Laplace-Beltrami operator on $\mathcal{F}_{1,2,3}$ with a certain metric.

    The proof is based on the formalism of path integration\footnote{The formalism of path integration was applied to finite-dimensional systems, for example, in~\cite{ALEKSEEV1988391}.}. We are interested in the spin chain partition function $\mathrm{Tr}_{\mathrm{V}(p)}(e^{-\tau\mathcal{H}})$ and we work with generalized coherent states to write out the path integral (see~\cite{Perelomov2002} for general information on coherent states and~\cite{Affleck_2022} for an application to spin chains). In our case, these states are parameterized by the set $(z_1,z_2,z_3) \in \left(\mathbb{CP}^2\right)^{\times 3}$. 
    
    One can write an expression for the partition function as a path integral with an action of the form \cite{BykLagEmb}
    \begin{gather}
        \mathcal{S}=ip\int_0^{\tau} dt\,
        \left(
            \frac{\Bar{z}_1\circ \dot{z}_1}{\left|z_1\right|^2}+
            \frac{\Bar{z}_2\circ \dot{z}_2}{\left|z_2\right|^2}+
            \frac{\Bar{z}_3\circ \dot{z}_3}{\left|z_3\right|^2}
        \right)
        -
        \nonumber \\
        -
        p^2\int_0^{\tau} dt\, 
        \left(
            \alpha \frac{\left|\Bar{z_1}\circ z_2\right|^2}{\left|z_1\right|^2\left|z_2\right|^2}+
            \beta \frac{\left|\Bar{z_2}\circ z_3\right|^2}{\left|z_2\right|^2\left|z_3\right|^2}+
            \gamma \frac{\left|\Bar{z_1}\circ z_3\right|^2}{\left|z_1\right|^2\left|z_3\right|^2}
        \right), \label{InitialAction}
    \end{gather}
    where $z_i \in \mathbb{CP}^2$, and periodic boundary conditions should be imposed: $z_i(t+\tau)=z_i(t)$.

Further consideration is based on the following proposition:
 \begin{prop}[\cite{BykLagEmb}]\label{lagrtheorem}
       $\mathcal{F}_{1,2,3}$ is a Lagrangian submanifold of $(\mathbb{CP}^2 \times \mathbb{CP}^2 \times \mathbb{CP}^2, \Omega)$, where the symplectic form $\Omega$ is a sum of Fubini-Study forms. 
    \end{prop}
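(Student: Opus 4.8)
The plan is to verify directly the two defining properties of a Lagrangian submanifold: that $\mathcal{F}_{1,2,3}$ has exactly half the real dimension of the ambient symplectic manifold, and that $\Omega$ restricts to zero on it (a half-dimensional isotropic submanifold is Lagrangian). The dimension check is immediate, since $\dim_{\mathbb{R}}\mathcal{F}_{1,2,3}=\dim_{\mathbb{R}}\mathrm{U}(3)-\dim_{\mathbb{R}}\mathrm{U}(1)^3=9-3=6$, which is half of $\dim_{\mathbb{R}}(\CP^2\times\CP^2\times\CP^2)=12$. Hence the whole content lies in showing $\Omega|_{\mathcal{F}_{1,2,3}}=0$.

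First I would fix the embedding explicitly. A point of $\mathcal{F}_{1,2,3}$ is an orthonormal frame $(u_1,u_2,u_3)$ with $\Bar u_a\circ u_b=\delta_{ab}$, taken modulo the phases in $\mathrm{U}(1)^3$; the embedding into $(\CP^2)^{\times 3}$ sends it to the triple of lines $([u_1],[u_2],[u_3])$, with image precisely the locus of pairwise orthogonal lines. Note that the defining constraints $\Bar u_a\circ u_b=0$ for $a\neq b$ are not holomorphic, so the embedding is not holomorphic — as it must not be for a Lagrangian submanifold of a Kähler manifold.

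The core is the pullback computation. Up to a common positive constant (the same for all three factors, since $\Omega$ is the plain sum of the Fubini--Study forms), the restriction of the $i$-th Fubini--Study form to the unit-vector locus can be written $\omega_i=i\,d\Bar u_i\wedge du_i$, with $d\Bar u_i\wedge du_i:=\sum_j d\Bar u_i^{\,j}\wedge du_i^{\,j}$. Introducing the Maurer--Cartan 1-forms $\theta_{ab}:=\Bar u_a\circ du_b$, differentiation of $\Bar u_a\circ u_b=\delta_{ab}$ yields the anti-Hermiticity relation $\theta_{ab}=-\overline{\theta_{ba}}$. The Fubini--Study form depends only on the component of $du_i$ orthogonal to $u_i$, namely $du_i-u_i\theta_{ii}=\sum_{a\neq i}u_a\theta_{ai}$ — the projecting-out of the phase direction $\theta_{ii}$ being exactly what makes $\omega_i$ descend to $\CP^2$. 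Using orthonormality, I then obtain $\omega_i=i\sum_{a\neq i}\overline{\theta_{ai}}\wedge\theta_{ai}$. It remains to sum over $i$ and pair the term labelled $(a,i)$ with the one labelled $(i,a)$: since $\theta_{ia}=-\overline{\theta_{ai}}$, one gets $\overline{\theta_{ia}}\wedge\theta_{ia}=\theta_{ai}\wedge\overline{\theta_{ai}}=-\overline{\theta_{ai}}\wedge\theta_{ai}$, so the two contributions cancel and $\Omega|_{\mathcal{F}_{1,2,3}}=\sum_i\omega_i=0$.

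I expect the only real subtlety to be bookkeeping rather than conceptual. One must treat the $\mathrm{U}(1)^3$ phase freedom carefully, which the computation does automatically through the projection that eliminates $\theta_{ii}$. More importantly, the pairwise cancellation of the off-diagonal terms hinges on the three Fubini--Study forms entering with equal weights: had the coefficients differed, the surviving term would be proportional to their difference and $\mathcal{F}_{1,2,3}$ would fail to be isotropic. The one genuinely routine step I would still carry out is verifying the normalization $\omega_i=i\,d\Bar u_i\wedge du_i$ starting from the standard expression $\omega_{\mathrm{FS}}=\tfrac{i}{2}\,\partial\overline{\partial}\log|z|^2$.
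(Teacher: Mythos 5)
Your proof is correct, but it takes a genuinely different route from the paper's. The paper argues via symplectic reduction: it writes the momentum map $\mu=\sum_{i=1}^3 u_i\otimes\bar u_i-\Id$ for the diagonal $\SU(3)$-action on $(\CP^2)^{\times 3}$, identifies $\mu^{-1}(0)$ with the locus of pairwise orthogonal lines (a ``partition of unity''), observes that this locus is a single $\SU(3)$-orbit because $\mathcal{F}_{1,2,3}$ is homogeneous, and then invokes the general fact that a group orbit lying inside the zero level set of a momentum map is isotropic; the dimension count $\dim\mathcal{F}_{1,2,3}=6=\tfrac{1}{2}\cdot 12$ finishes the argument exactly as in your proposal. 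You replace this Hamiltonian-action input by a direct pullback computation with the Maurer--Cartan forms $\theta_{ab}=\bar u_a\circ du_b$, and your cancellation — pairing the $(a,i)$ term with the $(i,a)$ term via $\theta_{ia}=-\overline{\theta_{ai}}$ — is a correct, self-contained verification of isotropy (the normalization you defer is harmless: the paper itself uses the convention $\Omega=i\sum_k d\bar z_k\wedge dz_k$ for unit vectors, and any overall constant is irrelevant to vanishing). As for what each approach buys: the paper's moment-map argument explains conceptually \emph{why} the result holds and generalizes with no extra work — it is reused essentially verbatim for the Lagrangian embedding of partial flags into products of Grassmannians in Section 5, and it feeds directly into Proposition 2, where the same polar-decomposition bookkeeping realizes a neighbourhood of the flag as an open subset of $T^\ast\mathcal{F}_{1,2,3}$. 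Your computation is more elementary, needs no facts about Hamiltonian group actions, and makes the precise mechanism visible: as you note, the cancellation hinges on the three Fubini--Study forms entering with equal weights, which is the exact counterpart, in the moment-map picture, of the requirement that the orthonormal-frame locus sit in the \emph{zero} level set of $\mu$ (with unequal weights $\sum_i c_i u_i\otimes\bar u_i$ is never proportional to $\Id$ on that locus, and isotropy indeed fails).
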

Recall the proof:
\begin{proof}
    As before in Section~\ref{SU(2)Case}, assume that a point in $(\CP^2)_i$ is parameterized by the unit vector $u_i\in\CC^3$, determined up to  multiplication by a phase factor. Let us write out the momentum map $\mu: \mathbb{CP}^2 \times\mathbb{CP}^2 \times\mathbb{CP}^2\to \mathfrak{su}(3)$ corresponding to the diagonal action of $\SU(3)$:
    \begin{gather}\label{momapsu3}
        \mu=\sum\limits_{i=1}^3\,u_i\otimes \Bar{u}_i-\Id.
    \end{gather}
    Obviously, the manifold $\mu^{-1}(0)$ is invariant under the action of the group. Moreover, it is well-known that every orbit of the group action lying in $\mu^{-1}(0)$ is isotropic. However,~(\ref{momapsu3}) is a `partition of unity', so that in our case $\mu^{-1}(0)$ consists of pairwise orthogonal vectors in $\CP^2\times \CP^2\times \CP^2$, which is nothing but $\mathcal{F}_{1,2,3}$. Since $\mathcal{F}_{1,2,3}$ is a homogeneous space of $\SU(3)$, $\mu^{-1}(0)$ consists of a unique orbit. Therefore $\Omega\big|_{\mu^{-1}(0)}=0$. Due to the fact that $\mathrm{dim} \,\mathcal{F}_{1,2,3}=6$, it is clear that this isotropic submanifold is  Lagrangian.
\end{proof}

There is even a more general fact at play:

 \begin{prop}\label{subbundle}
       Let $\mathcal{X}\subset \mathbb{CP}^2 \times \mathbb{CP}^2 \times \mathbb{CP}^2$ be the set of all triples of lines in~$\mathbb{C}^3$ passing through the origin and not laying in the same plane (i.e. such that ${\mathrm{Det} (z_1,z_2,z_3)\neq 0}$). Then $\mathcal{X}$ is symplectomorphic to an open subset of $T^\ast \mathcal{F}_{1,2,3}$.
    \end{prop}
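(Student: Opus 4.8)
The plan is to recognise $\mathcal{X}$ as a tubular neighbourhood of the Lagrangian $\mathcal{F}_{1,2,3}$ of Proposition~\ref{lagrtheorem} and to invoke the Weinstein Lagrangian neighbourhood theorem, which identifies a neighbourhood of a Lagrangian submanifold $L\subset(M,\Omega)$ with a neighbourhood of the zero section of $T^\ast L$, the identification being induced by the isomorphism $NL\cong T^\ast L$ that the symplectic form sets up between the normal bundle of a Lagrangian and its cotangent bundle. The local statement is then immediate from Proposition~\ref{lagrtheorem}; the real content of the proposition is to upgrade ``a neighbourhood'' to ``all of $\mathcal{X}$'', for which one has to use the homogeneity of the situation.

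First I would exhibit $\mathcal{X}$ as a $\mathrm{U}(3)$-equivariant real vector bundle over $\mathcal{F}_{1,2,3}$ having $\mathcal{F}_{1,2,3}$ as its zero section. Concretely, a triple with $\mathrm{Det}(z_1,z_2,z_3)\neq 0$ is a basis of $\CC^3$, so $\mathcal{X}\cong \mathrm{GL}(3,\CC)/(\CC^\ast)^3$, while $\mathcal{F}_{1,2,3}\cong \mathrm{U}(3)/\mathrm{U}(1)^3$; applying the Gram--Schmidt process to $(z_1,z_2,z_3)$ defines an equivariant retraction $r\colon\mathcal{X}\to\mathcal{F}_{1,2,3}$ whose fibre over an orthonormal flag $(u_1,u_2,u_3)$ consists of the lines $\langle u_1\rangle,\ \langle u_2+c\,u_1\rangle,\ \langle u_3+d\,u_1+e\,u_2\rangle$ and is therefore a copy of $\CC^3\cong\mathbb{R}^6$. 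Reading off the torus weights $(t_1/t_2,\,t_1/t_3,\,t_2/t_3)$ of the fibre coordinates $(c,d,e)$ shows that this bundle is isomorphic, as a $\mathrm{U}(3)$-bundle, to $\mathrm{U}(3)\times_{T}\mathfrak{t}^{\perp}\cong T^\ast\mathcal{F}_{1,2,3}$ (via the invariant metric), where $T=\mathrm{U}(1)^3$ and $\mathfrak{t}^\perp\subset\mathfrak{u}(3)$ is the off-diagonal complement of the Cartan subalgebra $\mathfrak{t}$. This already gives a diffeomorphism; it remains to make it symplectic and to determine the image.

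For the symplectic refinement I would first note that $\Omega|_{\mathcal{X}}$ is exact. Indeed $\Omega$ vanishes on $\mathcal{F}_{1,2,3}$ because the latter is Lagrangian, so $[\Omega|_{\mathcal{F}_{1,2,3}}]=0$; since the straight-line scaling $(c,d,e)\mapsto(sc,sd,se)$ makes $r$ a deformation retraction, $i^\ast\colon H^2(\mathcal{X})\to H^2(\mathcal{F}_{1,2,3})$ is an isomorphism and hence $[\Omega|_{\mathcal{X}}]=0$. Thus $(\mathcal{X},\Omega)$ and $(T^\ast\mathcal{F}_{1,2,3},\omega_{\mathrm{can}})$ are both exact symplectic manifolds containing $\mathcal{F}_{1,2,3}$ as a Lagrangian core. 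Weinstein's theorem, applied $\SU(3)$-equivariantly, produces a symplectomorphism $\Phi_0$ from a neighbourhood of $\mathcal{F}_{1,2,3}$ in $\mathcal{X}$ onto a neighbourhood of the zero section; equivalently, one runs a relative Moser argument interpolating between $\Omega$ and the pullback of $\omega_{\mathrm{can}}$ under the bundle isomorphism above, both of which restrict to the same form along $\mathcal{F}_{1,2,3}$.

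The main obstacle is the passage from this germ to all of $\mathcal{X}$, and this is where I expect the work to lie. I would exploit the Liouville (dilation) vector fields: on $T^\ast\mathcal{F}_{1,2,3}$ the fibrewise dilation $Z_{\mathrm{can}}$ satisfies $\mathcal{L}_{Z_{\mathrm{can}}}\omega_{\mathrm{can}}=\omega_{\mathrm{can}}$, while a primitive $\lambda$ of $\Omega|_{\mathcal{X}}$ furnishes a Liouville field $Z$ on $\mathcal{X}$ with $\mathcal{L}_{Z}\Omega=\Omega$ vanishing along $\mathcal{F}_{1,2,3}$. Conjugating by the two flows, $\Phi:=\varphi^{Z_{\mathrm{can}}}_{-s}\circ\Phi_0\circ\varphi^{Z}_{s}$, extends the symplectomorphism outward along the fibres, and letting $s\to\infty$ should sweep out all of $\mathcal{X}$ and land in the open region of $T^\ast\mathcal{F}_{1,2,3}$ traced by the zero-section neighbourhood under $\varphi^{Z_{\mathrm{can}}}_{-s}$. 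The delicate points are to arrange $Z$ to be fibrewise-radial and complete enough to reach every point of $\mathcal{X}$, to match its conformal behaviour with that of $Z_{\mathrm{can}}$ so that the conjugation stays symplectic in the limit, and to identify the resulting image as a genuinely open—but in general proper—subset of $T^\ast\mathcal{F}_{1,2,3}$, the boundedness of the Fubini--Study form being responsible for the image not filling the whole cotangent bundle.
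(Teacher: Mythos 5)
Your strategy (Weinstein neighbourhood theorem plus Liouville-flow conjugation) is genuinely different from the paper's, and parts of it are sound: the Gram--Schmidt retraction does exhibit $\mathcal{X}$ as a bundle over $\mathcal{F}_{1,2,3}$ with contractible fibres, the resulting exactness of $\Omega|_{\mathcal{X}}$ is correct, and the equivariant Weinstein theorem does give a symplectomorphism near the core. But the proof has a genuine gap exactly where you say you ``expect the work to lie,'' and that gap is the entire content of the proposition --- the germ statement you establish is already an immediate consequence of Proposition~\ref{lagrtheorem}. Concretely, two things are asserted but not proved. First, for the conjugated maps $\varphi^{Z_{\mathrm{can}}}_{-s}\circ\Phi_0\circ\varphi^{Z}_{s}$ to agree as $s$ varies (so that a limit map exists at all), you need $\Phi_0$ to intertwine the two Liouville fields, i.e. $\Phi_0^\ast\lambda_{\mathrm{can}}=\lambda$; Weinstein's theorem only gives a symplectomorphism, so generically $\Phi_0^\ast\lambda_{\mathrm{can}}-\lambda=df$ and the conjugation does not stabilize. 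Second, and more seriously, you need every point of $\mathcal{X}$ to be driven into the Weinstein neighbourhood by the backward flow of $Z$ --- that is, $\mathcal{F}_{1,2,3}$ must be the skeleton of the Liouville structure with basin of attraction all of $\mathcal{X}$. This depends entirely on the choice of primitive $\lambda$ and is false for a generic one: backward trajectories can escape toward the degenerate locus $\mathrm{Det}(z_1,z_2,z_3)=0$ instead of converging to the core. (Note also that the forward flow of $Z$ on $\mathcal{X}$ is necessarily incomplete, since $\mathcal{L}_Z\Omega=\Omega$ forces exponential volume growth while $\mathcal{X}$, being open in $(\mathbb{CP}^2)^{\times 3}$, has finite volume --- so the completeness asymmetry you gloss over is real.)

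The reason this gap is hard to close abstractly is that verifying the skeleton property essentially requires writing down a primitive whose Liouville field is fibrewise radial in honest global coordinates --- and producing such coordinates is precisely what the paper does, in one stroke, with no soft machinery. The paper normalizes $|z_i|=1$, takes the polar decomposition $Z=UH$ with $K:=H^2$ (unit diagonal by the normalization), and computes directly that
\begin{equation*}
\Omega \;=\; i\,d\,\mathrm{Tr}\left(K\,U^{\dagger}dU\right)\;=\;i\sum_{k=1}^{3} d\Bar{u}_k\wedge du_k\;+\;i\,d\left(\sum_{j\neq k} K_{jk}\,\Bar{u}_j\, du_k\right),
\end{equation*}
where the first term vanishes because $\mathcal{F}_{1,2,3}$ is Lagrangian and the second is literally the canonical symplectic form of $T^\ast\mathcal{F}_{1,2,3}$, with the off-diagonal entries $K_{jk}$ serving as global fibre coordinates. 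The symplectomorphism is thus explicit and global from the start, and the image is identified on the spot: it is the open set where $K$ (Hermitian, unit diagonal) is positive-definite. If you want to rescue your approach, the efficient fix is to take $\lambda=i\sum_{j\neq k}K_{jk}\Bar{u}_j du_k$ as your primitive, at which point the Liouville field is manifestly the fibrewise dilation in the $K_{jk}$ coordinates --- but then the abstract Weinstein/Moser scaffolding becomes redundant, and you have reproduced the paper's computation.
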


    \begin{proof}
        Let $Z:=(z_1,z_2,z_3)$ be a nondegenerate matrix (i.e. $Z\in \mathcal{X}$). Assume the columns are normalised: $|z_1|^2=|z_2|^2=|z_3|^2=1$. In this case the symplectic form on $\mathbb{CP}^2 \times \mathbb{CP}^2 \times \mathbb{CP}^2$ can be written as follows:
        \begin{gather}\label{CP^2symplform}
            \Omega=i\,\sum\limits_{k=1}^3\,d\Bar{z}_k\wedge dz_k=i\,\mathrm{Tr}\left(dZ^{\dagger}\wedge dZ\right).
        \end{gather}
        According to the polar decomposition theorem, $Z$ can be uniquely represented as $Z=UH$, where $U$ is a unitary and $H$ a positive-definite Hermitian matrix. Obviously, here $H=(Z^{\dagger}Z)^{1/2}$ (the square root is well-defined) and $U=Z(Z^{\dagger}Z)^{-1/2}$. Note also that, due to the normalization of the vectors $z_i$, the diagonal elements of $K:=H^2$ are all equal to~$1$. Substituting this decomposition into~(\ref{CP^2symplform}), after some simple manipulations one obtains
        \begin{gather}
            \Omega=i \,d \,\mathrm{Tr}\left(K\,U^{\dagger}dU\right)=i \,\sum\limits_{k=1}^3 d\Bar{u}_k\wedge du_k+i\,d\,\left(\sum\limits_{j\neq k} K_{jk}\, \Bar{u}_j du_k\right),
        \end{gather}
        where $u_i$ are the columns of $U$. The first term in this equation vanishes because $\mathcal{F}_{1,2,3}$ is Lagrangian (Proposition~\ref{lagrtheorem}). The second term is the standard symplectic form on a cotangent bundle. However, note that the matrix elements $K_{jk}$ are not completely arbitrary, since the matrix $K$ is positive-definite by construction. This condition distinguishes the open subset of $T^\ast \mathcal{F}_{1,2,3}$. 
    \end{proof}

    A straightforward generalization of this statement is obviously true for the embedding $\mathcal{F}_{1, \ldots , n}\subset (\CP^{n-1})^{\times n}$. In the case $n=2$ the positive-definiteness condition of the matrix $K$ takes the specially simple form $|K_{12}|<1$ (i.e. the open subset is a disc subbundle in this case).  
\vspace{0.3cm}

Let us use the main ideas of the proof of Proposition \ref{subbundle} to simplify the action~(\ref{InitialAction}). Consider the matrix $Z:=(z_1,z_2,z_3)$ composed of the vectors featuring in~(\ref{InitialAction}) and assume that $Z$ is non-degenerate, i.e. $Z\in \mathcal{X}$. In terms of the path integral this assumption can be justified by the fact that the set $\mathcal{X}$ is dense in $\CP^2\times \CP^2\times \CP^2$. Therefore the value of the integral does not change if we replace $\CP^2\times \CP^2\times \CP^2$ with $\mathcal{X}$ in the domain of integration.

Let $|z_1|^2=|z_2|^2=|z_3|^2=1$, use polar decomposition $Z=UH$ and introduce $K := H^2$ as in the proof. Let $u_i$ be the columns of the unitary matrix $U$, i.e. $\Bar{u}_i \circ u_j=\delta_{ij}$. With these definitions, (\ref{InitialAction}) can now be rewritten in the form
    \begin{gather} \label{remadeAction}
        \mathcal{S}=ip\int_0^{\tau} dt\,\mathrm{Tr}\left(H\dot{H}+KU^{\dagger} \dot{U}\right)-p^2 \int_0^{\tau} dt\, \left(\alpha K_{12}K_{21}+\beta K_{23}K_{32}+\gamma K_{13}K_{31}\right).
    \end{gather}

    The first term is a full derivative and the corresponding integral vanishes due to periodic boundary conditions. As previously noted, the diagonal elements of $K$ are all equal to one. Therefore  one can isolate in (\ref{remadeAction}) the term  $ip\displaystyle\int_0^{\tau} dt\,\left(\Bar{u}_1\circ \Dot{u}_1+\Bar{u}_2\circ \Dot{u}_2+\Bar{u}_3\circ \Dot{u}_3\right)$, whose integrand is  a full derivative as well, since $d \left(\sum\limits_{i=1}^3 \Bar{u}_i du_i\right)=0$. 
    To deal with the remaining terms, one can make the substitution $pK_{ij}\rightarrow K_{ij}, \,i\neq j $ and integrate over~$K_{ij}$. As a result, one arrives at
    \begin{gather}
        \mathcal{S}=
        \int_0^{\tau} dt\, 
        \left(
            \frac{\left|j_{12}\right|^2}{\alpha}+
            \frac{\left|j_{23}\right|^2}{\beta}+
            \frac{\left|j_{13}\right|^2}{\gamma}
        \right),\label{FinalAction}\\
        \textrm{where}\quad j_{ik}:= \Bar{u}_k \circ \dot{u}_i\,.
    \end{gather}
    A reservation is in order, though. When integrating over $K_{ij}$, we have assumed that there are no restrictions on these variables. However, this is not generally true, since the matrix $K$ is positive-definite. Nevertheless, the conditions on the integration domain are lifted in the limit  $p\rightarrow\infty$.
    
    The action (\ref{FinalAction}) describes free particle motion on $\mathcal{F}_{1,2,3}$ with the metric
    \begin{gather}\label{F111metr}
        d s^2=
        \frac{\left|\Bar{u}_1\circ d u_2\right|^2}{\alpha}+
        \frac{\left|\Bar{u}_2\circ d u_3\right|^2}{\beta}+
        \frac{\left|\Bar{u}_1\circ d u_3\right|^2}{\gamma}.
    \end{gather}

    Thus, we have shown that the  spin chain~(\ref{spinChainHamiltonian}) leads to the free particle problem on the flag manifold as $p\rightarrow \infty$. Remarkably, our calculation provides an explicit relation between the metric coefficients in~(\ref{F111metr}) and the Hamiltonian coefficients in~(\ref{spinChainHamiltonian}). In other words, the following statement holds:
    \begin{prop}\label{spinFlagConnection}
        Let $\mathcal{H}$ be the Hamiltonian~(\ref{spinChainHamiltonian}) acting in the space~$V(p)$. Then
    \bea \label{partitionFuncF123}
        \underset{p\to\infty}{\mathrm{lim}}\,\mathrm{Tr}_{\mathrm{V}(p)}(e^{- \tau \mathcal{H}})=\mathrm{Tr}_{L^2(\mathcal{F}_{1,2,3})}(e^{- \tau\mathcal{H}_{\mathrm{particle}}})\,,
    \eea
    where $\tau>0$ is a parameter, $\mathcal{H}_{\mathrm{particle}}$ is the quantum mechanical Hamiltonian corresponding to classical action~(\ref{FinalAction}). In other words,  
    \bea
        \mathcal{H}_{\mathrm{particle}}=-\triangle\,,
    \eea
    where $\triangle$ is the Laplace-Beltrami operator for the metric~(\ref{F111metr}) on the flag manifold \cite[Section 3]{Wagner_2024}. 
    \end{prop}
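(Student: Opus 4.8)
The plan is to prove the partition-function identity directly from the coherent-state path integral, reading off $\mathcal{H}_{\mathrm{particle}}$ along the way, and to treat the formal path-integral manipulation as the skeleton while using a representation-theoretic spectral comparison as the device that controls the $p\to\infty$ limit. First I would establish that $\mathrm{Tr}_{V(p)}(e^{-\tau\mathcal H})$ equals the coherent-state path integral with action~(\ref{InitialAction}). Inserting resolutions of the identity built from the $\mathrm{Sym}(p)$ coherent states labelled by $(z_1,z_2,z_3)\in(\CP^2)^{\times3}$ produces the Berry-phase (Wess--Zumino) term at order $p$, while the coherent-state symbol of each bilinear $S_i^aS_j^a$ with $i\neq j$ evaluates, via the $\mathfrak{su}(3)$ completeness relation for the generators, to $p^2\,|\bar z_i\circ z_j|^2/(|z_i|^2|z_j|^2)$ up to a constant absorbed into the additive constant already present in~(\ref{spinChainHamiltonian}). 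This reproduces~(\ref{InitialAction}).

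Next I would carry out the reduction sketched in the text, but reinterpret it as a \emph{phase-space} (Hamiltonian) path integral. On the dense open set $\mathcal X$ of Proposition~\ref{subbundle} the polar decomposition $Z=UH$ applies; in~(\ref{remadeAction}) the term $\mathrm{Tr}(H\dot H)=\tfrac12\frac{d}{dt}\mathrm{Tr}K$ and the diagonal part of $\mathrm{Tr}(KU^\dagger\dot U)$ are total derivatives, the latter because $d(\sum_i\bar u_i\,du_i)=0$ on $\mathcal F_{1,2,3}$, which is exactly the Lagrangian property of Proposition~\ref{lagrtheorem}. What remains is $ip\sum_{j\neq k}K_{jk}\,\bar u_k\circ\dot u_j$ minus the quadratic potential. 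After the rescaling $pK_{ij}\to K_{ij}$ the off-diagonal entries $K_{ij}$ become the momenta conjugate to the flag coordinates --- precisely the cotangent-fibre coordinates of Proposition~\ref{subbundle} --- and the integrand assumes the canonical phase-space form with Hamiltonian $H=\alpha|K_{12}|^2+\beta|K_{23}|^2+\gamma|K_{13}|^2$. Integrating out these momenta is a Gaussian integral; it returns the configuration-space action~(\ref{FinalAction}) together with the factor $\sqrt{\det g}$, i.e. the Riemannian volume of the metric~(\ref{F111metr}). This is the mechanism that forces the quantized Hamiltonian to be exactly $-\triangle$: the inverse metric $g^{\mu\nu}$ carries the coefficients $\alpha,\beta,\gamma$, its inverse $g_{\mu\nu}$ carries $1/\alpha,1/\beta,1/\gamma$ as in~(\ref{F111metr}), and the Gaussian momentum measure supplies precisely the $\mathrm{dvol}_g$ needed for the heat-kernel representation of $\mathrm{Tr}_{L^2(\mathcal F_{1,2,3})}(e^{\tau\triangle})$, leaving no room for a potential term from operator ordering.

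I expect the main obstacle to lie in controlling the $p\to\infty$ limit, in two linked respects. The off-diagonal $K_{ij}$ are not genuinely free: positive-definiteness of $K=H^2$ with unit diagonal confines them to a bounded region, so before the rescaling the momentum integral runs over a compact domain rather than all of $\CC$. The decisive point is that the substitution $pK_{ij}\to K_{ij}$ dilates this admissible region by a factor of $p$, so that it fills out the entire cotangent fibre as $p\to\infty$ and the Gaussian integral becomes exact in the limit; showing this uniformly in $t$, and checking that the discretization artefacts together with the subleading corrections to the coherent-state symbol of $\mathcal H$ beyond order $p^2$ do not contribute, is the technical heart of the matter. To make the limit rigorous rather than formal, I would run a parallel representation-theoretic argument: decompose $V(p)=\mathrm{Sym}(p)^{\otimes3}$ into irreducibles, use the stable embedding $\mathrm{Irr}(V(p))\hookrightarrow\mathrm{Irr}(V(p+1))$ noted in Section~\ref{definitionOfTheSU(3)SpinChain} whose direct limit is the Peter--Weyl decomposition of $L^2(\mathcal F_{1,2,3})$, and verify that the eigenvalue of $\mathcal H$ on each fixed irreducible component converges as $p\to\infty$ to the corresponding eigenvalue of $-\triangle$. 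Matching these eigenvalues term by term under the weight $e^{-\tau(\cdot)}$ then yields the stated equality of traces and simultaneously pins down $\mathcal H_{\mathrm{particle}}=-\triangle$.
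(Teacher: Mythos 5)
Your proposal is, in its core, the same argument the paper gives: the paper's proof of Proposition~\ref{spinFlagConnection} consists precisely of the manipulations of Section~\ref{MetricsFlag} --- the coherent-state path integral with action~(\ref{InitialAction}), restriction to the dense subset $\mathcal{X}$, the polar decomposition $Z=UH$ with the total-derivative terms eliminated via the Lagrangian property (Proposition~\ref{lagrtheorem}), and the rescaling $pK_{ij}\to K_{ij}$ followed by Gaussian integration over the off-diagonal $K_{ij}$, yielding~(\ref{FinalAction}); the positive-definiteness restriction on $K$ is handled by the same remark you make, namely that the admissible domain expands with $p$ and the constraint is lifted as $p\to\infty$. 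Your reading of the off-diagonal $K_{ij}$ as cotangent-fibre momenta is implicitly the paper's Proposition~\ref{subbundle}, and your observation that the Gaussian integration supplies the $\sqrt{\det g}$ measure (so that the quantized Hamiltonian is exactly $-\triangle$, with no ordering potential) is a refinement of a point the paper delegates to the reference cited in the statement of the proposition.

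One caveat on the part where you diverge: the supplementary ``parallel representation-theoretic argument'' is circular as stated. Verifying that the eigenvalue of $\mathcal{H}$ on a fixed irreducible component converges to ``the corresponding eigenvalue of $-\triangle$'' presupposes independent knowledge of the spectrum of the Laplace--Beltrami operator for the general metric~(\ref{F111metr}); for $\alpha\neq\beta$ this spectrum is not available a priori --- in the paper's logic it is the \emph{output} of Proposition~\ref{spinFlagConnection} combined with the spectral computation of Section~\ref{eigenvaluesSpinChainSU(3)}, not an input. To de-circularize the check you would need an independent computation of the Laplacian spectrum, e.g. via harmonic analysis on $\mathrm{U}(3)/\mathrm{U}(1)^3$ together with the horizontal/vertical splitting of the Laplacian (the Bourguignon-type decomposition the paper invokes later for nested metrics); your sketch does not supply this, so as written the path-integral derivation remains your only actual proof, at the same formal level of rigor as the paper's own.
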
 

    In (\ref{partitionFuncF123}) we have chosen the additive constant in the Hamiltonian~$\mathcal{H}$ such that the ground state of the spin chain corresponds to the zero eigenvalue (see Section~\ref{flagMetr}). As a result, it follows from the equality of partition functions that the spectrum of $\mathcal{H}$ coincides with that of $-\triangle$ as $p\rightarrow \infty$ (see, for example, \cite[Chapter 3]{ZinnJustin:2004PathInt}).

\subsection{Case of $\gamma=\beta$.}\label{gammabetametrsec}

 In the following we restrict ourselves to the special case where $\gamma=\beta$. As the set $\{u_i\}_{i=1}^3$ forms an orthonormal basis in $\mathbb{C}^3$, it follows that there is a `partition of unity'
    \begin{gather}\label{partunit}
        u_1 \otimes \Bar{u}_1 + u_2 \otimes \Bar{u}_2 + u_3 \otimes \Bar{u}_3 = \Id\,.
    \end{gather}
    Thus, in this case the metric~(\ref{F111metr}) can be rewritten in the form
  \begin{gather}\label{alphabetametr}
        d s^2=
        \frac{1}{\alpha}\left|\Bar{u}_1\circ d u_2\right|^2+
        \frac{1}{\beta}\,
        du_3 \left(\Id-\Bar{u}_3\otimes u_3\right) d\Bar{u}_3\,.
    \end{gather}
    In particular, one can clearly see the structure of the forgetful map\footnote{See Section~\ref{forgetfulsec} below for more information on forgetful maps.} (bundle) 
\begin{gather}\label{F123forgetful}
\mathcal{F}_{1,2,3}\mapsto \CP^2\,,   
\end{gather}
defined by the map $(u_1, u_2, u_3)\mapsto u_3$. The terms in~(\ref{alphabetametr}) containing $u_3$ correspond to the Fubini-Study metric on $\CP^2$, whereas the first term in~(\ref{alphabetametr}) corresponds to the metric in the fiber $\CP^1$ parameterized by the vectors $(u_1, u_2)$ (orthogonal to $u_3$), see~(\ref{S2metr}). Therefore the size of the base and the size of the fiber are determined by $1\over \beta$ and $1\over \alpha$, respectively. Somewhat more formally, this means that the map~(\ref{F123forgetful}) is a Riemannian submersion (see, for example,~\cite{Cheeger}), when $\mathcal{F}_{1,2,3}$ is equipped with the metric~(\ref{alphabetametr})  (as opposed to the case of the general metric~(\ref{F111metr})).

Let us mention some special cases of the metrics~(\ref{alphabetametr}). First, there is the normal metric \cite{Wang1985}, which corresponds to the case $\alpha = \beta$. The case of $\beta = 2\alpha$ is another interesting possibility: this is the Kähler-Einstein metric\footnote{Apparently, Kähler-Einstein metric does not belong to the class of metrics described below with `nested' structure on $\mathcal{F}_{1,2,\ldots,N}, N\geq 4$ (see Section \ref{flagMetr}).} \cite{AlePer86, Achmed_Zade_2020}. Finally, in the limit $\frac{1}{\alpha} \to 0$ one arrives at the Fubini-Study metric on $\CP^2$ (which is, of course, K\"ahler and might be thought of as a degenerate  metric on $\mathcal{F}_{1,2,3}$).

 \subsection{Spectrum of the Hamiltonian for $\gamma = \beta$.}\label{eigenvaluesSpinChainSU(3)}
    Let us start by rewriting $\mathcal{H}$ in the more convenient form
    \begin{gather} \label{HamSU3}
        \mathcal{H}
        =
        \left(
            \alpha - \beta
        \right) S^a_1 S^a_2 + \beta 
        \left( 
            S^a_1 S^a_2 + S^a_2 S^a_3 + S^a_1 S^a_3 
        \right)+\mathrm{const}.  
    \end{gather}
    
    Here $S^a_i$ stand for the $\mathfrak{su}(3)$ generators, so that  $\left[S^a_i, S^b_i\right] = f^{abc}S^c_i$, where $f^{abc}$ are the structure constants; $a,b,c=1,\ldots,8$. One can easily show that
    \begin{gather}
        \left[
            S^a_1 S^a_2,
            S^b_1 S^b_2 + S^b_2 S^b_3 + S^b_1 S^b_3
        \right]
        =0.
    \end{gather}
    Note that 
    $
        S^a=
        S^a_1 + S^a_2 + S^a_3
    $
    are the $\mathfrak{su}(3)$ generators in the tensor product representation $\mathrm{V}(p)$ defined in~(\ref{Vpdef}), and 
    $
        S^a=
        S^a_1 + S^a_2
    $
    are the $\mathfrak{su}(3)$ generators in the representation $\mathrm{Sym}(p)^{\otimes 2}$. Thus, up to a factor of $\frac{1}{2}$ and terms proportional to the identity operator, $S^a_1 S^a_2 $ and $S^a_1 S^a_2 + S^a_2 S^a_3 + S^a_1 S^a_3$ are quadratic Casimir operators in $\mathrm{Sym}(p)^{\otimes 2}$ and in $\mathrm{V}(p)$, respectively.
    
    Let us describe the algorithm for finding the spectrum of $\mathcal{H}$. Initially, one can multiply the first two factors of  $\mathrm{Sym}(p)$ in $\mathrm{V}(p)$ and choose an irreducible representation, which we call $A$. This way we determine the eigenvalue of $S^a_1 S^a_2 $:  denote it by $\lambda_1$. Next, one can multiply $A$ by $\mathrm{Sym}(p)$ and choose an irreducible representation in the decomposition:  call it $B$. This determines the eigenvalue of $S^a_1S^a_2 + S^a_2 S^a_3 + S^a_1 S^a_3$: denote it by  $\lambda_2$. The corresponding eigenvalue of $\mathcal{H}$ is then $(\alpha - \beta) \lambda_1 + \beta \lambda_2$, its degeneracy being given by the dimension of $B$. To find the entire spectrum $\sigma(\mathcal{H},p)$, one should consider all possible pairs $(A, B)$.
    
    Let us illustrate this algorithm on the example of $p = 1$. In this case  
    \[
        \mathrm{V}(1) =
        \begin{ytableau}
               *(lime) ~ 
        \end{ytableau}
        ~\otimes~
        \begin{ytableau}
               *(lime) ~ 
        \end{ytableau}
        ~\otimes~
        \begin{ytableau}
               ~ 
        \end{ytableau}
        =
        \begin{ytableau}
              *(lime) ~ & *(lime) ~ & ~
        \end{ytableau}
        \oplus
        \begin{ytableau}
               *(lime)~ &*(lime) ~ \\
               ~ &\none 
        \end{ytableau}
        \oplus
        \begin{ytableau}
               *(lime)~ \\
              *(lime)~ \\
               ~
        \end{ytableau}
        \oplus
        \begin{ytableau}
               *(lime)~ & ~ \\
               *(lime)~ &\none 
        \end{ytableau}.
    \]
    It follows that every irreducible representation in $\mathrm{V}(1)$ corresponds to a certain irreducible representation in $\mathrm{Sym}(1)^{\otimes 2}$ (coloured green). The same holds for higher~$p$.
    
    Consider a Young diagram $Y(p_1, p_2, p_3)$ with row lengths $p_1\geq p_2\geq p_3\geq 0$. Introducing $s_i = p_i - \frac{1}{3} \sum_{j=1}^3 p_j$, one can calculate the value of the quadratic Casimir operator for the corresponding irreducible representation using the formula~\cite{PerelomovPopov} 
    \begin{gather}
        C_2(p_1,p_2,p_3)=\sum_{j=1}^3 s_j(s_j - 2j).
    \end{gather}

    We are thus in a position to write out an explicit formula for the eigenvalues of $\mathcal{H}$: 
    
    \begin{prop}
        The eigenvalues of $\mathcal{H}$, for a fixed value of $p$, take the form
        \begin{gather}\label{F123spectrum}
            \Lambda_k = \frac{\beta}{2}C_2(p^B_1,p^B_2,p^B_3)+
            \frac{\alpha - \beta}{2}
            \left[
                C_2\left(p_1^A, p^A_2, 0\right)-C_2(p,p,0)
            \right], 
        \end{gather}
        where $p_1^A+p_2^A=2p,\, p_2^A\leq p\leq p_1^A$ and $p_1^B+p_2^B+p_3^B=3p,\, p_2^B\leq p_1^A\leq p_1^B,\, p_3^B\leq p_2^A \leq p_2^B$.
    \end{prop}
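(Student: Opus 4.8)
The plan is to diagonalize $\mathcal{H}$ simultaneously in the two commuting operators appearing in~(\ref{HamSU3}), namely $S^a_1S^a_2$ and $S^a_1S^a_2+S^a_2S^a_3+S^a_1S^a_3$, by exploiting the iterated structure of the tensor product~(\ref{Vpdef}). Concretely, I would first decompose the product of the first two factors, $\mathrm{Sym}(p)\otimes\mathrm{Sym}(p)=\bigoplus_A A$, and then tensor each summand with the third factor, $A\otimes\mathrm{Sym}(p)=\bigoplus_B B$. A vector lying in a fixed $B\subset A\otimes\mathrm{Sym}(p)$ is then a simultaneous eigenvector: it carries a definite representation $A$ of $S^a_1+S^a_2$ and a definite representation $B$ of $S^a=S^a_1+S^a_2+S^a_3$. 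The pair $(A,B)$ thus labels the joint spectrum, and since both operators in~(\ref{HamSU3}) commute this labelling is complete, with the degeneracy of each eigenvalue equal to $\dim B$, as asserted in the algorithm preceding the statement.

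The two eigenvalues then follow from polarization identities. On the $A$-isotypic subspace one writes $S^a_1S^a_2=\tfrac12\big[(S^a_1+S^a_2)^2-S^a_1S^a_1-S^a_2S^a_2\big]$, where $(S^a_1+S^a_2)^2$ acts as $C_2(p^A_1,p^A_2,0)$ and each single-site Casimir $S^a_iS^a_i$ acts as $C_2(p,0,0)$. Using the conjugation symmetry of $\mathfrak{su}(3)$ representations — the diagrams $Y(p,0,0)$ and $Y(p,p,0)$ are conjugate and hence share the Casimir value, so that $C_2(p,0,0)=C_2(p,p,0)$, as is readily checked from the stated formula — this gives $\lambda_1=\tfrac12[C_2(p^A_1,p^A_2,0)-2C_2(p,p,0)]$. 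Likewise $S^a_1S^a_2+S^a_2S^a_3+S^a_1S^a_3=\tfrac12\big[(S^a)^2-\sum_i S^a_iS^a_i\big]$ yields $\lambda_2=\tfrac12[C_2(p^B_1,p^B_2,p^B_3)-3C_2(p,p,0)]$ on the $B$-subspace. Inserting these into $\Lambda=(\alpha-\beta)\lambda_1+\beta\lambda_2+\mathrm{const}$ and fixing the additive constant by demanding that the unique $\SU(3)$-singlet — corresponding to $A=Y(p,p,0)$, $B=Y(p,p,p)$, for which $C_2(B)=0$ — have zero energy reproduces exactly~(\ref{F123spectrum}); this is the normalization already announced after Proposition~\ref{spinFlagConnection}.

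It remains to determine which pairs $(A,B)$ actually occur, i.e. to derive the inequalities in the statement. For the first step one has $\mathrm{Sym}(p)\otimes\mathrm{Sym}(p)=\bigoplus_{k=0}^{p}Y(2p-k,k,0)$, giving $p^A_1+p^A_2=2p$ together with $p^A_2\le p\le p^A_1$. For the second step I would invoke Pieri's rule: tensoring the two-row diagram $Y(p^A_1,p^A_2,0)$ with the single-row diagram $\mathrm{Sym}(p)$ amounts to adding a horizontal strip of $p$ boxes, so each resulting $Y(p^B_1,p^B_2,p^B_3)$ occurs with multiplicity one and the interlacing condition reads $p^B_1\ge p^A_1\ge p^B_2\ge p^A_2\ge p^B_3\ge 0$, equivalently $p^B_2\le p^A_1\le p^B_1$ and $p^B_3\le p^A_2\le p^B_2$, with $p^B_1+p^B_2+p^B_3=3p$. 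Since a horizontal strip has at most one box per column, the strip can create at most a third row, consistent with $\mathfrak{su}(3)$. These are precisely the stated ranges.

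The main obstacle I anticipate is the branching/selection-rule bookkeeping of the last step rather than the eigenvalue computation: one must argue carefully that the horizontal-strip (Pieri) inequalities exhaust all irreducible constituents with the correct unit multiplicities, that no diagram with more than three rows arises, and that diagrams differing by a full column of three boxes are not double-counted. The Casimir evaluation and the fixing of the constant, by contrast, are routine once the conjugation identity $C_2(p,0,0)=C_2(p,p,0)$ and the singlet normalization are in hand.
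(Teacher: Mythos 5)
Your proposal is correct and follows essentially the same route as the paper's proof: the iterated decomposition $\mathrm{Sym}(p)\otimes\mathrm{Sym}(p)=\bigoplus_A A$ followed by $A\otimes\mathrm{Sym}(p)=\bigoplus_B B$, simultaneous diagonalization of the two commuting Casimir combinations, and the same interlacing (Pieri) conditions on $(p^A_i,p^B_j)$ with unit multiplicities and degeneracy $\dim B$. The additional details you supply — the polarization identities, the conjugation identity $C_2(p,0,0)=C_2(p,p,0)$, the fixing of the additive constant by the singlet normalization, and the bookkeeping that fixed box count prevents double-counting of diagrams differing by a column of three — are precisely the steps the paper leaves implicit or handles in the text surrounding the proposition, and they all check out.
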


    \begin{proof}
    According to the algorithm described above, we need to identify certain irreducible representations $A$ in $\mathrm{Sym}(p)^{\otimes 2}$ and $B$ in $V(p)$ such that $B $ is contained in the decomposition of $ A \otimes \mathrm{Sym}(p)$ into irreducibles. These representations correspond to certain Young diagrams $Y_A(p_1^A,p_2^A,p_3^A) \in \mathrm{Irr}\big(\mathrm{Sym}(p)^{\otimes 2}\big)$ and $Y_B(p_1^B,p_2^B,p_3^B) \in \mathrm{Irr}\big(V(p)\big)$, in terms of which the spectrum can be easily described. 

    The rules for multiplying by $\mathrm{Sym}(p)$ are quite simple. For instance,
    \begin{gather}\label{SymPsquared}
        \mathrm{Sym}(p)\otimes \mathrm{Sym}(p)=\underset{\mathrm{Irr}\big(\mathrm{Sym}(p)^{\otimes 2}\big)}{\bigoplus}\quad \begin{ytableau}
           ~ &  ~& ~& ~ & ~ \\
           ~ & ~ & ~
        \end{ytableau}\,,
    \end{gather}
    where the sum is taken over diagrams with row lengths $(p_1^A, p_2^A)$ satisfying the conditions
    \begin{gather}\label{Adiagcond}
        p_1^A+p_2^A=2p\,, \quad p_2^A\leq p\leq p_1^A.
    \end{gather}
    Take any diagram $Y_{A}$ on the r.h.s. of the equation~(\ref{SymPsquared}). Multiplying it by $\mathrm{Sym}(p)$, one gets diagrams $Y_{B}$ with three rows of lengths $(p_{1},p_{2},p_{3})$ subject to the conditions
    \begin{gather}\label{Bdiagcond}
    p_1^B+p_2^B+p_3^B=3p\,,\quad  p_2^B\leq p_1^A\leq p_1^B \,,\quad p_3^B\leq p_2^A \leq p_2^B.
    \end{gather}
    For each pair of diagrams $Y_A$ and $Y_B$, satisfying the conditions (\ref{Adiagcond})-(\ref{Bdiagcond}), the eigenvalue of the Hamiltonian can be calculated using the formula
    \begin{gather}
        \Lambda_k = \frac{\beta}{2}C_2(p^B_1,p^B_2,p^B_3)+
        \frac{\alpha - \beta}{2}
        \left[
            C_2\left(p_1^A, p^A_2, 0\right)-C_2(p,p,0)
        \right].
    \end{gather}
    As mentioned above, the degeneracy of $\Lambda_k$ is equal to the dimension of $B$. To find the entire spectrum, one needs to search for all possible values of the row lengths of $Y_A$ and~$Y_B$.
    \end{proof}

    Note that the additive constant in $\mathcal{H}$ has been chosen as $-\frac{\alpha-\beta}{2}C_2(p, p, 0)$. This is necessary to ensure that the eigenvalues have a limit as $p \rightarrow \infty$. Indeed, upon changing $p\mapsto p+1$ the Young diagrams $Y_A(p_1^A,p_2^A,0)\in \mathrm{Irr}\big(\mathrm{Sym}(p)^{\otimes 2}\big)$ and $Y_B(p_1^B,p_2^B, p_3^B)\in \mathrm{Irr}\big(V(p)\big)$ are mapped to the new diagrams $Y'_A(p_1^A+1,p_2^A+1, 0)\in \mathrm{Irr}\big(\mathrm{Sym}(p+1)^{\otimes 2}\big)$ and $Y'_B(p_1^B+1,p_2^B+1,p_3^B+1)\in \mathrm{Irr}\big(V(p+1)\big)$, respectively (see the end of Section~\ref{definitionOfTheSU(3)SpinChain}). It turns out that the values $\Lambda _k$ for $(Y_A, Y_B)$ are the same as those for $(Y'_A, Y'_B)$ (see Section~\ref{stabspectrum}), so that  $\sigma(\mathcal{H},p)\subset \sigma(\mathcal{H},p+1)$. As a result, in the limit $p \rightarrow \infty $ one finds the entire spectrum of the corresponding Laplace-Beltrami operator. Let us also mention that, with this particular choice of the additive constant, the energy of the ground state of the Hamiltonian is zero. For more on this see Section~\ref{SpinChain}.

\subsubsection{Reduction to $\CP^2$.}\label{reductionToCP2} 
The limit $\alpha \to \infty$ is also of interest. In this case, as discussed in Section~\ref{gammabetametrsec}, the metric on $\mathcal{F}_{1, 2, 3}$ degenerates. As the fiber size tends to zero, the metric converges to a metric on the base, that is, on $\mathbb{CP}^2$. It follows from the formula~(\ref{F123spectrum}) for the spectrum and from Appendix~\ref{groundState} that the eigenvalues, which are bounded as $\alpha\to \infty$, correspond to the states for which $C_2\left(p_1^A, p^A_2, 0\right)=C_2(p,p,0)$ holds. Thus, out of the set of Young diagrams $A$ only rectangular diagrams of the following form should remain in the limit:
\begin{gather}\label{2pdiag}
        \underbrace{
            \begin{ytableau}
           ~ & ~ & \dots & ~ & ~ \\
           ~ & ~ & \dots & ~ & ~
        \end{ytableau}
    }_{p}.
\end{gather}
The corresponding representations are symmetric powers of the representation conjugate  to the fundamental, i.e. $\widebar{\mathrm{Sym}(p)}$. In the limit $p\to \infty$ we obtain the correct identification of Hilbert spaces:
    \begin{gather}\label{CP2Hilb}
        L^2\left(\CP^2\right)=\,\underset{p\to\infty}{\mathrm{lim}}\, \mathrm{Sym}(p)\otimes \widebar{\mathrm{Sym}(p)}\,.
    \end{gather}
    In this case the analogue of Proposition~\ref{lagrtheorem} is that there  exists a Lagrangian embedding $\CP^2 \to \CP^2\times\CP^2$ given by the map $z \to (z,\bar{z})$. Complex conjugation in the second factor intuitively corresponds to taking the  conjugate representation in~(\ref{CP2Hilb}).

    In fact, the system on $\mathcal{F}_{1,2,3}$ becomes equivalent to a system on the partial flag manifold $\mathcal{F}_{2,3} \cong \mathbb{CP}^2$ as $\alpha \rightarrow \infty$. Indeed, the rectangular Young diagram~(\ref{2pdiag}) leads to coherent states parameterized by points on the Grassmannian  $\mathcal{F}_{2,3}$. This means that, in this limit, the pair of vectors $(u_1, u_2)$ describing $\mathcal{F}_{1,2,3}$ is only defined up to a  $\mathrm{U}(2)$ gauge transformation. As a result, the gauge group is enlarged from $\mathrm{U}(1)\times \mathrm{U}(1)$ to~$\mathrm{U}(2)$.
    
    \subsection{Geodesics in the case of $\gamma=\beta$.}
    Previously, we derived the spectrum of the quantum Hamiltonian~(\ref{spinChainHamiltonian}) for the case $\beta=\gamma$, thereby solving the quantum mechanical problem on $\mathcal{F}_{1,2,3}$. The classical problem corresponding to the action~(\ref{FinalAction}) is equivalent to the geodesic problem on this manifold. 
    Let us state the main proposition:
\begin{prop}
       The general solution to the geodesic equations on $\mathcal{F}_{1,2,3}$ with the metric~(\ref{alphabetametr}) has the form
        \begin{gather}\label{flag111sol}
        \begin{pmatrix}
            u_1(t) \\
            u_2(t) \\
            u_3(t)
        \end{pmatrix}
        =
        G(t)
        \exp
        \left[
        \begin{pmatrix}
            0 & \frac{1}{\alpha}A^0_{21} & \frac{1}{\beta}A^0_{31}\\
            -\frac{1}{\alpha}\Bar{A}^0_{21} & 0 & \frac{1}{\beta}A^0_{32}\\
            -\frac{1}{\beta}\Bar{A}^0_{31} & -\frac{1}{\beta}\Bar{A}^0_{32} & 0
        \end{pmatrix}\beta t    
        \right]
        \begin{pmatrix}
            u_1(0) \\
            u_2(0) \\
            u_3(0)
        \end{pmatrix}\,,
    \end{gather}
    where
    \begin{gather}\label{Gmatr}
        G(t):=
        \begin{pmatrix}
            \exp \left[
            \begin{pmatrix}
                0 & \frac{1}{\alpha}A^0_{21}\\
                -\frac{1}{\alpha}\Bar{A}^0_{21} & 0
            \end{pmatrix}(\alpha - \beta)t
            \right] & \\
            & 1
        \end{pmatrix}
    \end{gather}
    and $A_{21}^0, A_{31}^0, A_{32}^0$ are the initial data.
    
    \end{prop}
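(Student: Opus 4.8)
The plan is to recast the problem in terms of the anti\-hermitian Maurer--Cartan matrix $A:=U^\dagger\dot U$, where $U=(u_1,u_2,u_3)$, so that $A_{ij}=\bar u_i\circ\dot u_j$ and $A_{ji}=-\bar A_{ij}$. Working in the gauge $\bar u_i\circ\dot u_i=0$ (i.e.\ $A_{ii}=0$), which is the direct analogue of the $\mathrm{SU}(2)$ gauge fixing used below (\ref{S2EoM}) and is admissible thanks to the residual $\mathrm U(1)^3$ freedom, the action (\ref{FinalAction}) becomes the quadratic form $\mathcal L=\tfrac1\alpha|A_{12}|^2+\tfrac1\beta|A_{13}|^2+\tfrac1\beta|A_{23}|^2$. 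I would then derive the equations of motion following the Euler--Arnold approach: inserting the standard variation $\delta A=\dot\epsilon+[A,\epsilon]$ (for anti\-hermitian $\epsilon$) and integrating by parts yields, for the off-diagonal components,
\[
\dot Q=[Q,A],\qquad Q_{ij}:=c_{ij}A_{ij},
\]
with $c_{12}=1/\alpha$ and $c_{13}=c_{23}=1/\beta$; the diagonal components reproduce the $\mathrm U(1)^3$ gauge sector and are absorbed by the gauge choice.

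The step specific to $\gamma=\beta$ is to split $Q=\tfrac1\beta A+(\tfrac1\alpha-\tfrac1\beta)A^{(12)}$, where $A^{(12)}$ is the projection of $A$ onto the $\mathfrak u(2)$ corner (entries $(1,2)$ and $(2,1)$). Because $[A,A]=0$, the right-hand side collapses to $(\tfrac1\alpha-\tfrac1\beta)[A^{(12)},A]$, and a short computation shows that $[A^{(12)},A]$ has vanishing diagonal and vanishing $(1,2)$-corner: it lives entirely in the ``index-$3$'' sector $A':=A-A^{(12)}$. Projecting the equation of motion onto the two sectors therefore decouples it. The corner projection gives $\dot A^{(12)}=0$, so $A^{(12)}$ is frozen at its initial value (encoded by $A^0_{21}$), while the complementary projection gives the linear, Lax-type equation $\dot A'=\mathrm{const}\cdot[A^{(12)},A']$. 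Since $A^{(12)}$ is now constant, this integrates to the conjugation $A'(t)=G(t)\,A'(0)\,G(t)^{-1}$, with $G(t)=\exp[\,\cdots(\alpha-\beta)t\,]$ precisely the $2\times2$-corner one-parameter subgroup of (\ref{Gmatr}).

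Finally I would reconstruct the frame from $\dot U=UA$ and match it to the stated closed form. Concretely, for $R(t)=G(t)\exp(M\beta t)$ one computes $\dot R R^{-1}=(\alpha-\beta)M^{(12)}+\beta\,GMG^{-1}$; the crucial algebraic fact is that $G$ commutes with the corner $M^{(12)}$ (they are generated by the same matrix), so $GM^{(12)}G^{-1}=M^{(12)}$ and the expression collapses to $A^{(12)}+G\,A'(0)\,G^{-1}$, i.e.\ the solution $A(t)$ found above (modulo the transpose relating the frame label to the matrix index). The additive normalizations in $M$ and $G$ are arranged so that the $A^0_{ij}$ are exactly the initial velocities and $R(0)=\Id$, which fixes the initial data.

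I expect the main obstacle to be the middle stage: establishing the Euler--Arnold form with a clean treatment of the $\mathrm U(1)^3$ gauge on the diagonal, and then recognizing the block structure of $[A^{(12)},A]$ that makes the $\gamma=\beta$ system split into a frozen corner and a single $\mathrm{ad}$-orbit. Conceptually this is the statement that, for $\gamma=\beta$, the geodesic is a product of two one-parameter subgroups --- the ``base'' motion $\exp(M\beta t)$ dressed by the ``fiber'' rotation $G(t)$ --- reflecting the Riemannian-submersion structure discussed around (\ref{alphabetametr}); for the normal metric $\alpha=\beta$ the factor $G$ degenerates to $\Id$ and one recovers a single one-parameter subgroup.
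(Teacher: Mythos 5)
Your proposal is correct, and its key algebraic claims check out: with $Q=\tfrac1\beta A+(\tfrac1\alpha-\tfrac1\beta)A^{(12)}$ one indeed has $\dot Q=[Q,A]=(\tfrac1\alpha-\tfrac1\beta)[A^{(12)},A]$, and $[A^{(12)},A]$ has vanishing diagonal blocks, so the projection onto the $\mathfrak{u}(2)$ corner gives $\dot A^{(12)}=0$ while the complementary projection gives the linear Lax equation for $A'$; this reproduces exactly the paper's equations $\tfrac{d}{dt}A_{21}=0$ and~(\ref{EvolutionEquation}) (up to the transposition you flag, since the paper stacks the $u_i$ as rows in~(\ref{ueqs})), and your final verification that $R(t)=G(t)\exp(\beta Mt)$ satisfies $\dot R R^{-1}=\mathcal{A}(t)$, using $[G,M^{(12)}]=0$, closes the argument. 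The route differs from the paper's $\SU(3)$ proof mainly in organization. The paper derives the component equations by varying the Lagrangian~(\ref{alphabetaLagr}) with explicit Lagrange multipliers (Appendix~\ref{su3flagEOM}), solves the $2\times2$ system for $(A_{31},A_{32})$ as a one-parameter exponential acting from the left, and then integrates the frame equation~(\ref{ueqs}) via the time-dependent substitution $x=G(t)y$, which renders the coefficient matrix constant; you instead start from the Euler--Arnold form $\dot Q=[Q,A]$, solve the corner equation by conjugation $A'(t)=G A'(0)G^{-1}$ (equivalent to the paper's left multiplication, since $G=\mathrm{diag}(g,1)$ conjugates the off-corner block by acting as $g$ on the column $(A_{31},A_{32})^t$), and verify the product ansatz directly. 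It is worth noting that your formulation is essentially the paper's own proof of the general-$N$ statement, Proposition~\ref{geodsolproof}: equation~(\ref{LOmegaEq}) is your $\dot Q=[Q,A]$ up to transposition, and your block splitting plus Lax-type integration is precisely the $N=3$ instance of the induction step there. So what your approach buys is immediate generalizability to all metrics with `nested' structure on $\mathcal{F}_{1,\ldots,N}$; what the paper's $\SU(3)$ argument buys is elementary self-containedness, at the cost of the somewhat ad hoc substitution $x=G(t)y$.
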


    \begin{proof}
        
    Our model's action takes the form~(\ref{FinalAction}) with $\beta = \gamma$. 
    The Lagrangian can be rewritten by analogy with the metric~(\ref{alphabetametr}) as
    \begin{gather}\label{alphabetaLagr}
        \mathcal{L}=
        \frac{1}{\alpha} \dot{\Bar{u}}_2\circ (u_1\otimes \Bar{u}_1)\circ \dot{u}_2 
        + 
        \frac{1}{\beta} \dot{\Bar{u}}_3\circ (\Id - u_3\otimes \Bar{u}_3)\circ \dot{u}_3
        +
        \lambda^{ij}(\Bar{u}_i\circ u_j - \delta_{ij})\,,
    \end{gather}
    where $\lambda^{ij}$ are Lagrange multipliers imposing orthonormality of the vectors~$u_k$.

The flag manifold $\mathcal{F}_{1,2,3} $ is the quotient space $ \frac{\mathrm{U}(3)}{\mathrm{U}(1) \times \mathrm{U}(1) \times \mathrm{U}(1)}$, and, as a result, the model~(\ref{alphabetaLagr}) is invariant under the action of the gauge group $\mathrm{U}(1)^3$. Gauge transformations take the form $u_i\rightarrow e^{i\varphi_i(t)}u_i$. The natural gauge conditions that we use are $\Bar{u}_i \circ \dot{u}_i = 0\,, i=1, 2, 3$.

Let us introduce the following notation: $A_{ij}(t):= \Bar{u}_i \circ \dot{u}_j(t)$, $A^0_{ij}:= \Bar{u}_i \circ \dot{u}_j(0)$. Due to the orthonormality of the vectors $u_i$ we have $A_{ij}=-\Bar{A}_{ji}$. As in the $\mathrm{SU}(2)$ case (see~(\ref{S2EoM})), the equations of motion lead to equations on $A_{ij}(t)$ (for more details see Appendix~\ref{su3flagEOM}):
\begin{gather}
    \frac{d}{dt} A_{21}(t)=0\,,\quad 
      \frac{d}{dt}
        \begin{pmatrix}
            A_{31} \\
            A_{32}
        \end{pmatrix}
        =
        \begin{pmatrix}
            0 & -\beta \left(
            \frac{1}{\alpha} - \frac{1}{\beta}
        \right) A_{21} \\
            -\beta \left(
            \frac{1}{\alpha} - \frac{1}{\beta}
        \right) A_{12}  & 0
        \end{pmatrix}
        \begin{pmatrix}
            A_{31} \\
            A_{32}
        \end{pmatrix}
        \,, \label{EvolutionEquation}
\end{gather}
  with the solution
    \begin{gather}
      A_{21} (t) = \mathrm{const} = A^0_{21}\,,\quad\quad 
        \begin{pmatrix}
            A_{31}\\
            A_{32}
        \end{pmatrix} = 
        \exp \left[
        \begin{pmatrix}
            0 & \frac{1}{\alpha}A^0_{21}\\
            -\frac{1}{\alpha}\Bar{A}^0_{21} & 0
        \end{pmatrix}(\alpha - \beta)t
        \right]
        \begin{pmatrix}
            A^0_{31}\\
            A^0_{32}
        \end{pmatrix}.
    \end{gather}
        
    Knowing $A_{ij}$, one can find the evolution equations of $u_i$, as the vectors $\{u_i\}_{i=1}^{3}$ form an orthonormal basis in $\mathbb{C}^{3}$ at every instant of time. From the `partition of unity'~(\ref{partunit}) it follows that
    \begin{gather}\label{ueqs}
        \frac{d}{dt}
        \begin{pmatrix}
            u_1 \\
            u_2 \\
            u_3
        \end{pmatrix}
        =
        \begin{pmatrix}
            0 & A_{21} & A_{31} \\
            -\Bar{A}_{21} & 0 & A_{32}\\
            -\Bar{A}_{31} & -\Bar{A}_{32} & 0 
        \end{pmatrix}
        \begin{pmatrix}
            u_1 \\
            u_2 \\
            u_3
        \end{pmatrix}
        :=
        \mathcal{A}(t)
        \begin{pmatrix}
            u_1 \\
            u_2 \\
            u_3
        \end{pmatrix}.
    \end{gather}
   
    %

    The formal solution can be written in the form of an ordered exponential, but it is also possible to obtain an explicit expression. To this end we need to recall the gauge invariance of the system of equations~(\ref{ueqs}). Consider the equation $\frac{d}{dt}x=Mx$ and make the substitution $x=G(t)y$, where $G(t)$ is a unitary transformation. This gives $\frac{d}{dt}y=M'y$, where $M'=G^{\dagger}MG-G^{\dagger}\frac{d}{dt}G$.

    Let us take the matrix~(\ref{Gmatr}) for $G$. In this case the equation becomes
    \begin{gather}
        \frac{d}{dt}
        \left(
            G^{\dagger}
            \begin{pmatrix}
                u_1 \\
                u_2 \\
                u_3
            \end{pmatrix}
        \right)
        =
        \beta
        \begin{pmatrix}
            0 & \frac{1}{\alpha}A^0_{21} & \frac{1}{\beta}A^0_{31}\\
            -\frac{1}{\alpha}\Bar{A}^0_{21} & 0 & \frac{1}{\beta}A^0_{32}\\
            -\frac{1}{\beta}\Bar{A}^0_{31} & -\frac{1}{\beta}\Bar{A}^0_{32} & 0
        \end{pmatrix}
        G^{\dagger}
        \begin{pmatrix}
            u_1 \\
            u_2 \\
            u_3
        \end{pmatrix}.
    \end{gather}
    Therefore the solution has the form~(\ref{flag111sol}).

    \end{proof}


\subsubsection{Special cases.}
    Having the explicit form of the solution at hand, one can study several interesting special cases.

    \begin{itemize}
        \item $A_{31}^0=A_{32}^0=0$. In this case 
         \begin{gather}
        \begin{pmatrix}
            u_1(t) \\
            u_2(t) \\
            u_3(t)
        \end{pmatrix}
        =
        \exp
        \left[
        \begin{pmatrix}
            0 & A^0_{21} & 0\\
            -\Bar{A}^0_{21} & 0 & 0\\
            0 & 0 & 0
        \end{pmatrix} t    
        \right]
        \begin{pmatrix}
            u_1(0) \\
            u_2(0) \\
            u_3(0)
        \end{pmatrix}\,.
    \end{gather}
    It can be seen that the solution is a rotation in the $(u_1, u_2)$ plane. In other words, it is a great circle on the sphere $\CP^1$, i.e. in the fiber of the forgetful bundle~(\ref{F123forgetful}). It follows that the fibers  are totally geodesic submanifolds.

    \item $A_{21}^0=0$. Then
    \begin{gather}
        \begin{pmatrix}
            u_1(t) \\
            u_2(t) \\
            u_3(t)
        \end{pmatrix}
        =
        \exp
        \left[
        \begin{pmatrix}
            0 & 0 & A^0_{31}\\
            0 & 0 & A^0_{32}\\
            -\Bar{A}^0_{31} & -\Bar{A}^0_{32} & 0
        \end{pmatrix} t    
        \right]
        \begin{pmatrix}
            u_1(0) \\
            u_2(0) \\
            u_3(0)
        \end{pmatrix}\,.
    \end{gather}
    This is the horizontal lift to $\mathcal{F}_{1, 2, 3}$ (satisfying the condition $\Bar{u}_2 \circ \dot{u}_1 = 0$) of a geodesic from the base $\CP^2$. It is a well-known fact that, in the case of a Riemannian submersion, the horizontal lift of a geodesic from the base space is a geodesic in the total space (see~\cite[Theorem 3.31]{Cheeger})
\end{itemize}
 The explicit solution~(\ref{flag111sol})-(\ref{Gmatr}) provides the description of arbitrary geodesics, not necessarily `vertical' or `horizontal' ones. Let us also discuss special cases of metrics for which the solution can be simplified:
 
\begin{itemize}

    \item $\alpha=\beta$ (the normal metric). As can be seen from~(\ref{flag111sol}), in this case all  geodesics are orbits of one-parameter isometry groups. This is a general property of normal metrics~\cite{Cheeger}.

    \item $\alpha\rightarrow\infty$ (the Fubini-Study metric on $\CP^2$). This limit has already been studied in the context of the spin chain at the end of Section~\ref{eigenvaluesSpinChainSU(3)}. In this case we obtain
    \begin{gather}
        \begin{pmatrix}
            u_1 \\
            u_2 \\
            u_3
        \end{pmatrix}
        =
        \exp
        \left[
        \begin{pmatrix}
            0 & A^0_{21} & 0\\
            -\Bar{A}^0_{21} & 0 & 0\\
            0 & 0 & 0
        \end{pmatrix} t    
        \right]
        \exp
        \left[
        \begin{pmatrix}
            0 & 0 & A^0_{31}\\
            0 & 0 & A^0_{32}\\
            -\Bar{A}^0_{31} & -\Bar{A}^0_{32} & 0
        \end{pmatrix} t    
        \right]
        \begin{pmatrix}
            u_1 \\
            u_2 \\
            u_3
        \end{pmatrix}(0)\,.\label{alphaLimitGeod}
    \end{gather}
    As noted above, the vectors $(u_1,u_2)$ are defined up to an action of $\mathrm{U}(2)$. Accordingly, up to an additional gauge transformation, (\ref{alphaLimitGeod}) is a geodesic in $\mathbb{CP}^2$  (again, this is an orbit of a one-parameter subgroup of $\mathrm{SU}(3)$).
    
    \end{itemize}

    
    %
    %
    %
\section{Generalization to higher $N$}\label{Nsec}
    This section provides a natural generalization of the constructions described in Section \ref{SU3Case} to the $\SU(N)$ case, where $N \geq 3$.

    \subsection{Spin chain Hamiltonian.} \label{SpinChain}
    First of all, one should note a basic property of the Hamiltonian (\ref{HamSU3}), namely that it can be written as a sum of two commuting Casimir operators. One of these operators acts in $\mathrm{Sym}(p)^{\otimes 2}$, while the other  acts in $\mathrm{Sym}(p)^{\otimes 3}$. In the following we use this observation to construct the generalization.
    
    Consider a system of $N$ $\SU(N)$-spins. In order to describe the Hamiltonians (as well as the metrics on flag manifolds later on) it is convenient to use the graphical notation

    \begin{equation}
        \begin{overpic}[scale=1.2,unit=0.5mm]{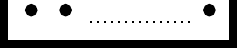}
            \put(10,22){$I$}
            \put(20,22){$I+1$}
            \put(75,22){$I+J$}
        \end{overpic}
    \end{equation}
    \noindent where each node corresponds to a `copy' of $\mathrm{Sym}(p,N)$, the $p$-th symmetric power of the  fundamental representation of $\SU(N)$, representing a spin at that site. The \textit{bracket} denotes the addition to the Hamiltonian of the quadratic Casimir operator constructed on these spins, taken with some multiplier. In turn, we refer to this multiplier as the \textit{coefficient of the bracket}. For example, the above picture is represented by the operator
    \begin{gather}\label{HamIJdef}
        \alpha_{I,J} H_{I,J}:=\alpha_{I,J}\left[\frac{1}{2}\left(\sum_{i=I}^{I+J} S^a_i\right)\left(\sum_{j=I}^{I+J} S^a_j\right)+\textrm{const}\right]\,,
    \end{gather}
    where $S^a=\sum_{i=I}^{I+J} S^a_i$ is a generator of $\mathfrak{su}(N)$ taken in the representation $\mathrm{Sym}(p,N)^{\otimes (J+1)}$. For convenience we have also included an additive constant in the definition of the Hamiltonian, whose significance will be discussed later on.

      We will be considering Hamiltonians whose diagrams do not contain partially overlapping brackets. For example, the following configuration is \textit{forbidden}:

    \begin{equation}
        \begin{overpic}[scale=1.2,unit=0.5mm]{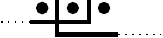}
        \end{overpic}
    \end{equation}

    In other words, the brackets, from which the Hamiltonian $\mathcal{H}$ is built, must be `embedded' into each other. In this case the corresponding quadratic Casimir operators $H_{I,J}$ commute with each other. We also note that $\mathcal{H}$ acts in the space $\mathrm{Sym}(p,N)^{\otimes N}$ by construction.

    We also impose the following condition on the coefficients $\alpha_{I,J}$ of the brackets: the coefficient in front of each $S^a_i S^a_j$ term\footnote{Which is the sum of coefficients of all brackets that contain $S^a_i S^a_j$.} in $\mathcal{H}$ must be positive. This convention is explained in Section \ref{flagMetr}.
    
 \subsubsection{Examples.}    
    Using this notation, we may represent the Hamiltonian (\ref{HamSU3}) of the $\SU(3)$ spin chain  by the diagram
    \begin{equation}
        \begin{overpic}[scale=1.2,unit=0.5mm]{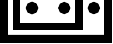}
            \put(11,19){$1$}
            \put(24,19){$2$}
            \put(37,19){$3$}
        \end{overpic}
    \end{equation}
    
    \noindent The first bracket has the coefficient $\alpha-\beta$, whereas the second bracket has the coefficient~$\beta$. 

    As another illustration, consider an example of six spins with the diagram
    \begin{equation}
        \begin{overpic}[scale=1.2,unit=0.5mm]{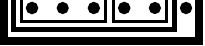}
            \put(11,20){$1$}
            \put(23,20){$2$}
            \put(36,20){$3$}
            \put(49,20){$4$}
            \put(61,20){$5$}
            \put(74,20){$6$}
        \end{overpic}
        \label{diag16}
    \end{equation}
    
    \noindent In this case the Hamiltonian takes the form
    \begin{gather}
        \mathcal{H}=\alpha_{1,3} H_{1,3} + \alpha_{4,5} H_{4,5}+\alpha_{1,5} H_{1,5} + 
        \alpha_{1,6} H_{1,6}.
    \end{gather}

    \subsubsection{Spectrum of the Hamiltonian.} 
    The algorithm for calculating the eigenvalues of $\mathcal{H}$ for a particular $p$ is as follows:
    \begin{enumerate}
        \item 
        Find the brackets that contain no smaller ones. Choose one of them and consider the corresponding term $H_{I,J}$.
         Multiply the representations $\mathrm{Sym}(p,N)$ corresponding to the given bracket and choose an irreducible representation in the decomposition\footnote{It is convenient to use Young diagrams for this purpose. Recall that $\mathrm{Sym}(p, N)$ is matched by a diagram with $p$ boxes arranged in a single row.
            }. The eigenvalue of $\alpha_{I,J}H_{I,J}$ is given by 
            \begin{gather}\label{eigenvalueSU(N)}
                \frac{\alpha_{I,J}}{2}\left(C_2(p_1,\ldots,p_N)-C_2(p,\ldots,p_{J+1}=p,0,\ldots,0)\right)\,,
            \end{gather}
            where $p_j$ are the lengths of rows of the diagram (it is clear that $p_j = 0$ for $j > (J+1)$) and 
\begin{gather}\label{quadraticCasimirSU(N)}
        C_2(p_1,\ldots,p_N)=\sum_{j=1}^N s_j(s_j - 2j),\quad\quad\textrm{with}\quad\quad s_i=p_i-\frac{1}{N}\sum_{j=1}^N p_j \,,
            \end{gather}
            is the eigenvalue of the quadratic Casimir operator of $\SU(N)$  \cite{PerelomovPopov}. The subtraction in~(\ref{eigenvalueSU(N)}) makes use of the possibility of adding a constant to the Hamiltonian~(\ref{HamIJdef}). It turns out that, with this choice, the ground state has eigenvalue zero (see  Section~\ref{stabspectrum} below and Appendix \ref{groundState}). For each bracket we use this method to find the corresponding eigenvalues.
        \item
        Find the brackets containing no smaller ones, this time ignoring the ones from  step 1. Let us take a look at one of these. Some of the brackets from step 1 fall inside this bracket, and at step 1 we have picked from each such smaller bracket a Young diagram corresponding to one of the irreducible components. Tensor-multiply all of these irreducible representations, as well as  with the $\mathrm{Sym}(p,N)$ representations  at the nodes that did not fall into any of the brackets from step~1 but  fall into the new bracket. Again, choose a  Young diagram corresponding to an irreducible component and find its eigenvalue using~(\ref{eigenvalueSU(N)}). For each bracket found at this step, add the obtained values to the result of step 1.

        \item Continue until all the brackets are exhausted.
    \end{enumerate}
    
    At each step, a particular irreducible component is picked from the tensor product of representations determined at the previous step (and, possibly, $\mathrm{Sym}(p,N)$). In order to find the whole spectrum of $\mathcal{H}$, one should consider all possible combinations.

    \subsubsection{Spectrum stabilization.} \label{stabspectrum}
    In the following, we define the Hamiltonian $\mathcal{H}$ by taking into account the subtraction introduced in  equation (\ref{eigenvalueSU(N)}). To explain this, let us analyze the behavior of the spectrum $\sigma(\mathcal{H},p)$ of the Hamiltonian as $p$ increases. Similarly to the case of $\SU(3)$, after the transition $p\mapsto p+1$ a Young diagram of an irreducible representation in $\mathrm{Sym}(p,N)^{J+1}$ maps to a Young diagram with an additional column of $J+1$ boxes, i.e. to a certain irreducible component in $\mathrm{Sym}(p+1,N)^{J+1}$. In this case the eigenvalues of $H_{I,J}$ in $\mathcal{H}$ increase, but the differences between them remain unaltered. Indeed, let us show that 
     \begin{gather}
       C_2(p_1,\ldots,p_{J+1},0\ldots,0)-C_2(p, \ldots, p,0,\ldots,0)
       =\nonumber\\
       =C_2(p_1+1,\ldots,p_{J+1}+1,0\ldots,0)-C_2(p+1, \ldots, p+1,0,\ldots,0). \label{limitCheck}
    \end{gather}
   Subtracting the r.h.s. of equation (\ref{limitCheck}) from the l.h.s., upon some simplification we get
    \begin{gather}
        2\left(1-\frac{J+1}{N}\right)\sum^{J+1}_{i=1}(p_i - p)=0.
    \end{gather}
    This holds true, since, according to the rules for multiplying Young diagrams,  ${\sum\limits_{i=1}^{J+1} p_i = (J+1)p}$.

    However, the ground state energy is always zero due to the subtraction in~(\ref{eigenvalueSU(N)}) (see Appendix~\ref{groundState}). Thus, the spectrum stabilises, i.e.
    \begin{gather}
        \sigma(\mathcal{H},p) \subset \sigma(\mathcal{H},p+1).
    \end{gather}
    Specifically, the eigenvalues are essentially independent of $p$ (for sufficiently large $p$).   
    
    \subsection{Metrics on $\mathcal{F}_{1,2,\ldots,N}$.}\label{flagMetr}
    As with $\mathcal{F}_{1,2,3}$, there is a connection between an $\SU(N)$ spin chain in the limit $p\rightarrow \infty$ and a free-particle problem on $\mathcal{F}_{1,2,\ldots,N}$.
    
    First, let us recall the most general metric on a complete flag manifold (see also~\cite{Arvanitoyeorgos_1993}). This is a natural generalization of~(\ref{F111metr}):
    \begin{gather}\label{genmetr}
        d s^2=\sum_{i\neq j}^N {1\over \alpha_{ij}} |\Bar{u}_i \circ d u_j|^2.
    \end{gather}

    This metric is non-degenerate under the condition ${1\over \alpha_{ij}}>0$ for all $i, j$. Indeed, since we are dealing with a homogeneous space, it is sufficient to study the corresponding quadratic form near any point, e.g. $U:=(u_1, \ldots, u_N)=\mathrm{Id}\in \mathcal{F}_{1,2,\ldots,N}$. Using the decomposition $U=\mathrm{Id}+i\,H+\ldots$, with $H$ Hermitian, we have $\Bar{u}_j \circ d u_k=i\,dH_{jk}+\ldots$ to the leading order. So $ds^2\big|_{U=\mathds{1}_N}=\sum_{i\neq j}^N {1\over \alpha_{ij}}\,|dH_{ij}|^2$, so that the requirement ${1\over \alpha_{ij}}>0$ is obvious.  
    
    The recipe for deriving a metric from a spin chain Hamiltonian is the same as that described in Section~\ref{MetricsFlag}. Let us formulate a generalization of Proposition~\ref{spinFlagConnection}:

    \begin{prop}
        Let $\mathcal H$ be the Hamiltonian of an $\SU(N)$ spin chain with $N$ spins, acting in the space~$\mathrm{V}(p,N):=\mathrm{Sym}(p,N)^{\otimes N}$. Then
    \bea \label{HamLaplaceLim}
        \underset{p\to\infty}{\mathrm{lim}}\,\mathrm{Tr}_{\mathrm{V}(p,N)}(e^{- \tau \mathcal{H}})=\mathrm{Tr}_{L^2(\mathcal{F}_{1,2,\ldots,N})}(e^{- \tau\mathcal{H}_{\mathrm{particle}}})\,,
    \eea
    where $\tau>0$ is a parameter, $\mathcal{H} = -\triangle$ is the quantum mechanical Hamiltonian for a free particle on $\mathcal{F}_{1,2,\ldots,N}$, and $\triangle$ the Laplace-Beltrami operator corresponding to the metric constructed from $\mathcal{H}$.
    \end{prop}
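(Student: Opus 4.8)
The plan is to carry out, for general $N$, the coherent-state path-integral computation of Section~\ref{MetricsFlag} that established Proposition~\ref{spinFlagConnection} in the $\SU(3)$ case. First I would write the left-hand side as a coherent-state path integral: at each of the $N$ sites the coherent states of $\mathrm{Sym}(p,N)$ are the highest-weight states labelled by $z_i\in\CP^{N-1}$, so $\mathrm{Tr}_{\mathrm{V}(p,N)}(e^{-\tau\mathcal{H}})$ becomes an integral over loops in $(\CP^{N-1})^{\times N}$ with periodic boundary conditions. Because the diagrams are built from embedded (non-overlapping) brackets, the $H_{I,J}$ are mutually commuting Casimirs, and the only datum of $\mathcal{H}$ that survives in the diagonal coherent-state symbol is, for each pair $i\neq j$, the total coefficient $\alpha_{ij}$ of $S^a_iS^a_j$ (the sum of the coefficients of all brackets containing that pair). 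Since $S^a_i$ and $S^a_j$ act on different factors, $\langle S^a_iS^a_j\rangle=\langle S^a_i\rangle\langle S^a_j\rangle$, and using the $\mathfrak{su}(N)$ completeness relation for the fundamental generators $t^a$, $\sum_a(t^a)_{kl}(t^a)_{mn}=\tfrac12(\delta_{kn}\delta_{lm}-\tfrac1N\delta_{kl}\delta_{mn})$, one finds $\langle S^a_iS^a_j\rangle=\tfrac{p^2}{2}\,|\bar z_i\circ z_j|^2/(|z_i|^2|z_j|^2)+\mathrm{const}+O(p)$, the constants being absorbed by the additive term already built into~(\ref{HamIJdef}). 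This yields the direct analogue of~(\ref{InitialAction}),
\[
\mathcal{S}=ip\int_0^{\tau}dt\,\sum_{i=1}^N\frac{\bar z_i\circ\dot z_i}{|z_i|^2}-p^2\int_0^{\tau}dt\,\sum_{i<j}\alpha_{ij}\,\frac{|\bar z_i\circ z_j|^2}{|z_i|^2|z_j|^2},
\]
with $\alpha_{ij}>0$ guaranteed by the positivity convention imposed in Section~\ref{SpinChain}.

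Next I would import the mechanism of Proposition~\ref{subbundle}. Since the potential is a positive-coefficient sum of the overlaps $|\bar z_i\circ z_j|^2$, the $p^2$ prefactor concentrates the integral, as $p\to\infty$, on configurations with all overlaps small, i.e.\ on a neighbourhood of the pairwise-orthogonal locus $\mathcal{F}_{1,\ldots,N}$, which sits inside the dense open set $\mathcal{X}$ of nondegenerate matrices $Z=(z_1,\ldots,z_N)$. Restricting to $\mathcal{X}$ and normalising columns, I would apply the polar decomposition $Z=UH$, $K:=H^2$ (positive-definite, unit diagonal), exactly as in Proposition~\ref{subbundle} and its stated generalisation to $\mathcal{F}_{1,\ldots,n}\subset(\CP^{n-1})^{\times n}$. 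This recasts the action in the form of~(\ref{remadeAction}),
\[
\mathcal{S}=ip\int_0^{\tau}dt\,\mathrm{Tr}\big(H\dot H+KU^{\dagger}\dot U\big)-p^2\int_0^{\tau}dt\,\sum_{i<j}\alpha_{ij}\,K_{ij}K_{ji}.
\]
The $\mathrm{Tr}(H\dot H)$ term is a total derivative, and the diagonal part of $KU^{\dagger}\dot U$ gives $ip\int\sum_i\bar u_i\circ\dot u_i$, again a total derivative because $d(\sum_i\bar u_i\,du_i)=0$; this last vanishing is precisely the Lagrangian property of $\mathcal{F}_{1,\ldots,N}$ (Proposition~\ref{lagrtheorem}). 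Both terms drop under periodic boundary conditions.

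After rescaling $pK_{ij}\to K_{ij}$ for $i\neq j$, the surviving off-diagonal terms are Gaussian in each $K_{ij}$, and the integral converges precisely because $\alpha_{ij}>0$. Performing it pair by pair produces the free-particle action, the $\SU(N)$ analogue of~(\ref{FinalAction}),
\[
\mathcal{S}=\int_0^{\tau}dt\,\sum_{i\neq j}\frac{1}{\alpha_{ij}}\,|\bar u_j\circ\dot u_i|^2,
\]
which is exactly the geodesic Lagrangian for the metric~(\ref{genmetr}) under the identification of $\alpha_{ij}$ with the total coefficient of $S^a_iS^a_j$. Hence the right-hand trace is $\mathrm{Tr}_{L^2(\mathcal{F}_{1,\ldots,N})}(e^{-\tau\mathcal{H}_{\mathrm{particle}}})$ with $\mathcal{H}_{\mathrm{particle}}=-\triangle$. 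Combined with the representation-theoretic stabilisation of Section~\ref{stabspectrum}, which realises $\lim_{p\to\infty}\mathrm{V}(p,N)=L^2(\mathcal{F}_{1,\ldots,N})$ via Peter--Weyl as the general-$N$ analogue of~(\ref{S2pinf}), the equality of the two traces, and therefore of the two spectra, follows.

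The main obstacle is the rigorous control of the $p\to\infty$ limit at the two points flagged as ``reservations'' already in the $\SU(3)$ derivation. First, one must justify replacing the integration domain $(\CP^{N-1})^{\times N}$ by $\mathcal{X}$ and the concentration of the measure on the orthonormal locus; this is a localisation statement resting on $\alpha_{ij}>0$. Second, and more delicate, the $K_{ij}$ are not free complex variables but are constrained by the positive-definiteness of $K$, so extending each Gaussian integration to all of $\CC$ is only valid asymptotically: after the rescaling $pK_{ij}\to K_{ij}$ these constraints recede to the boundary of the fibre and one must verify that they contribute only subleading corrections, so that the off-diagonal $K_{ij}$ genuinely behave as unconstrained cotangent-fibre coordinates in the limit. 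Establishing that these boundary corrections do not affect the limiting trace is the technical heart of the argument; the algebraic steps (coherent-state symbol, polar decomposition, Gaussian integration) are otherwise routine generalisations of the $\SU(3)$ computation.
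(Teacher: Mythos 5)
Your proposal is correct and follows essentially the same route as the paper: the paper proves this proposition by simply invoking the $\SU(3)$ coherent-state path-integral derivation of Section~\ref{MetricsFlag} (Proposition~\ref{spinFlagConnection}) together with the recipe of Section~\ref{flagMetr} relating bracket coefficients to metric coefficients, and your argument is precisely that generalization carried out explicitly, arriving at the geodesic action for the metric~(\ref{genmetr}). You also flag the same two technical reservations (restriction to the nondegenerate locus $\mathcal{X}$ and the positive-definiteness constraint on $K$ receding as $p\to\infty$) that the paper itself acknowledges without resolving rigorously.
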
 
     
    It follows from the equality of partition functions that the spectrum of $\mathcal{H}$ coincides with that of $-\triangle$ as $p\rightarrow \infty$. 
    
   Let us explain why the subtraction introduced earlier in~(\ref{eigenvalueSU(N)}) is also necessary for the validity of~(\ref{HamLaplaceLim}). Obviously, $-\triangle$ is a non-negative operator with zero eigenvalue corresponding to constant functions. On the other hand, as discussed above, the Hamiltonians $H_{I, J}$ with the chosen subtraction are also non-negatively defined, and, as follows from~(\ref{eigenvalueSU(N)}), their zero eigenvalues correspond to rectangular Young diagrams of size $p\times(J-I+1)$.

   By successively considering all brackets at each step, we see that the zero eigenvalue corresponds to a rectangular diagram of width $p$ and increasing height. At the last step we arrive at the diagram of size $p \times N$, which is the singlet (the trivial representation). It is clear that the singlet corresponds to constant functions as $p \to \infty$.
    
    One can formulate the following statement based on the results of Section~\ref{SpinChain}:
    
    \begin{prop} \label{LaplaceBeltramiSpect}
        The spectrum of the Laplace-Beltrami operator $\triangle$, corresponding to the Hamiltonian $\mathcal{H}$ under consideration, can be described explicitly.
    \end{prop}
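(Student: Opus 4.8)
The plan is to assemble the proof from three ingredients that are already in place: the explicit simultaneous diagonalization of the spin-chain Hamiltonian $\mathcal{H}$, the stabilization of its spectrum in $p$, and the partition-function identity~(\ref{HamLaplaceLim}). First I would record that, because the brackets building $\mathcal{H}$ are nested rather than partially overlapping (Section~\ref{SpinChain}), the partial quadratic Casimir operators $H_{I,J}$ mutually commute, so $\mathcal{H}=\sum_{I,J}\alpha_{I,J}H_{I,J}$ is simultaneously diagonalizable with all of them. Consequently the joint eigenspaces are labelled by the chains of Young diagrams produced by the branching algorithm, and each eigenvalue is the finite sum over brackets $\sum_{(I,J)}\tfrac{\alpha_{I,J}}{2}\big(C_2(p_1,\ldots,p_N)-C_2(p,\ldots,p,0,\ldots,0)\big)$, with $C_2$ given in closed form by~(\ref{quadraticCasimirSU(N)}). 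This already furnishes an explicit description of the finite-$p$ spectrum $\sigma(\mathcal{H},p)$.

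Next I would invoke the stabilization argument of Section~\ref{stabspectrum}: appending a full column of $N$ boxes leaves the shifted parameters $s_j$ — hence every partial-Casimir difference appearing in~(\ref{eigenvalueSU(N)}) as well as the dimension of each $\SU(N)$ irreducible — unchanged, so that $\sigma(\mathcal{H},p)\subset\sigma(\mathcal{H},p+1)$ and the eigenvalues together with their degeneracies converge as $p\to\infty$. This lets me define the limiting spectrum $\sigma(\mathcal{H},\infty)$ intrinsically in terms of the $s_j$, i.e. independently of $p$, and to read off from~(\ref{eigenvalueSU(N)}) and Appendix~\ref{groundState} that the ground state sits at zero. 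Finally I would combine these with the partition-function identity~(\ref{HamLaplaceLim}), the $\SU(N)$ analogue of Proposition~\ref{spinFlagConnection}. Since $\lim_{p\to\infty}\mathrm{Tr}_{\mathrm{V}(p,N)}(e^{-\tau\mathcal{H}})=\mathrm{Tr}_{L^2(\mathcal{F}_{1,2,\ldots,N})}(e^{-\tau\mathcal{H}_{\mathrm{particle}}})$ holds for every $\tau>0$ with $\mathcal{H}_{\mathrm{particle}}=-\triangle$, and since the exponentials $e^{-\tau\Lambda}$ for distinct $\Lambda$ are linearly independent as functions of $\tau$, the two heat traces must agree exponent-by-exponent. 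Hence $\sigma(-\triangle)=\sigma(\mathcal{H},\infty)$ with matching multiplicities, which is precisely the desired explicit description.

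The hard part — the one I would flag rather than dispatch — is the rigorous justification of this last step: interchanging the limit $p\to\infty$ with the spectral sum, controlling the term-by-term degeneracies $\dim V_\lambda$ (which grow with $p$ before the coherent-state measure is taken into account), and confirming that the number of branching chains leading to a given irreducible $V_\lambda$ converges to its multiplicity in $L^2(\mathcal{F}_{1,2,\ldots,N})$ while the partial-Casimir values reproduce the eigenvalues of the genuinely non-normal Laplacian for the metric~(\ref{genmetr}). It is worth stressing that $-\triangle$ is \emph{not} simply the total $\SU(N)$ Casimir here, so the finer data carried by the chain of intermediate diagrams is essential; matching it to the zero-weight structure of $V_\lambda$ is exactly what the path-integral derivation of~(\ref{HamLaplaceLim}) is designed to supply, and it is where the analytic care is concentrated.
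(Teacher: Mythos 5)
Your proposal is correct and takes essentially the same route as the paper: its justification of Proposition~\ref{LaplaceBeltramiSpect} is exactly the assembly you describe, namely the branching algorithm of Section~\ref{SpinChain} (commuting nested Casimirs with eigenvalues given by~(\ref{eigenvalueSU(N)})--(\ref{quadraticCasimirSU(N)})), the stabilization $\sigma(\mathcal{H},p)\subset\sigma(\mathcal{H},p+1)$ of Section~\ref{stabspectrum}, and the heat-trace identity~(\ref{HamLaplaceLim}) transferring the limiting spectrum to $-\triangle$. The analytic caveat you flag (interchanging $p\to\infty$ with the spectral sum and matching multiplicities) is likewise left at the physical, path-integral level of rigour in the paper, so your treatment is faithful to it.
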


    In order to obtain the metric from the Hamiltonian, it suffices to replace the term $S^a_iS^a_j$ in the Hamiltonian~$\mathcal{H}$ with $|\Bar{u}_i \circ d u_j|^2$ and  all of the coefficients in~$\mathcal{H}$ with their inverse values. This is the same process that was used for $\mathcal{F}_{1,2,3}$ (see the formulas (\ref{spinChainHamiltonian}) and (\ref{F111metr})). It is clear that the metric obtained through this process has a `nested' structure inherited from the original Hamiltonian. Therefore one can use the graphical representation described in Section \ref{SpinChain}. The corresponding diagram

    \vspace{0.3cm}
    \begin{equation}
        \begin{overpic}[scale=1.2,unit=0.5mm]{Element.pdf}
            \put(10,22){$I$}
            \put(20,22){$I+1$}
            \put(75,22){$I+J$}
        \end{overpic}
    \label{diagIJ}
    \end{equation}
    \noindent 
    will again be called a \textit{bracket}, as in Section~\ref{SpinChain}.  
    
    The bracket above corresponds to the term $\frac{1}{\Tilde{\alpha}_{I,J}}\sum_{i=I+1}^{I+J}\sum_{j=I}^{i-1}|\Bar{u}_j \circ d u_i|^2$ in the metric, where the orthonormal set of vectors $\{u_k\}_{k=1}^N $ parameterizes the flag under consideration, as in the case of $\mathcal{F}_{1,2,3}$. Each node, which previously had its own spin, now has its own vector $u_i$. We refer to $\frac{1}{\Tilde{\alpha}_{I,J}}$ as the \textit{coefficient of the bracket}. One can relate these coefficients to the $\alpha_{I,J}$ coefficients in $\mathcal{H}$. As noted above, the coefficients in front of $S^a_iS^a_j$ and $|\Bar{u}_i \circ d u_j|^2$ are inverses of each other, but one should bear in mind that the interaction of the form $S^a_i S^a_j$ may fall into more than one bracket. In this case the final coefficient for this interaction is the sum of coefficients of all such brackets.

    From the non-degeneracy of the metric it follows that the coefficient in front of each term $|\Bar{u}_i \circ d u_j|^2$ must be positive. Since it is the inverse of the coefficient in front of $S^a_i S^a_j$ in $\mathcal{H}$, this 
    explains the positivity conditions introduced in Section~\ref{SpinChain}.

    As in the case of Hamiltonians, configurations with partially overlapping brackets are forbidden (this assumption is made throughout the rest of this paper). We refer to such metrics as having `nested' structure. In the following section we will establish their relation to forgetful maps of flag manifolds.

    \subsection{Forgetful bundle.}
\label{forgetfulsec}

    Let us briefly recall the construction of a \textit{forgetful} bundle on $\mathcal{F}_{1,2,\ldots,N}$ \cite{Donagi_2008, Manin1988GaugeFT}. It is given by the projection 
    \begin{gather}\label{forgetfulBundle}
        \mathcal{F}_{1,2,\ldots,N} \xrightarrow{\mathcal{F}_{1,2,\ldots,K}} \mathcal{F}_{1,2,\ldots,M,M+K,\ldots,N}.
    \end{gather}
    Speaking informally, we `forget' part of the fine structure of the flag, i.e. subspaces of certain dimensions. In~(\ref{forgetfulBundle}), these are subspaces of dimensions $M+1,M+2,\ldots,M+K-1$. The fiber of the bundle is $\mathcal{F}_{1,2,\ldots,K}$. In terms of the set of vectors $\{u\}_{k=1}^N$ parametrizing a point in $\mathcal{F}_{1,2,\ldots,N}$, we simply choose a subset of $K$ vectors and consider these to be defined up to an $\mathrm{U}(K)$ transformation (since $\{u_{k}\}_{k= 1}^{N}$ are orthonormalized). In other words, we `remember' only the plane spanned by these $K$ vectors.

    We can as well forget another set of dimensions: 
    \begin{gather} \label{towerForgetfulBundle}
        \mathcal{F}_{1,2,\ldots,N} \xrightarrow{\mathcal{F}_{1,2,\ldots,K}} \mathcal{F}_{1,2,\ldots,M,M+K,\ldots,N}\xrightarrow{\mathcal{F}_{1,2,\ldots,L}}\mathcal{F}_{1,2,\ldots,P,P+L,\ldots,M,M+K,\ldots,N}.
    \end{gather}
    \noindent Any partial flag manifold can be obtained using this method.




    Now we can define more precisely the notion of metric with `nested' structure. Each bracket of the form~(\ref{diagIJ}) in a metric diagram (regardless of its internal structure) corresponds to a forgetful map with fiber $\mathcal{F}_{1, 2, \ldots, J-I+1}$. If the bracket has internal structure, it means that the manifold $\mathcal{F}_{1, 2, \ldots, J-I+1}$ itself can be represented as an analogous bundle\footnote{For example, the diagram~(\ref{diag16}) means that $\mathcal{F}_{1, \ldots, 6}$ can be viewed as a bundle over $\CP^5\simeq \mathcal{F}_{5, 6}$ with fiber $\mathcal{F}_{1, \ldots, 5}$. Additionally, $\mathcal{F}_{1, \ldots, 5}$ may be viewed as a bundle over $\mathcal{F}_{2,5}\simeq\mathcal{F}_{3,5}$, with fiber $\mathcal{F}_{1,2,3}\times\mathcal{F}_{1,2}$.}. The `nested structure' of a metric is marked by the fact that each forgetful map is a \emph{Riemannian submersion}. As it turns out, in this case  fibers of the corresponding projections turn out to be totally geodesic submanifolds (see the~proof of Proposition~\ref{geodsolproof}). 
    
    In relation to forgetful maps, let us recall the following result~\cite{Bourguignon} (see also~\cite{Besson}):
    \begin{prop}[\cite{Bourguignon}]
        Let 
    \begin{gather}\label{fiberEB}
        (E_{N+M}, G) \xrightarrow{F_M}  (B_N, g)
    \end{gather} be a bundle over the base $B_N$, with fiber $F_M$ (the indices denote dimensions of these manifolds). The total space $E_{N+M}$ and the base $B_N$ are equipped with metrics $G$ and~$g$, respectively. Suppose additionally that the  map~(\ref{fiberEB}) is a Riemannian submersion with totally geodesic fibers. Then the Laplace-Beltrami operator on $E_{N+M}$ can be decomposed into a sum of `horizontal' and `vertical' Laplacians. These operators commute with each other.
    \end{prop}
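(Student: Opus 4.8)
The plan is to prove the decomposition locally in an adapted orthonormal frame and then to deduce the commutation from the isometry property of horizontal parallel transport. First I would fix the orthogonal splitting $TE=\mathcal{H}\oplus\mathcal{V}$ into the horizontal distribution $\mathcal{H}$ (the orthogonal complement of the fibers) and the vertical distribution $\mathcal{V}=\ker dF_M$, and recall O'Neill's tensors $T$ and $A$, which measure respectively the second fundamental form of the fibers and the non-integrability of $\mathcal{H}$. Around a point I would choose a local orthonormal frame $\{X_1,\dots,X_N\}$ of $\mathcal{H}$ consisting of \emph{basic} fields (horizontal lifts of an orthonormal frame on $(B_N,g)$) together with an orthonormal frame $\{V_1,\dots,V_M\}$ of $\mathcal{V}$, and write the (rough, trace) Laplacian as
\begin{gather}
    \triangle_E f=\sum_{i=1}^N\left(X_i X_i f-(\nabla_{X_i}X_i)f\right)+\sum_{r=1}^M\left(V_r V_r f-(\nabla_{V_r}V_r)f\right).
\end{gather}

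Second, I would separate the two sums. In the horizontal sum the vertical component of $\nabla_{X_i}X_i$ equals $A_{X_i}X_i$, which vanishes by the antisymmetry $A_XY=-A_YX$; hence this sum depends only on the horizontal part of the connection, which for basic fields projects to the Levi-Civita connection of $(B_N,g)$. This defines the \emph{horizontal Laplacian} $\triangle^h$. In the vertical sum the Gauss formula identifies $(\nabla_{V_r}V_r)^{\mathcal{V}}$ with the intrinsic fiber connection, so this sum is the fiber Laplacian $\triangle^v$ up to the extra term $\sum_r(\nabla_{V_r}V_r)^{\mathcal{H}}f=H(f)$, where $H=\sum_r T_{V_r}V_r$ is the mean curvature vector of the fibers. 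This is exactly where the hypothesis enters: totally geodesic fibers mean $T\equiv0$, so $H=0$, the cross term disappears, and one obtains
\begin{gather}
    \triangle_E=\triangle^h+\triangle^v.
\end{gather}

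Third, for the commutation I would exploit the decisive structural fact that, for totally geodesic fibers, parallel transport along horizontal curves with respect to the Ehresmann connection $\mathcal{H}$ maps fibers \emph{isometrically} onto fibers; in particular all fibers are mutually isometric. Granting this, I would decompose $L^2(E_{N+M})$ according to the eigenbundles of the fiber Laplacian: over $b\in B_N$ take the $\lambda$-eigenspace of $\triangle_{F_M}$ on the fiber $F_b$; since the fibers are isometric these eigenspaces have constant dimension and assemble into vector bundles $\mathcal{E}_\lambda\to B_N$ carrying a connection induced by horizontal transport, so that $L^2(E)$ is the Hilbert-space direct sum $\bigoplus_\lambda\Gamma(\mathcal{E}_\lambda)$. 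By construction $\triangle^v$ acts as the scalar $\lambda$ on $\Gamma(\mathcal{E}_\lambda)$. Because horizontal transport consists of fiber isometries, it commutes with the fiber Laplacian and preserves each eigenspace; consequently $\triangle^h$, being built from the horizontal derivatives $X_i$ and basic (base) coefficients, preserves each $\Gamma(\mathcal{E}_\lambda)$. Operators that are simultaneously block-diagonal with respect to the same decomposition commute, which yields $[\triangle^h,\triangle^v]=0$.

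The main obstacle is precisely the commutation, and more sharply the claim that horizontal transport acts by fiber isometries and hence commutes with $\triangle^v$. Making this rigorous amounts to showing that the horizontal lift $X^\ast$ of any base field has a flow whose fiber components are isometries, equivalently that the Lie derivative of the vertical metric along $X^\ast$ vanishes; this is the infinitesimal form of the totally geodesic condition combined with metric compatibility, and it is the second place where $T\equiv0$ is essential. Once the compatibility $[X^\ast,\triangle^v]=0$ is secured, the commutator $[\triangle^h,\triangle^v]$ reduces to commutators of the $X_i^\ast$ and of basic functions with $\triangle^v$, all of which vanish, so the remaining bookkeeping is routine. In the flag-manifold application the forgetful fibration is homogeneous and $G$-equivariant, so the isometry property of horizontal transport is immediate and this step simplifies considerably.
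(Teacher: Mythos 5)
Your proof is correct, but it takes a genuinely different route from the paper's. The paper works entirely in adapted local coordinates: it posits the metric ansatz $(ds^2)_E=g_{AB}(x)\,dx^A dx^B+h_{\alpha\beta}(y)\left(dy^\alpha-\mathcal{A}^\alpha_A dx^A\right)\left(dy^\beta-\mathcal{A}^\beta_B dx^B\right)$, computes the Laplacian explicitly as the sum of a horizontal operator built from $D_A=\partial_A+\mathcal{A}^\alpha_A\partial_\alpha$ and a vertical fiber Laplacian, encodes total geodesy of the fibers as the vanishing of $g^{AB}\mathcal{A}^\alpha_B\,\partial_\alpha h_{\beta\gamma}$, and then observes that commutation is manifest, since under this condition the $y$-dependent coefficients of the vertical Laplacian are annihilated by the horizontal derivatives. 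You instead argue invariantly: the splitting $TE=\mathcal{H}\oplus\mathcal{V}$ and O'Neill's tensors give the decomposition (antisymmetry $A_XY=-A_YX$ kills the vertical correction in the horizontal trace, and $T\equiv 0$ kills the mean-curvature cross term), while commutation rests on Hermann's theorem that horizontal transport acts by fiber isometries — equivalently $[X^\ast,\triangle^v]=0$ for horizontal lifts — which you correctly isolate as the crux and tie back to $T\equiv 0$. Your approach buys generality and a transparent geometric mechanism: it does not presuppose that the fiber metric and connection form can be put in the paper's special coordinate gauge (with $h$ independent of $x$ and $\mathcal{A}$ independent of $y$), which is itself a consequence of total geodesy rather than an innocuous choice. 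The paper's computation buys explicitness: concrete formulas for both Laplacians, which is what the flag-manifold application actually uses. Two small caveats in your write-up: the $L^2$ eigenbundle decomposition requires compact fibers (true in the application, and in any case dispensable, since, as you note at the end, $[X^\ast,\triangle^v]=0$ yields the commutator directly without any spectral theory); and the principle ``operators simultaneously block-diagonal with respect to the same decomposition commute'' is false in general — what saves your argument is that $\triangle^v$ acts as a scalar on each block, so any block-preserving operator commutes with it.
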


\begin{proof}
    Let us pick coordinates $x^1,\ldots,x^N$ on $B_N$ and  $y^1,\dots,y^M$ on $F_M$. Furthermore, let $(ds^2)_B=g_{AB}(x)dx^Adx^B$ denote the metric on the base, and let the following be the metric on the total space:
    \begin{gather}
    (ds^2)_E=(ds^2)_B+h_{\alpha\beta}(y)\,\left(dy^\alpha-\mathcal{A}^{\alpha}_A(x) dx^A\right)\left(dy^\beta-\mathcal{A}^{\beta}_B(x) dx^B\right)\,.
    \end{gather}
    In this case the map~(\ref{fiberEB}) is a Riemannian submersion. One can verify by explicit calculation that the Laplacian constructed using this metric has the form (here ${g:=\mathrm{Det}(g_{AB})}$ and $h:=\mathrm{Det}(h_{\alpha\beta})$)  
    \begin{gather}\label{ELaplacian}
        \triangle_E={1\over g^{1/2}h^{1/2}}D_A\left(g^{1/2}h^{1/2} g^{AB} D_B\right)+{1\over h^{1/2}}\dd_\alpha\left(h^{1/2} h^{\alpha\beta} \dd_\beta\right)\,,\\
        \textrm{where}\quad\quad D_A:=\dd_A+\mathcal{A}_A^\alpha \dd_\alpha.
    \end{gather}

    The requirement that the fiber $F_M$ be totally geodesic  can be expressed as the condition\footnote{A well-known example of a bundle with totally geodesic fibers is the Hopf fibration $S^{2n+1}\to \CP^{n}$. In this case $M = 1$ and $h_{11} (y)=\mathrm{const}$, so that condition~(\ref{totgeod}) is satisfied.}
    \begin{gather}\label{totgeod}
        \Gamma^A_{\beta\gamma}=-{1\over 2} g^{AB}\mathcal{A}_B^\alpha \dd_\alpha h_{\beta\gamma}=0\,.
    \end{gather}
    In this case the two terms in~(\ref{ELaplacian}) (the so-called `horizontal' and `vertical' Laplace operators, respectively~\cite{Bourguignon}) evidently commute with each other. 
\end{proof}

In the case of forgetful bundles this proposition is a complete analogue of our statement about the commutativity of quantum Hamiltonians in the limit $p \to \infty$.

    \subsection{Geodesics.}\label{geodesicsec} We have described the relation between an $\SU(N)$ spin chain and a free-particle problem on the manifold of complete flags. Previously, we have developed an algorithm for calculating the eigenvalues of spin chain Hamiltonians and of the  corresponding Laplace–Beltrami operators for metrics with `nested' structure. This resolves the quantum version of the free-particle problem on $\mathcal{F}_{1,2,\ldots,N}$. What remains is to solve the classical problem, i.e. the geodesic equations.

    Before passing to the solution, let us first of all simplify the concept of metrics with `nested' structure:

    \begin{lem}\label{lem1}
        Metrics with `nested' structure can be represented by diagrams of the following types: in each bracket there are either strictly two other brackets or none at all. In other words, the  following two situations are allowed:
        \begin{equation}
        \begin{overpic}[scale=1.2,unit=0.5mm]{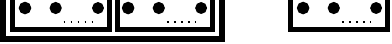}
        \end{overpic}
        \end{equation}
        
       \noindent(By the same rules as above, additional brackets can also be nested within the inner brackets in the diagram on the left\footnote{
        Here is an example of such metric on the flag manifold $\mathcal{F}_{1,2,\ldots,12}$:
        \begin{figure}[H]
        \centering
        \begin{overpic}[scale=1,unit=0.5mm]{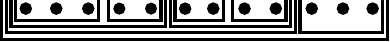}
        \end{overpic}
        \end{figure}
        }.)

    \end{lem}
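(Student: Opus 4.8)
The plan is to show that any nested diagram can be redrawn, \emph{without altering the metric it represents}, into one in which every bracket contains either exactly two sub-brackets or none. First I would record the two elementary moves that leave the metric invariant. As explained in Section~\ref{flagMetr}, the coefficient standing in front of a term $|\bar u_i \circ du_j|^2$ equals the sum of the coefficients of all brackets containing both indices $i$ and $j$; since such brackets span consecutive intervals and are nested, for each fixed pair they form a chain. Consequently, (i)~inserting a bracket whose coefficient is zero and (ii)~inserting a bracket spanning a single node both leave every such sum, and hence the whole metric, unchanged. I stress that non-degeneracy constrains only the \emph{total} coefficient of each $|\bar u_i \circ du_j|^2$ to be positive, not the coefficient of an individual bracket, which may be negative (as already happens for the inner bracket of coefficient $\alpha-\beta$ in~(\ref{HamSU3})); thus zero-coefficient brackets are admissible. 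Geometrically such a bracket realises a trivial product sub-bundle, which is consistent with the Riemannian-submersion reading of Section~\ref{forgetfulsec}.

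With these moves in hand, I would process the brackets of a given diagram from the outside in. I fix a bracket $B$ and consider its immediate contents: the maximal sub-brackets $S_1,\dots,S_m$ strictly contained in $B$, together with the bare nodes of $B$ lying in none of them. Because brackets span consecutive intervals and cannot partially overlap, these sub-brackets and bare nodes tile the interval of $B$ as an ordered sequence of consecutive blocks. If some $S_a$ has the same span as $B$, I would merge the two coincident brackets into one of summed coefficient, which strictly lowers the bracket count. Otherwise I would use move (ii) to wrap each bare node of $B$ into a single-node bracket, so that the immediate children $C_1,\dots,C_k$ of $B$ are now all genuine brackets, ordered consecutively, tiling $B$, with $k\ge 2$.

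If $k=2$ the bracket $B$ is already of the allowed two-sub-bracket type, while a bracket carrying no sub-brackets at all is of the allowed ``none'' type. If $k\ge 3$, I would insert $k-2$ brackets of zero coefficient, each spanning a consecutive union of the $C_i$'s, organising $C_1,\dots,C_k$ into a binary tree rooted at $B$; since every inserted interval is a consecutive union of existing blocks, the non-overlap (laminar) condition is preserved, and by move (i) the metric is untouched. Applying this to $B$ and then recursing into each $C_i$ (of strictly smaller span) terminates and produces a diagram in which every bracket has exactly two children or none. As an illustration, the $\gamma=\beta$ metric on $\mathcal{F}_{1,2,3}$, whose diagram is the bracket containing the single sub-bracket on $(u_1,u_2)$, becomes the two-sub-bracket diagram whose children are that sub-bracket and the single-node bracket on $u_3$.

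The main obstacle, and the point needing care, is exactly the book-keeping that makes the two elementary moves simultaneously legitimate and compatible with the consecutive-interval constraint: one must check that single-node and zero-coefficient brackets are permissible representations (which rests on only the \emph{total} coefficient of each $|\bar u_i \circ du_j|^2$ being constrained), and that every inserted bracket can be taken to be a consecutive union of existing blocks, so the laminar structure is never violated and no spurious coupling between the $C_i$ is introduced. Once this is established, the termination and the two-or-none property follow by the straightforward induction on bracket span sketched above.
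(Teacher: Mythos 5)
Your proposal is correct and takes essentially the same approach as the paper: the paper's proof likewise converts an arbitrary nested diagram to the two-or-none form by inserting ``fictitious'' brackets whose coefficients are set to zero in the final formulas. Your extra book-keeping (single-node brackets, the binary-tree recursion, and the observation that non-degeneracy only constrains the \emph{total} coefficient of each $|\bar u_i \circ d u_j|^2$, so individual bracket coefficients may vanish) merely spells out details the paper leaves implicit.
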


    \begin{proof} 
        An arbitrary metric with `nested' structure can contain structures of the form
        \begin{equation}
            \begin{overpic}[scale=1.2,unit=0.5mm]{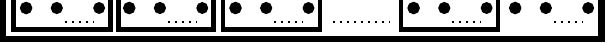}
            \end{overpic}
        \end{equation}
        \noindent The structure of the inner brackets is not specified.  

        In order to convert this diagram to a diagram of the type described in the statement of the Lemma, we add `fictitious' brackets whose coefficients in the final formulas will be set to zero:
        \begin{equation}
            \begin{overpic}[scale=1.2,unit=0.5mm]{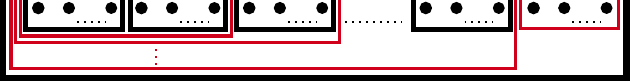}
            \end{overpic}
        \end{equation}
        This completes the proof.
    \end{proof}

  In solving the geodesic problem we will be using the bracket structure described in Lemma~\ref{lem1}. The action describing geodesics in $\mathcal{F}_{1,2,\ldots,N}$ is constructed from the metric by the formal substitution $d u_i\rightarrow \dot{u}_i$, in analogy with formula (\ref{FinalAction}). This action inherits the metric structure. From the equations of motion of the vectors $u_i$ one can easily derive the equations for the evolution of the scalar products $\dot{\Bar{u}}_i\circ u_j$ (see the example in Appendix~\ref{su3flagEOM}).
  Knowing these, one can determine the evolution of the vectors $\Bar{u}_i$ themselves. Indeed, using the `partition of unity' (which is a generalization of~(\ref{partunit}) for $N$ vectors), we have the following identity: $\Dot{\Bar{u}}_i=(\dot{\Bar{u}}_i\circ u_j) \Bar{u}_j$, which can be seen as a system of equations on the vectors $\Bar{u}_i$ ($i=1,\ldots,N$). 
  The main result is as follows:
    \begin{prop}\label{geodsolproof}
        One can construct explicitly the general solution to the geodesic equations for metrics on $\mathcal{F}_{1,2,\ldots,N}$ with `nested' structure.
    \end{prop}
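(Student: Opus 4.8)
The plan is to argue by induction on $N$, exploiting the nested (Riemannian submersion) structure guaranteed by Lemma~\ref{lem1}. First I would fix the kinematics exactly as in the $\SU(3)$ case: impose the gauge $\Bar{u}_i \circ \dot{u}_i = 0$, introduce the anti-Hermitian matrix $\mathcal{A}(t)$ with entries $A_{ij}(t) = \Bar{u}_i \circ \dot{u}_j$ (so that $A_{ij} = -\Bar{A}_{ji}$), and use the partition of unity to collect the frame into a column $U$ of the vectors $u_i$, obeying the linear system $\frac{d}{dt} U = \mathcal{A}(t)\, U$. The entire geodesic problem then reduces to determining $\mathcal{A}(t)$ explicitly and integrating this linear system.

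Next I would derive the evolution equations for the entries $A_{ij}(t)$ from the geodesic equations, generalizing~(\ref{EvolutionEquation}). The crucial point is that for a metric with nested structure the coefficient $1/\alpha_{ij}$ depends only on the innermost bracket containing both indices, and this forces $\mathcal{A}(t)$ to satisfy block-hierarchical equations: the entries internal to a given sub-bracket evolve among themselves—these are precisely the connection data of the lower-dimensional fiber flag—while the ``cross'' entries linking distinct blocks carry no dynamics of their own beyond being rotated by the internal motion. In the base case of Lemma~\ref{lem1}, a single bracket with no sub-brackets (the normal metric), the matrix $\mathcal{A}$ is constant and the solution is the one-parameter subgroup orbit $U(t) = \exp(\mathcal{A}^0 t)\, U(0)$.

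For the inductive step I would use the two-sub-bracket structure of Lemma~\ref{lem1}. The outermost bracket exhibits $\mathcal{F}_{1,2,\ldots,N}$ as the total space of a Riemannian submersion with totally geodesic fibers, whose base and fiber are flag manifolds (possibly products) of strictly smaller dimension carrying nested metrics; by the induction hypothesis their geodesic problems are solved explicitly. I would then build a time-dependent unitary gauge transformation $G(t)$, block-structured according to the nesting, whose blocks are exactly the lower-level evolution operators (matrix exponentials) supplied by the induction hypothesis—just as the $2\times 2$ block $\exp[\cdots(\alpha-\beta)t]$ appears in~(\ref{Gmatr}). Substituting $U = G(t)\,\tilde{U}$ into $\frac{d}{dt}U = \mathcal{A}(t)\,U$ transforms the connection to $\mathcal{A}'(t) = G^{\dagger}\mathcal{A}\, G - G^{\dagger}\dot{G}$, and the claim is that this choice of $G$ renders $\mathcal{A}'$ \emph{constant}. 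Granting this, $\tilde{U}(t) = \exp(\mathcal{A}'\, t)\,\tilde{U}(0)$, yielding the explicit formula $U(t) = G(t)\exp(\mathcal{A}'\, t)\, U(0)$, which generalizes~(\ref{flag111sol}).

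The main obstacle is precisely the verification that $G(t)$ trivializes the time dependence, i.e. that $\mathcal{A}'(t)$ is constant rather than merely block-triangular. This is where the nested structure enters essentially: one must show that the cross-block entries of $\mathcal{A}$ rotate by exactly the same fiber evolution that defines $G(t)$, so that $G^{\dagger}\dot{G}$ cancels their induced time dependence, and that this cancellation is simultaneously consistent across all levels of the nesting. Confirming that the totally-geodesic-fiber condition~(\ref{totgeod}) is what guarantees this exact cancellation, and keeping track of the combinatorics of which block rotates which cross-entries, is the technical heart of the argument; everything else is the by-now-routine linear-algebra of integrating $\dot{U} = \mathcal{A}(t)U$ once $\mathcal{A}'$ is known to be constant.
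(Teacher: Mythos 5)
Your overall scheme is the same as the paper's: decompose according to the nesting of Lemma~\ref{lem1}, solve the inner blocks by induction, and find a block-structured gauge transformation after which the remaining dynamics is a constant-coefficient linear system, generalizing~(\ref{flag111sol})--(\ref{Gmatr}). However, what you defer as ``the technical heart of the argument'' is precisely the content of the paper's proof, and your sketch of how that verification should go is not strong enough to carry it out. The missing idea is to work not with the connection $\mathcal{A}(t)$ alone but with the \emph{pair} of matrices $\pmb{\omega}_{ij}=\dot{\Bar{u}}_j\circ u_i$ (angular velocity) and $\pmb{L}_{ij}=\tfrac{1}{\alpha_{ij}}\dot{\Bar{u}}_j\circ u_i$ (angular momentum), in terms of which the geodesic equations become generalized Euler equations $\tfrac{d\pmb{L}}{dt}=[\pmb{L},\pmb{\omega}]$, eq.~(\ref{LOmegaEq}). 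Splitting these by blocks, the paper finds (i) the inner-block equations decouple, eq.~(\ref{omega12eq}); (ii) the cross-block $\omega$ obeys a \emph{linear} equation whose coefficient matrices are $L_1-\tfrac{1}{\xi}\omega_1$ and $L_2-\tfrac{1}{\xi}\omega_2$, eq.~(\ref{omegaeq}) --- not the internal angular velocities $\omega_1,\omega_2$ themselves, so your description of the cross entries as ``rotated by the internal motion'' misidentifies the generators of that rotation; and (iii) after substituting $\omega=g_1\hat{\omega}g_2^{-1}$, the conjugated momenta $A=g_1^{-1}L_1g_1$ and $B=g_2^{-1}L_2g_2$ are \emph{conserved}, which follows from the Euler equations together with $\omega_i=-\dot{g}_ig_i^{-1}$. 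It is this conservation law, not the totally-geodesic-fiber condition~(\ref{totgeod}) you invoke, that makes the transformed cross equation constant-coefficient and yields $\hat{\omega}=e^{\xi At}\hat{\omega}_0e^{-\xi Bt}$, eq.~(\ref{omegahatsol}). Without distinguishing $\pmb{L}$ from $\pmb{\omega}$ (they differ block by block because the metric coefficients differ), there is no way to even write down the correct conserved quantities, so the cancellation you hope for cannot be established in your framework.

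Two further corrections to your setup. First, the induction hypothesis must be formulated more sharply than ``the geodesic problems on base and fiber are solved'': what propagates through the induction is the specific representation $\omega_1=-\dot{g}_1g_1^{-1}$, $\omega_2=-\dot{g}_2g_2^{-1}$ with \emph{explicitly known unitary} $g_1,g_2$, eq.~(\ref{omega12pres}); this is what allows the conjugation trick at the next level. Second, your $G(t)$ is not literally built from ``the lower-level evolution operators'': in the paper's final formula the block-diagonal gauge factor is $\mathrm{diag}(g_1,g_2)\,\mathrm{diag}(e^{\xi At},e^{\xi Bt})$, i.e.\ the fiber evolution \emph{corrected} by the exponentials of the conserved momenta --- this is exactly why the $\SU(3)$ matrix~(\ref{Gmatr}) carries the coefficient $(\alpha-\beta)t$ rather than $\alpha t$. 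These are not cosmetic details: they are the mechanism by which the time dependence cancels, and your proposal as written leaves that mechanism unproven.
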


    \begin{proof}

First, let us write down the equations of motion for the most general metric~(\ref{genmetr}). To do this, we introduce two matrices
\begin{gather}\label{omegaLdef}
\pmb{\omega}_{ij}:=\dot{\Bar{u}}_j\circ u_i\,,\quad\quad \pmb{L}_{ij}=\begin{cases}{1\over \alpha_{ij}}    \dot{\Bar{u}}_j\circ u_i\,,\quad i\neq j\\
\quad \quad 0\,,\quad\quad\;\;  i=j
\end{cases}
\end{gather}
For brevity, let us write $\pmb{L}={1\over \alpha}\left(\pmb{\omega}\right)$. It then follows from the equations of motion\footnote{See Appendix~\ref{su3flagEOM} for the explicit derivation in the $\mathrm{SU}(3)$ case.} that 
\begin{gather}\label{LOmegaEq}
{d \pmb{L}\over dt}=[\pmb{L}, \pmb{\omega}]. 
\end{gather}
These are generalized Euler equations\footnote{The conditions for Liouville integrability of such equations have been studied, e.g. in~\cite{MiFom}.} (in mechanics, $\pmb{\omega}$ and $\pmb{L}$ are the angular velocity and angular momentum of a solid body, respectively).

Let us now show that, for metrics with `nested' structure, these equations can be solved by induction. According to Lemma~\ref{lem1}, the diagram we are interested in has the form
    \begin{equation}
        \begin{overpic}[scale=1.2,unit=0.5mm]{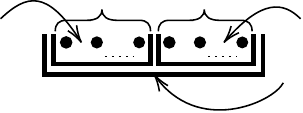}
            \put(-3,31){$1$}
            \put(123,32){$2$}
            \put(115,13){$3$}
            \put(38,44){$N$}
            \put(80,44){$K$}
        \end{overpic}
    \end{equation}
        \noindent 
        where brackets $1$ and $2$ can contain sub-brackets according to the rules above. In this case $\pmb{\omega}$ and $\pmb{L}$ may be split  as follows:
\begin{gather}
    \pmb{\omega}=\begin{pmatrix}
                \omega_1 & \omega\\
                -\omega^\dagger & \omega_2
            \end{pmatrix}\,,\quad\quad  \pmb{L}=\begin{pmatrix}
                L_1 & {1\over \xi}\omega\\
                -{1\over \xi}\omega^\dagger & L_2
            \end{pmatrix}\,,\\
            \textrm{where}\quad\quad L_1={1\over \alpha}(\omega_1)\quad \textrm{and}\quad L_2={1\over \beta}(\omega_2).
\end{gather}
Note that here $1\over \xi$ stands simply for multiplication by the coefficient of bracket $3$. The equations of motion~(\ref{LOmegaEq}) are accordingly split as follows:
\begin{gather}\label{omega12eq}
    {dL_1\over dt}
    =\left[ L_1 ,
\;\omega_1\right]\,,\quad\quad {dL_2\over dt}
=\left[ L_2 ,
\;\omega_2\right] \,,\\ \label{omegaeq}
    {1\over \xi} {d\omega\over dt}=\left(
    L_1-{1\over \xi}\omega_1\right)\,\omega-\omega\,\left(
    L_2-{1\over \xi}\omega_2\right).
\end{gather}
Thus, equations for $\omega_1$ and $\omega_2$ have decoupled. Moreover, the last equation is \emph{linear} in~$\omega$. Solutions with $\omega=\omega_1=0$ or $\omega=\omega_2=0$ correspond to the motion in the fibers of two natural forgetful bundles
\begin{center}
\begin{tabular}{ r c l }
 & $\mathcal{F}_{1,2,\ldots,N+K}$ &  \\ 
 \quad $\swarrow$  &   & $\searrow$ \\  
 $\mathcal{F}_{1,2, \ldots, K, N+K}$ &  & $\mathcal{F}_{K, \ldots, N+K}$    
\end{tabular}
\end{center}
It thus follows that the fibers are totally geodesic submanifolds.

Let us continue with the proof by induction. Suppose the equations for $\omega_1$ and $\omega_2$ have been solved. We also assume that the following representations have been constructed:
\begin{gather}\label{omega12pres}
\omega_1=-\dot{g}_1 g_1^{-1}\,,\quad\quad \omega_2=-\dot{g}_2 g_2^{-1}\,,
\end{gather}
where $g_{1,2}$ are explicitly known unitary matrices. In the first step of the induction we assume that there are no additional brackets inside brackets $1$ and $2$. Therefore, the operators $1\over\alpha$ and $1\over\beta$ are  multiplications by scalars. It then follows from equation~(\ref{omega12eq}) that $\omega_1=\mathrm{const}$ and $\omega_2=\mathrm{const}$, so that representation~(\ref{omega12pres}) holds, with $g_1 = e^{-\omega_1t}, g_2 = e^{-\omega_2t}$.

Next, we show that, under the inductive assumption, the equation for $\omega$ can be solved explicitly, and the entire matrix $\pmb{\omega}$ can be expressed in a form similar to~(\ref{omega12pres}). To do so, we substitute $\omega=g_1\hat{\omega}g_2^{-1}$. Equation~(\ref{omegaeq}) is transformed into
\begin{gather}
    {1\over \xi} {d\hat{\omega}\over dt}=A \,\hat{\omega}-\hat{\omega}\,B\,,\\
    \textrm{where}\quad\quad A=g_1^{-1} L_1
    g_1\quad \textrm{and}\quad B=g_2^{-1} L_2
    g_2.
\end{gather}
However, it follows from equations~(\ref{omega12eq}) and~(\ref{omega12pres}) that ${dA \over dt} = {dB \over dt} = 0$, i.e. $A$ and $B$ are constant skew-Hermitian matrices. Therefore the solution is
\begin{gather}\label{omegahatsol}
    \hat{\omega}=e^{\xi A t} \hat{\omega}_0e^{-\xi B t}.
\end{gather}
Simple transformations can be used to bring~$\pmb{\omega}$ to the desired form:
\begin{gather}
    \pmb{\omega}=\begin{pmatrix}
                -\dot{g}_1 g_1^{-1} & g_1 \hat{\omega}g_2^{-1}\\
                -\left(g_1 \hat{\omega}g_2^{-1}\right)^\dagger & -\dot{g}_2 g_2^{-1}
            \end{pmatrix}=-\dot{G} G^{-1}\,,\\ \nonumber \textrm{where} \quad\quad
             G=\begin{pmatrix}
                g_1 & 0\\
                0 & g_2
    \end{pmatrix}\,\begin{pmatrix}
                e^{\xi A t} & 0\\
                0 & e^{\xi B t}
            \end{pmatrix}\mathrm{exp}\left[-\begin{pmatrix}
                \xi\,A &  \hat{\omega}_0\\
               - \hat{\omega}_0^\dagger & \xi\,B
            \end{pmatrix}\,t\right]. 
\end{gather}
By the definition~(\ref{omegaLdef}), it follows that the matrix $\pmb{u}=(u_1\cdots u_{N+K})^t$ composed of the vectors $u_i$ satisfies the equation
$   \dot{\pmb{u}}=-\pmb{\omega}\pmb{u}=\dot{G}G^{-1}\pmb{u}
$. 
Hence
\begin{gather}\label{genSolution}
    \pmb{u}(t)=G(t) \pmb{u}_0\,.
\end{gather}
We have now completed the induction step and, in particular, found the evolution of the vectors $u_i$. This proves the proposition. 
\end{proof}

    Let us emphasize that the algorithm for constructing all solutions follows directly from the proof. As an example of the considered technique, we describe geodesics on the flag manifold $\mathcal{F}_{1,2,3,4}$ in Appendix~\ref{flag1234app}.
    %
    %
    %


    \section{Partial flag manifolds}\label{partialFlags}


    In this section we adapt the above construction to the case of partial flag manifolds. The core idea follows the concept introduced in Section \ref{reductionToCP2}: it suffices to consider a particular limit for the problem on the complete flag manifold $\mathcal{F}_{1,2,\ldots,N}$.

    Consider an $\SU(N)$ spin chain with $N$ spins. Suppose  part of the diagram of its Hamiltonian $\mathcal{H}$ is of the form
    \begin{equation}
        \begin{overpic}[scale=1.4,unit=0.5mm]{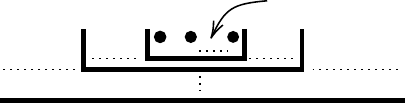}
           \put(74,37){$I$}
           \put(81,37){$I+1$}
           \put(107,37){$I+J$}
           \put(129,48){$1$}
        \end{overpic}
    \end{equation}
   In the Hamiltonian, bracket $1$ corresponds to the operator $\alpha_{I,I+J}H_{I,I+J}$ acting in the space $\mathrm{Sym}(p,N)^{\otimes (J+1)}$.

    Consider the limit $\alpha_{I,I+J} \rightarrow \infty$. By the algorithm described in Section \ref{SpinChain}, it follows that $\mathcal{H}$ has finite eigenvalues  for the unique irreducible representation in $\mathrm{Sym}(p,N)^{\otimes (J+1)}$, corresponding to the Young diagram
    \begin{equation}\label{JpRecDiag}
        J+1 \quad
        \underbrace{\!\!\!\!\!
        \begin{cases}
            \begin{ytableau}
            ~ & ~ & \dots & ~ \\
            ~ & ~ & \dots & ~ \\
            \vdots & \vdots& \vdots & \vdots\\
            ~ & ~ & \dots & ~ \\
            \end{ytableau}
        \end{cases} \!\!\!\!\!\!}_{p}:=\mathrm{Rec}(J+1,p,N).
    \end{equation}
    As a result, in this limit the Hilbert space is reduced to
    \begin{gather}\label{Hilbspacered}
         \mathrm{V}(p)\rightarrow \mathrm{Sym}(p,N)^{\otimes (I-1)}\otimes \mathrm{Rec}(J+1,p,N) \otimes \mathrm{Sym}(p,N)^{\otimes (N-I-J)}\,,
    \end{gather}
   since other states correspond to infinite eigenvalues. Note that when one multiplies several representations $\mathrm{Sym}(p,N)$ in $\mathrm{Sym}(p,N)^{\otimes (J+1)}$, the diagram~(\ref{JpRecDiag}) occurs uniquely when each $\mathrm{Sym}(p,N)$ factor is attached as a new row of length $p$. Therefore, if bracket~$1$ contains nested brackets within it, corresponding terms in the Hamiltonian automatically vanish on these representations due to the subtraction defined in (\ref{eigenvalueSU(N)}). This means that, if the coefficient of any bracket tends to infinity, the inner structure of this bracket becomes irrelevant.

    Based on the relation between metrics and Hamiltonians described in \ref{flagMetr}, all terms $|\Bar{u}_i\circ du_j|^2,$ $i,j=I,I+1,\ldots,I+J$ in the metric are suppressed\footnote{So that this metric is degenerate, if viewed as a metric on the complete flag manifold.} as $\alpha_{I,I+J} \rightarrow \infty$. Thus, the set of vectors $\{u_i\}_{i=I}^{I+J}$ enters the metric via the projector $\sum\limits_{i=I}^{I+J} u_i \otimes \Bar{u}_i$ onto the subspace spanned by these vectors, i.e. in an explicitly $\mathrm{U}(J+1)$-invariant manner.
    Thus, in the limit $p \rightarrow \infty$, the vectors $\{u_i\}_{i=I}^{I+J}$ corresponding to bracket~$1$ are defined up to a $\mathrm{U}(J+1)$-transformation.
 In other words, just as in the case of $\mathcal{F}_{1,2,3}$, the gauge group is enlarged from $\mathrm{U}(1)^{J+1}$ to $U(J+1)$, and the full set of vectors $\{u_a\}_{a=1}^N$ now parametrizes the partial flag manifold $\mathcal{F}_{1,\ldots,I-1,I+J,\ldots,N}$.
    
The procedure described above can be continued to obtain an arbitrary partial flag manifold. In this case the reduction of Hilbert space of the type~(\ref{Hilbspacered}) can be interpreted in terms of Lagrangian submanifolds, similar to what was done in Section~\ref{MetricsFlag}. First, we note that the rectangular diagram~(\ref{JpRecDiag}) corresponds to coherent states parameterized by points in the Grassmannian $\mathrm{Gr}(J+1, N)$ (see, e.g.~\cite{Bykov_2013} and~\cite{KirillovReview}). Therefore this time, when the coefficients of several terms in the Hamiltonian tend to infinity, the phase space of the spin chain is a product of Grassmannians instead of $(\mathbb{CP}^{N-1})^{N}$. Consequently, there is a natural generalization of Proposition~\ref{lagrtheorem}:
\begin{prop}[\cite{Bykov_2013,Affleck_2022}]
There exists an embedding \begin{gather}
    \mathcal{F}_{d_1,d_2,\ldots,d_m=N} \hookrightarrow \prod\limits_{i=1}^m \mathrm{Gr}(n_i,N)\,,\quad \textrm{where}\quad n_i=d_i-d_{i-1}, \,d_0=0\,.
\end{gather}
Moreover, this embedding is Lagrangian w.r.t. a natural symplectic structure on the product of Grassmannians.  
\end{prop}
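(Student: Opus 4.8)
The plan is to generalize the momentum-map argument used in the proof of Proposition~\ref{lagrtheorem}, replacing the rank-one projectors $u_i\otimes\Bar{u}_i$ there with higher-rank orthogonal projectors. First I would realize each factor $\mathrm{Gr}(n_i,N)$ as an adjoint orbit of $\SU(N)$: a point is an $n_i$-dimensional subspace $V_i\subset\CC^N$, equivalently the orthogonal projector $P_i$ onto it (Hermitian, $P_i^2=P_i$, $\tr P_i=n_i$). Equipping each orbit with its canonical Kirillov--Kostant--Souriau form provides the natural symplectic structure on the product $\prod_i\mathrm{Gr}(n_i,N)$; in the rank-one case these are exactly the Fubini--Study forms of Proposition~\ref{lagrtheorem}.

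Next I would write down the momentum map $\mu:\prod_{i=1}^m\mathrm{Gr}(n_i,N)\to\mathfrak{su}(N)$ of the diagonal $\SU(N)$ action. With the chosen normalization it is the sum of the single-factor momentum maps,
\begin{gather}
\mu=\sum_{i=1}^m P_i-\Id\,,
\end{gather}
which is traceless precisely because $\sum_i n_i=N$. The embedding in question sends a flag --- viewed as the orthogonal decomposition $\CC^N=V_1\oplus\cdots\oplus V_m$ built from the successive quotients of the nested subspaces --- to the tuple $(V_1,\ldots,V_m)$, and is manifestly injective. The core geometric claim is that its image is exactly $\mu^{-1}(0)$: the condition $\sum_i P_i=\Id$ says the $P_i$ form a resolution of the identity, and here I would invoke the elementary fact that orthogonal projectors summing to $\Id$ are automatically mutually orthogonal (from $P_i=P_i(\sum_j P_j)P_i$ one gets $\sum_{j\neq i}P_iP_jP_i=0$, and positivity of each $P_iP_jP_i=(P_jP_i)^\dagger(P_jP_i)$ forces $P_jP_i=0$). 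Hence the ranges $V_i$ are pairwise orthogonal of the prescribed dimensions $n_i$ --- precisely the data of a point of $\mathcal{F}_{d_1,\ldots,d_m}$. This is the role played by the `partition of unity'~(\ref{partunit}) in the earlier proof.

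Having identified $\mu^{-1}(0)$ with the flag manifold, the Lagrangian property follows exactly as before. Since $\mathcal{F}_{d_1,\ldots,d_m}$ is a single $\SU(N)$-orbit, $\mu^{-1}(0)$ consists of one orbit, and by the standard fact that any orbit lying in the zero level of a momentum map is isotropic, $\Omega\big|_{\mu^{-1}(0)}=0$. To upgrade isotropic to Lagrangian I would compare dimensions: $\dim_{\mathbb{R}}\mathrm{Gr}(n_i,N)=2n_i(N-n_i)$ gives $\tfrac12\dim\prod_i\mathrm{Gr}(n_i,N)=\sum_i n_i(N-n_i)=N^2-\sum_i n_i^2=\dim_{\mathbb{R}}\mathcal{F}_{d_1,\ldots,d_m}$, so the isotropic submanifold has half the ambient dimension and is therefore Lagrangian.

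I expect the only genuine subtlety to be the momentum-map computation with the correct normalization: one must scale the canonical forms on the individual orbits so that the single-factor momentum maps really add up to $\sum_i P_i$ (up to the constant trace shift), since a different relative scaling would replace the resolution of identity $\sum_i P_i=\Id$ by $\sum_i c_iP_i=\mathrm{const}$ and destroy the identification with the flag manifold. Everything else is a direct transcription of the rank-one argument, with the projector-orthogonality lemma doing the work of the `partition of unity'.
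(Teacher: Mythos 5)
Your proposal is correct and takes essentially the same approach as the paper: the paper's proof simply states that the embedding is given by pairwise orthogonality of the planes and that the argument "proceeds parallel to" Proposition~\ref{lagrtheorem}, i.e. precisely the momentum-map computation $\mu=\sum_i P_i-\Id$, the isotropy of orbits in $\mu^{-1}(0)$, and the dimension count that you carry out. Your explicit projector-orthogonality lemma and the verification $\sum_i n_i(N-n_i)=N^2-\sum_i n_i^2$ are exactly the details the paper leaves implicit as the higher-rank analogue of the `partition of unity'~(\ref{partunit}).
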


The embedding is given by the condition that planes corresponding to the Grassmannians in the r.h.s. are pairwise orthogonal. The rest of the proof proceeds parallel to the proof of Proposition~\ref{lagrtheorem}.

The metric with `nested' structure on a manifold of partial flags is obtained from the metric with nested structure on the complete flag manifold $\mathcal{F}_{1,2,\ldots,N}$ by taking the limit described above. This leads us to the main result of this section:
    \begin{prop}
        Suppose a manifold of partial flags is equipped with a metric with `nested' structure. In this case the general solution to the corresponding geodesic equations and the spectrum of the associated Laplace-Beltrami operator can be found explicitly.
    \end{prop}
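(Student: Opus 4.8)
The plan is to obtain both results as controlled limits of the corresponding statements for the complete flag manifold $\mathcal{F}_{1,2,\ldots,N}$, using the reduction $\alpha_{I,I+J}\to\infty$ described above. The key observation is that, by construction, a metric with `nested' structure on a partial flag manifold is precisely the limit of a metric with `nested' structure on $\mathcal{F}_{1,2,\ldots,N}$ in which the coefficients of the brackets to be `forgotten' are sent to infinity. Hence the explicit solution of the geodesic problem (Proposition~\ref{geodsolproof}) and the explicit description of the spectrum (Proposition~\ref{LaplaceBeltramiSpect}) should carry over to the partial case by tracking these limits.

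For the spectrum, I would first invoke the algorithm of Section~\ref{SpinChain}. At every bracket whose coefficient $\alpha_{I,I+J}$ tends to infinity, finiteness of the eigenvalue~(\ref{eigenvalueSU(N)}) forces the corresponding irreducible component to be the rectangular representation $\mathrm{Rec}(J+1,p,N)$ of~(\ref{JpRecDiag}); all other components acquire infinite energy and decouple, so the Hilbert space collapses to~(\ref{Hilbspacered}). The surviving eigenvalues are still given by~(\ref{eigenvalueSU(N)}), now evaluated only on the remaining finite-coefficient brackets, and the stabilization argument of Section~\ref{stabspectrum} continues to hold verbatim. It then remains to identify $\underset{p\to\infty}{\mathrm{lim}}$ of the reduced space~(\ref{Hilbspacered}) with $L^2$ of the partial flag manifold: this follows because $\mathrm{Rec}(J+1,p,N)$ labels coherent states on the Grassmannian $\mathrm{Gr}(J+1,N)$, so the reduced phase space is a product of Grassmannians into which the partial flag embeds as a Lagrangian submanifold (the proposition immediately preceding this one). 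By the same partition-function reasoning as in Proposition~\ref{spinFlagConnection}, the stabilized spectrum is then exactly that of the Laplace-Beltrami operator for the limiting metric.

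For the geodesics, I would start from the explicit solution~(\ref{genSolution}) on $\mathcal{F}_{1,2,\ldots,N}$ and take the same limit at the level of the generalized Euler equations~(\ref{LOmegaEq}). Sending $1/\alpha_{I,I+J}\to 0$ annihilates the matrix elements of $\pmb{L}$ lying inside the forgotten block, while the splitting~(\ref{omega12eq})--(\ref{omegaeq}) that drives the induction is preserved. Crucially, the inner structure of an infinite-coefficient bracket becomes irrelevant, as already noted in the reduction, so the angular velocity inside that block is pure gauge for the enlarged group $\mathrm{U}(J+1)$; the surviving gauge-invariant data are exactly the off-diagonal blocks $\omega$ coupling the forgotten subspace to the rest, obeying the same \emph{linear} equation~(\ref{omegaeq}). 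The inductive construction of Proposition~\ref{geodsolproof} therefore applies without modification, and the resulting evolution~(\ref{genSolution}), read modulo the $\mathrm{U}(J+1)$ gauge transformations, is the desired geodesic on the partial flag manifold.

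The main obstacle I anticipate is not writing down the formulas but justifying that the two infinite limits ($\alpha_{I,I+J}\to\infty$ and, for the spectrum, $p\to\infty$) may be taken consistently, and that the objects they produce are genuinely intrinsic to the partial flag manifold rather than artifacts of the degenerating metric on $\mathcal{F}_{1,2,\ldots,N}$. Concretely, on the quantum side one must check that discarding the infinite-energy sector commutes with the $p\to\infty$ identification of the Hilbert space, and on the classical side that the gauge-fixed solution~(\ref{genSolution}) descends to a well-defined, non-singular curve on $\mathcal{F}_{1,\ldots,I-1,I+J,\ldots,N}$. The model reduction $\mathcal{F}_{1,2,3}\to\CP^2$ treated in Section~\ref{reductionToCP2} and in the subsequent geodesic discussion confirms that both checks succeed, and it serves as the template for the general argument.
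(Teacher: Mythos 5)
Your proposal is correct and takes essentially the same route as the paper: both derive the partial-flag statement by sending $\alpha_{I,I+J}\to\infty$ in the complete-flag results (Propositions~\ref{LaplaceBeltramiSpect} and~\ref{geodsolproof}), with the Hilbert-space reduction~(\ref{Hilbspacered}) giving the spectrum and the residual constant rotation of the vectors $\{u_i\}_{i=I}^{I+J}$ in the forgotten block reinterpreted as a pure $\mathrm{U}(J+1)$ gauge transformation giving the geodesics. The paper's proof is merely terser, citing the preceding considerations and adding exactly the gauge-transformation remark you formulate, so your additional consistency checks (commutation of the $p\to\infty$ and $\alpha\to\infty$ limits, well-definedness of the descended curve) are elaborations of, not departures from, its argument.
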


    \begin{proof}
        This follows directly from the considerations above and from Propositions \ref{LaplaceBeltramiSpect} and~\ref{geodsolproof} for complete flags.

        Let us make a brief remark. As $\alpha_{I,I+J} \rightarrow \infty$, all contributions from $\{\dot{\Bar{u}}_i\circ u_j\}_{i,j=I}^{I+J}$ in the solution (\ref{genSolution}) are suppressed, except for one that corresponds to the term $e^{-\Tilde{\omega}t}$ with the constant matrix $\Tilde{\omega}=||\dot{\Bar{u}}_i\circ u_j||_{i,j=I}^{I+J}=\mathrm{const}$. This term  acts as a rotation of the vectors $\{u_i\}_{i=I}^{I+J}$, which is a pure gauge transformation and therefore can be excluded, just as in the case of the transition from $\mathcal{F}_{1, 2, 3}$ to $\mathbb{CP}^2$ discussed earlier in Section~\ref{reductionToCP2}.
        \end{proof}

\section{Conclusion and outlook}

In this paper, two related problems are considered, namely, the description of geodesics and the calculation of the spectrum of the Laplace-Beltrami operator on flag manifolds. It is shown that for metrics with `nested' structure the solutions to both problems can be found explicitly. As an intermediate step, we also construct auxiliary $\SU(N)$ spin chains whose Hamiltonians provide finite-dimensional approximations to the Laplace-Beltrami operator.

The above two problems are closely linked. Essentially, the Laplace-Beltrami operator can be seen as a quantum Hamiltonian corresponding to the classical geodesic problem. However, fully understanding this connection is not an easy task and requires the use of quasi-classical techniques (see, for example, ~\cite{Gutzwiller, Camporesi}).

It would as well be interesting to generalize our results to other systems. Specifically, although we have studied a wide range of metrics on flag manifolds, these metrics do not cover the entire space of invariant metrics in general. The obvious question is what happens for other metrics and whether explicit solutions exist in those cases. Furthermore, we have only considered the $\mathrm{SU}(N)$-case, but it is likely that similar methods could be applied to flag manifolds of $\mathrm{O}(n)$ and $\mathrm{Sp}(n)$~\cite{AlekseevskyGO}.

One of the main motivations for investigating the integrability of geodesic equations in this paper is in the possible two-dimensional generalization to the case of sigma models. The most well-studied case is that of symmetric target spaces~\cite{ZakharovMikhailov, Uhlenbeck, HitchinHarmonic} (see the reviews~\cite{Helein, Guest}), where the metric is unique up to scaling, and geodesics correspond to the orbits of one-dimensional subgroups of the isometry group. As we have seen above, in the case of non-symmetric target spaces, such as flag manifolds, there are metrics for which the explicit solution to the geodesic equations can still be found in a  relatively simple way. In this context, a curious question arises: are there similar integrable two-dimensional sigma models with metrics different from the normal one (where integrability seems to occur~\cite{BykovNonsymm, Affleck_2022})? For example, integrability has been proven for the case of odd-dimensional spheres with an arbitrary size of the Hopf fiber~\cite{BassoRej}. On the other hand, it is well-known that metrics for which geodesic equations are integrable do not always lead to integrable sigma models (see, for example,~\cite{StepanchukTseytlin, ChervonyiLunin}). We plan to return to these questions in the future.
    

\vspace{0.5cm}\noindent
    \textbf{Acknowledgments.} Sections 1-2 were written with the support of the Foundation for the Advancement of Theoretical Physics and Mathematics «BASIS». Sections 3–6 were supported by the Russian Science Foundation grant \href{https://rscf.ru/project/22-72-10122/}{№ 22-72-10122}. We would like to thank S. Derkachov, A.~Goncharov, S.~Gorchinskiy, V. Krivorol, M. Markov, A.~Selemenchuk for useful discussions.

\vspace{1cm}

    \appendix

    \section{Equations of motion}\label{su3flagEOM}


    The equations of motion for the theory with Lagrangian (\ref{alphabetaLagr}) are
    \begin{eqnarray}
     && u_1:\quad  \frac{1}{\alpha}\dot{\Bar{u}}_2 (\Bar{u}_1\circ \dot{u}_2) + \Bar{u}_i\lambda^{i1}=0,
        \nonumber
        \\
     &&  u_2:\quad  \frac{1}{\alpha}
        \left(
            \Bar{u}_1 (\Ddot{\Bar{u}}_2 \circ u_1) +
            \Bar{u}_1 (\dot{\Bar{u}}_2 \circ \dot{u}_1) +
            \dot{\Bar{u}}_1 (\dot{\Bar{u}}_2 \circ u_1)
        \right)
        -
        \Bar{u}_i\lambda^{i2}=0,
        \\
     &&   u_3:\quad
        \frac{1}{\beta}
        \left( 
            \Ddot{\Bar{u}}_3 -
            \Bar{u}_3 (\Ddot{\Bar{u}}_3 \circ u_3) -
            \Bar{u}_3 (\dot{\Bar{u}}_3 \circ \dot{u}_3) - 
            \dot{\Bar{u}}_3 (\dot{\Bar{u}}_3 \circ u_3) + 
            \dot{\Bar{u}}_3 (\Bar{u}_3 \circ \dot{u}_3)
        \right)
        -
        \Bar{u}_i\lambda^{i3}=0.
        \nonumber 
    \end{eqnarray}
Here we have omitted the conjugate equations. 
    By scalar multiplying the above equations by the vectors $u_1, u_2, u_3$, we obtain evolution equations for the scalar products~$A_{ij}=\Bar{u}_i \circ \Dot{u}_j$:
    \begin{eqnarray}
        &&\!\!\!\!\!\!\!\!\!\!\!\!\!\!\!\frac{d}{dt}
        \left(
            \Bar{u}_2 \circ \Dot{u}_1
        \right)
        -
        \Bar{u}_2 \circ \dot{u}_1 ( \Bar{u}_1 \circ \dot{u}_1 - \Bar{u}_2 \circ \dot{u}_2 )
        =0,
        \nonumber
        \\
        &&\!\!\!\!\!\!\!\!\!\!\!\!\!\!\!\frac{1}{\beta}
        \frac{d}{dt}
        \left(
            \Bar{u}_3 \circ \dot{u}_1
        \right)
        +
        \frac{1}{\beta}
        \dot{\Bar{u}}_3 \circ \dot{u}_1
        + 
        \frac{2}{\beta} 
        (\Bar{u}_3 \circ \dot{u}_3) (\Bar{u}_3 \circ \dot{u}_1) 
        + 
        \frac{1}{\alpha} 
        (\Bar{u}_3 \circ \dot{u}_2) (\Bar{u}_2 \circ \dot{u}_1)
        =0, 
        \label{EoM}
        \\
        &&\!\!\!\!\!\!\!\!\!\!\!\!\!\!\!\frac{1}{\beta}
        \frac{d}{dt}
        \left(
            \Bar{u}_3 \circ \dot{u}_2
        \right)
        +
        \frac{1}{\beta}
        \dot{\Bar{u}}_3 \circ \dot{u}_2
        + 
        \frac{2}{\beta} 
        (\Bar{u}_3 \circ \dot{u}_3) (\Bar{u}_3 \circ \dot{u}_2) 
        +
        \frac{1}{\alpha} 
        (\Bar{u}_3 \circ \dot{u}_1) (\Bar{u}_1 \circ \dot{u}_2)
        =0.
        \nonumber 
    \end{eqnarray}
    It is useful to note that, given the gauge fixing conditions $\Bar{u}_i \circ  \Dot{u}_i = 0$, one can write
    \begin{eqnarray}
&&\dot{\Bar{u}}_3 \circ \dot{u}_1 = \\ \nonumber
&& =
        - (\Bar{u}_3 \circ \dot{u}_1) (\Bar{u}_1 \circ \dot{u}_1) 
        - (\Bar{u}_3 \circ \dot{u}_2) (\Bar{u}_2 \circ \dot{u}_1) 
        - (\Bar{u}_3 \circ \dot{u}_3) (\Bar{u}_3 \circ \dot{u}_1)
        =- (\Bar{u}_3 \circ \dot{u}_2) (\Bar{u}_2 \circ \dot{u}_1),\\
        &&\dot{\Bar{u}}_3 \circ \dot{u}_2 = - (\Bar{u}_3 \circ \dot{u}_1) (\Bar{u}_1 \circ \dot{u}_2).\nonumber
    \end{eqnarray}
    Thus, the system (\ref{EoM}) takes the form
    \begin{eqnarray}
        &&\frac{d}{dt}
        \left(
            \Bar{u}_2 \circ \Dot{u}_1
        \right)
        =0,
        \nonumber
        \\
        &&\frac{1}{\beta}
        \frac{d}{dt}
        \left(
            \Bar{u}_3 \circ \dot{u}_1
        \right)
        +
        \left(\frac{1}{\alpha} - \frac{1}{\beta}\right)
        (\Bar{u}_3 \circ \dot{u}_2) (\Bar{u}_2 \circ \dot{u}_1)
        =0, 
        \\
        &&\frac{1}{\beta}
        \frac{d}{dt}
        \left(
            \Bar{u}_3 \circ \dot{u}_2
        \right)
        +
        \left(\frac{1}{\alpha} - \frac{1}{\beta}\right)
        (\Bar{u}_3 \circ \dot{u}_1) (\Bar{u}_1 \circ \dot{u}_2)
        =0,
        \nonumber 
    \end{eqnarray}
    which, given the notation, is equivalent to (\ref{EvolutionEquation}).

  \section{Ground state}\label{groundState}

  In this Appendix we will identify the ground state (i.e., the state of lowest energy) of the system described in Section \ref{SpinChain}. A natural assumption is that it corresponds to the singlet in $\mathrm{Sym}(p,N)^{\otimes N}$, whose Young diagram consists of $N$ rows of length~$p$. To this end, according to the algorithm for computing the eigenvalues of $\mathcal{H}$ (see Section~\ref{SpinChain}), if the coefficients of all brackets are positive\footnote{This strengthens the requirements on the coefficients as compared to those introduced in Section~\ref{SpinChain}. In the general case one can use an alternative proof based on the connection with the Laplace-Beltrami operator and the non-negative definiteness of the latter.}, it is sufficient to show that
  \begin{gather}\label{groundStateProp}
      C_2(p_1,\ldots,p_a,0,\ldots,0)-C_2(p'_1=p,\ldots,p'_a=p,0,\ldots,0) \geq 0
  \end{gather}
  for all sets $p_1\geq p_2\geq \ldots \geq p_a\geq 0$, such that $\sum\limits_{i=1}^a p_i = ap$, and the inequality is saturated if and only if $p_i=p$ for all $i$.

  By definition (\ref{quadraticCasimirSU(N)}), it follows that equation (\ref{groundStateProp}) can be rewritten as
  \begin{gather}\label{minimEq}
      \sum_{i=1}^a\left((\delta p_i-i)^2-i^2\right) \geq 0,
  \end{gather}
  where $\delta p_i = p_i - p$. Let us find the minimum of the l.h.s. of (\ref{minimEq}) subject to the following constraints on $\delta p_j$:
  \begin{gather}
      \sum\limits_{i=1}^a \delta p_i = 0,\label{firstCondition}\\
      \delta p_1\geq  \ldots \geq \delta p_a\geq -p.\label{secondCondition}
  \end{gather}
  It is convenient to start by considering the first condition and then move on to the second one.

  The minimum of (\ref{minimEq}) under condition (\ref{firstCondition}) is reached when 
  \begin{gather}\label{deltap0}
  \delta p_i = \delta p^0_i := i - {a+1 \over 2}\,,\quad\quad i=1,\, \ldots, a\,.
  \end{gather}
  In this case (\ref{secondCondition}) is not satisfied, though. However,  the expression that we seek to minimize can now be reduced to
  \begin{gather}\label{lastMinimEq}
      \sum_{i=1}^a\left((\delta p_i-\delta p^0_i)^2+(\delta p^0_i-i)^2-i^2\right).
  \end{gather}
  This takes into account that  $\sum\limits_{i=1}^a (\delta p_i-\delta p^0_i)(\delta p^0_i-i)=-\frac{a+1}{2}\sum\limits_{i=1}^a (\delta p_i-\delta p^0_i)=0$ (see~(\ref{deltap0})).  
  Dropping the constants and using the notation $j(i):=a+1-i$, one may rewrite~(\ref{lastMinimEq}) as
  \begin{gather}
      \sum_{i=1}^{\left[a/2\right]}\left((\delta p_i-\delta p^0_i)^2+(\delta p_{j(i)}-\delta p^0_{j(i)})^2 \right)+\delta p^2_{(a+1)/2},
  \end{gather}
  where $\left[a/2\right]$ is the integer part of  $a/2$ (the last term is present only if $(a+1)/2$ is an integer). Removing the brackets and using $\delta p^0_i = - \delta p^0_{j(i)}$, we get
  \begin{gather}\label{lastMinimEq2}
        \sum_{i=1}^{\left[a/2\right]}\left(\delta p_i^2+\delta p_{j(i)}^2 + \left(\delta p^0_i\right)^2+ \left(\delta p^0_{j(i)}\right)^2 + 2 \delta p^0_{j(i)} \left(\delta p_i - \delta p_{j(i)}\right) \right)+\delta p^2_{(a+1)/2},
  \end{gather}
  The term $2 \delta p^0_{j(i)} \left(\delta p_i - \delta p_{j(i)}\right) \geq 0$, since $\delta p^0_{j(i)} > 0$ and $\delta p_i \geq \delta p_{j(i)}$ for $i = 1,\ldots, \left[a/2\right]$ (see (\ref{secondCondition}), (\ref{deltap0}) and the definition of $j(i)$). Thus,  (\ref{lastMinimEq2}) reaches its minimum at $\delta p_i=0$. In other words, the minimum value of the l.h.s. of (\ref{groundStateProp}) is zero, and is reached when $p_i=p$ for all $i$, which completes the proof.

  \section{Flag manifold $\mathcal{F}_{1,2,3,4}$}\label{flag1234app}
    %
    %
    %
    Let us use the flag manifold $\mathcal{F}_{1,2,3,4}$ to illustrate the method for finding geodesics described in Section~\ref{geodesicsec} (see the proof of Proposition~\ref{geodsolproof}). Consider the metric corresponding to the  diagram
    \begin{equation}
        \begin{overpic}[scale=1.2,unit=0.5mm]{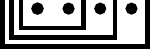}
        \end{overpic}
    \end{equation}
    This metric is of the form
    \begin{gather}
        ds^2=
        \frac{1}{\alpha}\left|\Bar{u}_1\circ du_2\right|^2+
        \frac{1}{\beta}\left(
        \left|\Bar{u}_2\circ du_3\right|^2+
        \left|\Bar{u}_1\circ du_3\right|^2
        \right)
        +\\ \nonumber  +\;\;
        \frac{1}{\gamma}
        \left(
        \left|\Bar{u}_3\circ du_4\right|^2+
        \left|\Bar{u}_2\circ du_4\right|^2+
        \left|\Bar{u}_1\circ du_4\right|^2
        \right)\,,
    \end{gather}
    where we have redefined the coefficients of the brackets for convenience.

    Let $A_{ij} := \Bar{u}_i \circ \dot{u}_j$. Then, $A_{ji}=-\Bar{A}_{ij}$. The evolution equations for $A_{ij}$ are given by
    \begin{gather}
        \frac{d}{dt}A_{21} = 0,\label{flag14FirstEq}\\
        \frac{d}{dt}
        \begin{pmatrix}
            A_{31}\\
            A_{32}
        \end{pmatrix}
        =
        \begin{pmatrix}
            0 & \left(1-\frac{\beta}{\alpha}\right)A_{21}\\
            \left(\frac{\beta}{\alpha}-1\right)\Bar{A}_{21} & 0
        \end{pmatrix}
        \begin{pmatrix}
            A_{31}\\
            A_{32}
        \end{pmatrix},\\
        \frac{d}{dt}
        \begin{pmatrix}
            A_{41}\\
            A_{42}\\
            A_{43}
        \end{pmatrix}
        =
        \begin{pmatrix}
            0 & \left(1-\frac{\gamma}{\alpha}\right)A_{21} & \left(1-\frac{\gamma}{\beta}\right)A_{31}\\
            \left(\frac{\gamma}{\alpha}-1\right)\Bar{A}_{21} & 0 & \left(1-\frac{\gamma}{\beta}\right)A_{32}\\
            \left(\frac{\gamma}{\beta}-1\right)\Bar{A}_{31} & \left(\frac{\gamma}{\beta}-1\right)\Bar{A}_{32} & 0
        \end{pmatrix}
        \begin{pmatrix}
            A_{41}\\
            A_{42}\\
            A_{43}
        \end{pmatrix}.\label{flag14LastEq}
    \end{gather}
    
    Let $A_{ij}^0\equiv A_{ij} \big|_{t=0}$ and introduce the matrices
    \begin{gather}
        \mathcal{M}_1\equiv
        \exp \left[
        \begin{pmatrix}
            0 & \frac{1}{\alpha}A^0_{21}\\
            -\frac{1}{\alpha}\Bar{A}^0_{21} & 0
        \end{pmatrix}(\alpha - \beta)t
        \right],
        \\
        \mathcal{M}_2\equiv
        \exp
        \left[
        \begin{pmatrix}
            0 & \frac{1}{\alpha}A^0_{21} & \frac{1}{\beta}A^0_{31}\\
            -\frac{1}{\alpha}\Bar{A}^0_{21} & 0 & \frac{1}{\beta}A^0_{32}\\
            -\frac{1}{\beta}\Bar{A}^0_{31} & -\frac{1}{\beta}\Bar{A}^0_{32} & 0
        \end{pmatrix}(\beta-\gamma)t    
        \right],
        \\
        \mathcal{M}_3\equiv
        \exp 
        \left[
        \begin{pmatrix}
            0 & \frac{1}{\alpha}A^0_{21} & \frac{1}{\beta}A^0_{31} & \frac{1}{\gamma}A^0_{41}\\
            -\frac{1}{\alpha}\Bar{A}^0_{21} & 0 & \frac{1}{\beta}A^0_{32} & \frac{1}{\gamma}A^0_{42}\\
            -\frac{1}{\beta}\Bar{A}^0_{31} & -\frac{1}{\beta}\Bar{A}^0_{32} & 0 & \frac{1}{\gamma}A^0_{43}\\
            -\frac{1}{\gamma}\Bar{A}^0_{41} & -\frac{1}{\gamma}\Bar{A}^0_{42} & -\frac{1}{\gamma}\Bar{A}^0_{43} & 0
        \end{pmatrix}(\gamma-\delta)t 
        \right].
    \end{gather}

    Then the solutions to the equations (\ref{flag14FirstEq})-(\ref{flag14LastEq}) are
    \begin{gather}
        A_{21} = A^0_{21},\\
        \begin{pmatrix}
            A_{31}\\
            A_{32}
        \end{pmatrix} = \mathcal{M}_1
        \begin{pmatrix}
            A^0_{31}\\
            A^0_{32}
        \end{pmatrix},
        \\
        \begin{pmatrix}
            A_{41}\\
            A_{42}\\
            A_{43}
        \end{pmatrix}
        =
        \begin{pmatrix}
        \mathcal{M}_1 &
        \\
        & 1
        \end{pmatrix}
        \mathcal{M}_2
        \begin{pmatrix}
            A^0_{41}\\
            A^0_{42}\\
            A^0_{43}
        \end{pmatrix}.
    \end{gather}    
    
    Finally, the system of equations for the evolution of the set of vectors $\{u_i\}_{i=1}^4$ 
    \begin{gather}
        \frac{d}{dt}
        u_i = A_{ij}u_j
    \end{gather}
    can be solved explicitly:
    \begin{gather}
        \begin{pmatrix}
            u_1\\
            u_2\\
            u_3\\
            u_4
        \end{pmatrix}(t)= 
        \begin{pmatrix}
        \mathcal{M}_1 & & 
        \\
        & 1 & \\
        & & 1 
        \end{pmatrix}
        \begin{pmatrix}
        \mathcal{M}_2 &\\
        & 1 
        \end{pmatrix}
        \mathcal{M}_3
        \begin{pmatrix}
            u_1\\
            u_2\\
            u_3\\
            u_4
        \end{pmatrix}(0).
    \end{gather}

    In particular, it can be seen that smaller complete flag manifolds (fibers of forgetful bundles) are totally geodesic submanifolds.

\vspace{1cm}    
    \setstretch{0.8}
    \setlength\bibitemsep{5pt}
    \printbibliography

\end{document}
%
Для этого перейдем к новым векторам
    \bear\label{uvwdef}
    &&\Bar{u}:=A_{31}\Bar{u}_1+A_{32}\Bar{u}_2\,,\\
    &&\Bar{v}:=\mu A_{32} \Bar{u}_1-{1\over \mu} A_{31} \Bar{u}_2\,,\\
    &&\Bar{w}:=\Bar{u}_3
    \eear
    В терминах этих векторов уравнения~(\ref{ueqs}) переписываются в следующем виде:
    \begin{gather}\label{ueqsmod}
        \frac{d}{dt}
        \begin{pmatrix}
            \Bar{u} \\
            \Bar{v} \\
            \Bar{w}
        \end{pmatrix}
        =
        \begin{pmatrix}
            0 & \kappa+|a_{21}| & -|a_{31}|^2-|a_{32}|^2 \\
            -(\kappa+|a_{21}|) & 0 & -\frac{1}{|a_{21}|}\left(a_{32} \Bar{a}_{31} a_{21}- \Bar{a}_{32} a_{31} \Bar{a}_{21}\right)\\
            1 & 0 & 0 
        \end{pmatrix}
        \begin{pmatrix}
            \Bar{u} \\
            \Bar{v} \\
            \Bar{w}
        \end{pmatrix}
    \end{gather}
    При выводе мы использовали следующие свойства:
    \bear &&|A_{31}|^2+|A_{32}|^2=|a_{31}|^2+|a_{32}|^2=\mathrm{const.} \label{invariant1}\\
    &&-\mathrm{det}(\mathcal{A}(t)) = A_{32} \Bar{A}_{31} A_{21}- \Bar{A}_{32} A_{31} \Bar{A}_{21} = a_{32} \Bar{a}_{31} a_{21}- \Bar{a}_{32} a_{31} \Bar{a}_{21}=\mathrm{const.}\label{invariant2}
    \eear
    Ортонормируем набор $\Bar{u}, \Bar{v}, \Bar{w}$. Для удобства введём $k^2 \equiv |A_{31}|^2+|A_{32}|^2, a\equiv|A_{21}|,$\\$ m\equiv-\mathrm{det}(\mathcal{A}(t)), q^2\equiv k^2+\frac{m^2}{a^2 k^2}$.
    Тогда
    \bear\label{UVWredef}
        &&\Bar{U} := \frac{1}{k}\Bar{u},\\
        &&\Bar{V} := \frac{1}{q}
            \left(
                \Bar{v}-\frac{m}{ak^2}\Bar{u}
            \right),\\
        &&\Bar{W} := \Bar{w}.    
    \eear
    Уравнения~(\ref{ueqsmod}) принимают вид
    \begin{gather}
            \frac{d}{dt}
            \begin{pmatrix}
                \Bar{U} \\
                \Bar{V} \\
                \Bar{W}
            \end{pmatrix}
            =
            \begin{pmatrix}
                \frac{m}{ak^2}(\kappa+a) & \frac{q}{k}(\kappa+a) & -k \\
                -\frac{q}{k}(\kappa+a) & -\frac{m}{ak^2}(\kappa+a) & 0\\
                k & 0 & 0 
            \end{pmatrix}
            \begin{pmatrix}
                \Bar{U} \\
                \Bar{V} \\
                \Bar{W}
            \end{pmatrix}\equiv  \mathcal{A}_0\begin{pmatrix}
                \Bar{U} \\
                \Bar{V} \\
                \Bar{W}
            \end{pmatrix}.
    \end{gather}    
    Заметим, что $\mathcal{A}_0$ -- матрица с постоянными коэффициентами, поэтому решение имеет вид
    \bea
        \begin{pmatrix}
            \Bar{U} \\
            \Bar{V} \\
            \Bar{W}
        \end{pmatrix}= e^{\mathcal{A}_0 t} 
        \begin{pmatrix}
            \Bar{U}_0 \\
            \Bar{V}_0 \\
            \Bar{W}_0
        \end{pmatrix}
    \eea
    Используя формулы~(\ref{uvwdef}) и~(\ref{UVWredef}), можно найти эволюцию исходных векторов $u_1, u_2, u_3$. \\

Пусть имеется 
    Мы не уточняем внутреннюю структуру скобок $1$ и $2$ за одним исключением. Будем дополнительно полагать, что она повторяет таковую для скобки $3$. Причём тоже будет выполнено и для всех меньших скобок, которые могут содержаться в $1$ или $2$. Возможно, что дополнительных скобок нет. На первый взгляд, кажется, что это достаточно сильное требование. Но несколько позже мы покажем, что любая другая ситуация сводится к этой.
    
    В Лагранжиане можно выделить часть, отвечающую только первым $N$ векторам, не считая, конечно, слагаемых с множителями Лагранжа. Действительно, слагаемые `перемешивающие' $u_i$ из первой и второй скобок относятся к скобке $3$ и имеют вид $\sum_{j=N+1}^
    {N+K} \sum_{i=1}^N \Dot{\Bar{u}}_j\circ u_i \Bar{u}_i\circ \Dot{u}_j$. Так как набор $\{u_i\}_{i=1}^{N+K}$ ортонормированный, то $\sum_{i=1}^{N+K} u_i \otimes \Bar{u}_i = \Id$. Тогда `перемешивающие' слагаемые принимают вид $\sum_{j=N+1}^
    {N+K}  \Dot{\Bar{u}}_j\circ \left(\Id - \sum_{i=N+1}^{N+K} u_i \otimes \Bar{u}_i\right) \circ \Dot{u}_j$. 

    С точки зрения `забывающего' расслоения в Лагранжиане выделяется кусок описывающий движение в слое расслоения - меньшем полном флаге $\mathcal{F}_{1,2,\ldots,N}$. Причём понятно, что условия согласования $\lambda^{ij}, i,j \in\{1,\ldots,N\}$ будут иметь такую же форму как для флага $\mathcal{F}_{1,2,\ldots,N}$. Тогда $\dot{\Bar{u}}_i\circ u_j, i,j \in\{1,\ldots,N\}$ для $\mathcal{F}_{1,\ldots,N+K}$ совпадают с таковыми для $\mathcal{F}_{1,2,\ldots,N}$. То есть движение в слое забывающего расслоения зависит только от информации, заложенной в слой! 

    Осталось показать, что можно найти $\dot{\Bar{u}}_a \circ u_i$ и эволюцию $\{u_i, u_a\}_{i,a}$, считая известными $\dot{\Bar{u}}_i \circ u_j$, $\dot{\Bar{u}}_a \circ u_b$~\footnote{Мы считаем, что индексы $i,j,k$ пробегают значения $1,\ldots,N$, а $a,b,c$ значения $N+1,\ldots,N+K$}, а так же определённые $\mathcal{T}$-упорядоченные экспоненты от них. Тогда можно будет строить метрики с `вложенной' структурой для больших полных флагов из таковых для малых и полностью находить геодезические для них, постепенно идя от самых `верхних' скобочек, которые не содержат никакие другие, к `нижним' (в нашем случае, скобки $1$ и $2$ `выше' чем $3$). При этом для скобок, не содержащих другие, легко найти все $\dot{\Bar{u}}_i \circ u_j$ --- они постоянны, так как этот случай отвечает нормальной метрике на меньшем флаге. Действительно, хорошо известно, что геодезические в нормальной метрике - это $e^{ht}g_0$, $h$ - элемент алгебры Ли, $g_0$ - конфигурация $u_i$ в начальный момент времени. Тогда $\dot{\Bar{u}}_i \circ u_j = \mathrm{const}$.

    Лагранжиан, соответствующий исследуемой метрике, с учётом множителей Лагранжа имеет вид
    \begin{gather}
        \mathcal{L} = 
        \frac{1}{\alpha_{ij}} \Dot{\Bar{u}}_i \circ u_j \Bar{u}_j \circ \dot{u}_i 
        + \frac{1}{\beta_{ab}} \Dot{\Bar{u}}_a \circ u_b \Bar{u}_b \circ \dot{u}_a 
        + \frac{1}{\xi} \Dot{\Bar{u}}_i \circ u_a \Bar{u}_a \circ \dot{u}_i 
        + \lambda^{ij}(\Bar{u}_i \circ u_j - \delta_{ij})+
        \nonumber
        \\
        + \lambda^{ab}(\Bar{u}_a \circ u_b - \delta_{ab})
        + \lambda^{ai}\Bar{u}_a \circ u_i
        + \lambda^{ia}\Bar{u}_i \circ u_a, \label{Lagrangian}
    \end{gather}
    все необходимые суммирования подразумеваются. В первых двух слагаемых суммирование происходит по всем $i\neq j$ (аналогично для $a$ и $b$), хотя в начале раздела оно было по $j < i$. Это сделано в целях удобства, использовано $\Dot{\Bar{u}}_i \circ u_j \Bar{u}_j \circ \dot{u}_i = \frac{1}{2}\left(\Dot{\Bar{u}}_i \circ u_j \Bar{u}_j \circ \dot{u}_i + \Dot{\Bar{u}}_j \circ u_i \Bar{u}_i \circ \dot{u}_j\right)$. Соответственно, $\alpha_{ij}=\alpha_{ji}, \beta_{ab}=\beta_{ba}$. Внутренняя структура скобок $1$ и $2$ полностью неизвестна, поэтому коэффициенты $\alpha_{ij}$ и $\beta_{ab}$ не уточняются. 

Условия согласования $\lambda^{ia}$ дают уравнения 
    \begin{gather} \label{generalizationSystem}
        \frac{1}{\xi} \frac{d}{dt} \left( \Bar{u}_a \circ \dot{u}_i \right) + \left(\frac{2}{\alpha_{ij}}-\frac{1}{\xi}\right)\Bar{u}_a \circ \dot{u}_j \Bar{u}_j \circ \dot{u}_i+ \left(\frac{1}{\xi}-\frac{2}{\beta_{ab}}\right)\Bar{u}_a \circ \dot{u}_b \Bar{u}_b \circ \dot{u}_i = 0.
    \end{gather}
    
    Таким образом, имеем систему обыкновенных дифференциальных уравнений с известными коэффициентами. Введём матрицы
    \begin{gather}
        U = ||\Bar{u}_a \circ \dot{u}_i||_{a,i}, \\
        A=\Big|\Big|  \left(\frac{2\xi}{\beta_{ab}}-1\right) \Bar{u}_a \circ \dot{u}_b \Big|\Big|_{a,b}, \\
        I=\Big|\Big| \left(\frac{2\xi}{\alpha_{ij}}-1\right)
        \Bar{u}_i \circ \dot{u}_j \Big|\Big|_{i,j}. 
    \end{gather}
    
    Очевидно, $A,I$ - антиэрмитовы и квадратные, а $U$ - прямоугольная в общем случае. Система (\ref{generalizationSystem}) приобретает вид
    \begin{gather}
        \frac{d}{dt}U=AU-UI, \label{EvolutionMixedU}
    \end{gather}

    Здесь необходимо сделать несколько небольших формальных замечаний. Вообще говоря, большая скобка будет содержать элементы, которые относятся к нашей скобке $3$, а значит $\frac{1}{\alpha_{ij}}$, $\frac{1}{\zeta}$ и $\frac{1}{\beta_{ab}}$ меняются. Впрочем это не проблема. Достаточно при рассмотрении меньшей скобки (то есть при переходе к меньшему флагу) переопределять коэффициенты. Отсутствие двойки в $\left(1-\frac{\zeta}{\xi}\right)$ в отличие от $\left(1-2\frac{\zeta}{\alpha_{ij}}\right)$ и $\left(1-2\frac{\zeta}{\beta_{ab}}\right)$ объясняется формой Лагранжиана (\ref{Lagrangian}). Для единообразия с первыми двумя слагаемыми в $\mathcal{L}$ необходимо третье слагаемое представить в виде $\frac{1}{\xi} \Dot{\Bar{u}}_i \circ u_a \Bar{u}_a \circ \dot{u}_i = \frac{1}{2\xi} \Dot{\Bar{u}}_i \circ u_a \Bar{u}_a \circ \dot{u}_i + \frac{1}{2\xi} \Dot{\Bar{u}}_a \circ u_i \Bar{u}_i \circ \dot{u}_a$. Здесь двойка и появится.


     Здесь в скобки $1$ и $2$ не вкладываются никакие другие. Соответствующий Лагранжиан имеет вид
        \begin{gather}
        \mathcal{L} = 
        \frac{1}{\alpha} \Dot{\Bar{u}}_i \circ u_j \Bar{u}_j \circ \dot{u}_i 
        + \frac{1}{\beta} \Dot{\Bar{u}}_a \circ u_b \Bar{u}_b \circ \dot{u}_a 
        + \frac{1}{\xi} \Dot{\Bar{u}}_i \circ u_a \Bar{u}_a \circ \dot{u}_i 
        + \lambda^{ij}(\Bar{u}_i \circ u_j - \delta_{ij})+
        \nonumber
        \\
        + \lambda^{ab}(\Bar{u}_a \circ u_b - \delta_{ab})
        + \lambda^{ai}\Bar{u}_a \circ u_i
        + \lambda^{ia}\Bar{u}_i \circ u_a, \label{LagrangianSimple}
        \end{gather}
        где $i,j=1,2,\ldots, N$ и $a,b= N+1,N+2,\ldots, N+K$ (впредь они будут пробегать только такие значения), $\frac{1}{\alpha}$ - коэффициент скобки $1$, $\frac{1}{\beta}$ - коэффициент скобки $2$, $\frac{1}{\xi}$ - коэффициент скобки $3$. Отметим, что, вообще говоря, из определения скобки коэффициенты при $\Dot{\Bar{u}}_i \circ u_j \Bar{u}_j \circ \dot{u}_i$ и $\Dot{\Bar{u}}_a \circ u_b \Bar{u}_b \circ \dot{u}_a$ должны включать $\frac{1}{\xi}$, но мы просто их переопределили сдвигом коэффициентов ${1\over \alpha}, {1\over \beta}$. 

        Как и ранее, фиксируя калибровку, можно положить $\Bar{u}_i \circ \Dot{u}_i = 0$ и $\Bar{u}_a \circ \Dot{u}_a = 0$. Тогда из уравнений движения сразу получаем $\Bar{u}_i \circ \Dot{u}_j = \mathrm{const}$ и $\Bar{u}_a \circ \Dot{u}_b = \mathrm{const}$. Оставшиеся $\Bar{u}_a \circ \dot{u}_i$ удовлетворяют уравнению 
        \begin{gather}
        \frac{d}{dt}U=AU-UB, 
        \end{gather}
        где $U = ||\Bar{u}_a \circ \dot{u}_i||_{a,i},
        A=\Big|\Big|  \left(\frac{2\xi}{\beta}-1\right) \Bar{u}_a \circ \dot{u}_b \Big|\Big|_{a,b},
        B=\Big|\Big| \left(\frac{2\xi}{\alpha}-1\right)
        \Bar{u}_i \circ \dot{u}_j \Big|\Big|_{i,j}$. $A, B$ --  постоянные антиэрмитовы квадратные матрицы, поэтому легко найти решение. Оно имеет вид $U(t)=e^{At}U(0)e^{-Bt}$. Получение динамики $u_i$ и $u_a$ мы обсудим несколько позже. 

        \begin{figure}[h!]
            \centering
            \begin{overpic}[scale=0.3,unit=0.5mm]{anotherMetric1.jpeg}
            \end{overpic}
        \end{figure}
        
        Как и раньше, основная сложность связана с нахождением $\Bar{u}_i\circ \Dot{u}_N$, где $u_N$ соответствует последней точке. Уравнения на них получаются, совпадающими по форме с (\ref{EvolutionMixedU}). Только матрица $A=0$. Соответственно, всё сводится к изученному классу. Отдельно отметим, что эта диаграмма описывает ситуацию, которую мы наблюдали на примере флага $\mathcal{F}_{1,2,3}$.
    
        Теперь рассмотрим диаграмму \db{Видимо, во всех таких случаях метрика будет вырожденной}  
        
        \begin{figure}[h!]
            \centering
            \begin{overpic}[scale=0.3,unit=0.5mm]{anotherMetric2.jpeg}
            \end{overpic}
        \end{figure}
        
        Добавим скобку
    
        \begin{figure}[h!]
            \centering
            \begin{overpic}[scale=0.3,unit=0.5mm]{anotherMetric2Rebuilding.jpeg}
            \end{overpic}
        \end{figure}
        
        В итоговых формулах просто устремим коэффициент этой скобки к нулю. Тогда мы сможем применить ранее разработанную схему.
    
        Теперь обратимся к диаграмме 
        
        \begin{figure}[H]
            \centering
            \begin{overpic}[scale=0.3,unit=0.5mm]{anotherMetric3.jpeg}
            \end{overpic}
        \end{figure}
        
        В этой ситуации необходимо просто добавить много скобок, коэффициенты которых будут в конце устремлены к 0.
    
        \begin{figure}[h!]
            \centering
            \begin{overpic}[scale=0.3,unit=0.5mm]{anotherMetric3Rebuilding.jpeg}
            \end{overpic}
        \end{figure}
        
        Во всех этих случаях структуру внутренних скобок уже можно не уточнять. Наконец, заметим, что произвольная метрика с вложенной структурой сводится к комбинации изученных ситуаций.


        где в скобки $1$ и $2$ никто не    вкладывается. Тогда можно показать, что эволюция векторов $v = \left(u_1,u_2,\ldots, u_{N+K}\right)$, описывающих $\mathcal{F}_{1,2,\ldots,N+K}$, определяется уравнением 
        \begin{gather}
            v
            = 
            v(0)\exp\left(
            \begin{pmatrix}
                \frac{2\xi}{\alpha}A & -X^{\dagger}(0)\\
                X(0) & \frac{2\xi}{\beta}B
            \end{pmatrix}t
            \right)
            \begin{pmatrix}
                \mathcal{F}^{\dagger} & \\
                & \mathcal{K}^{\dagger}
            \end{pmatrix}, \quad \textrm{где} \\
         X = ||\Bar{u}_a \circ \dot{u}_i||_{a,i}, \;\; A=\Big|\Big| \Bar{u}_a \circ \dot{u}_b \Big|\Big|_{a,b}, \;\; B=\Big|\Big|
        \Bar{u}_i \circ \dot{u}_j \Big|\Big|_{i,j}, \;\; \mathcal{F} = e^{\left(\frac{2\xi}{\alpha}-1\right)At}, \;\; \mathcal{K} = e^{\left(\frac{2\xi}{\beta}-1\right)Bt}.\nonumber
        \end{gather}
        Причём предполагается, что $i,j=1,2,\ldots,N$ и $a,b=N+1,N+2,\ldots,N+K$, а $A,B$ - постоянные матрицы.
        
        Перейдём к индукционному переходу. Теперь в скобки $1$ и $2$ могут вкладываться другие по установленным правилам. Мы будем полагать, что геодезические могут быть явно найдены для $\mathcal{F}_{1,2,\ldots,N}$ и $\mathcal{F}_{1,2,\ldots,K}$ с метриками, диаграммы для которых имеют вид скобок $1$ и $2$ со всеми вложенными в них (соответствующие коэффициенты считаются произвольными).\faEmpire\, Лагранжиан имеет вид
        \begin{gather}
        \mathcal{L} = 
        \frac{1}{\alpha_{ij}} \Dot{\Bar{u}}_i \circ u_j \Bar{u}_j \circ \dot{u}_i 
        + \frac{1}{\beta_{ab}} \Dot{\Bar{u}}_a \circ u_b \Bar{u}_b \circ \dot{u}_a 
        + \frac{1}{\xi} \Dot{\Bar{u}}_i \circ u_a \Bar{u}_a \circ \dot{u}_i 
        + \lambda^{ij}(\Bar{u}_i \circ u_j - \delta_{ij})+
        \nonumber
        \\
        + \lambda^{ab}(\Bar{u}_a \circ u_b - \delta_{ab})
        + \lambda^{ai}\Bar{u}_a \circ u_i
        + \lambda^{ia}\Bar{u}_i \circ u_a, \label{Lagrangian}
        \end{gather}
        здесь $i,j=1,2,\ldots, N$ и $a,b= N+1,N+2,\ldots, N+K$, $\alpha_{ij}=\alpha_{ji}$ и $\beta_{ab}=\beta_{ba}$ не уточняются, но индуцированы `вложенной' структурой. 
        
        Набор векторов $u_i$ и $u_a$ ортонормированный, поэтому $\sum_{i=1}^{N} u_i \otimes \Bar{u}_i + \sum_{a=N+1}^{N+K} u_a \otimes \Bar{u}_a= \Id$. Пользуясь этим замечанием, в Лагранжиане можно выделить часть, отвечающую только векторам $u_a$, не считая, конечно, слагаемых с множителями Лагранжа. Действительно, $\Dot{\Bar{u}}_i \circ u_a \Bar{u}_a \circ \dot{u}_i = \Dot{\Bar{u}}_i \circ (\Id - u_j \otimes \Bar{u}_j) \circ \dot{u}_i$. 
        
        С точки зрения `забывающего' расслоения в Лагранжиане выделяется кусок, описывающий движение в слое расслоения $\mathcal{F}_{1,2,\ldots,N+K}\xrightarrow{\mathcal{F}_{1,2,\ldots,K}}\mathcal{F}_{K,K+1,\ldots,N+K}$, который параметризуется векторами $\{u_a\}_{a=N+1}^{N+K}$. Причём понятно, что условия согласования $\lambda^{ab}$ будут иметь такую же форму как для флага $\mathcal{F}_{1,2,\ldots,K}$. Тогда $\Bar{u}_a\circ \dot{u}_b$ для $\mathcal{F}_{1,\ldots,N+K}$ совпадают с таковыми для $\mathcal{F}_{1,2,\ldots,K}$. Аналогичное справедливо для $\Bar{u}_i\circ \dot{u}_j$. То есть для наших метрик движение в слое забывающего расслоения зависит только от информации, заложенной в слой! В частности, отсюда следует, что слои являются полностью геодезическими подмногообразиями. 

        Таким образом, по индукционному предположению можно считать $\Bar{u}_i\circ \dot{u}_j$ и $\Bar{u}_a\circ \dot{u}_b$ известными. Осталось показать способ нахождения $\dot{\Bar{u}}_a \circ u_i$ и эволюции векторов  $u_i, u_a$. Условия согласования $\lambda^{ia}$ дают систему уравнений
        \begin{gather}
        \frac{d}{dt}X=AX-XB,  \quad \textrm{где} \label{EvolutionMixedU} \\ \nonumber
         X = ||\Bar{u}_a \circ \dot{u}_i||_{a,i}, \quad A=\Big|\Big|  \left(\frac{2\xi}{\beta_{ab}}-1\right) \Bar{u}_a \circ \dot{u}_b \Big|\Big|_{a,b}, \quad B=\Big|\Big| \left(\frac{2\xi}{\alpha_{ij}}-1\right)
        \Bar{u}_i \circ \dot{u}_j \Big|\Big|_{i,j}
        \end{gather}
        Очевидно, $A,B$ - антиэрмитовы квадратные матрицы, а $X$ -  в общем случае прямоугольная.

        Решение 
        \begin{gather} \label{solutionU}
            X = \mathcal{T}
            \left[
                e^{\int^{t}_0 d\tau\, A(\tau)}
            \right]
            \times
            X(0)
            \times
            \mathcal{T}
            \left[
                e^{-\int^{t}_0 d\tau\, B(\tau)}
            \right].
        \end{gather}

        Необходимо понять, можно ли явно выписать $\mathcal{T}$-упорядоченные экспоненты от $A$ и $I$, а так же найти эволюцию $u_i$ и $u_a$. Чтобы ответить на оба вопроса, рассмотрим метрику, в которой скобка $3$ вкладывается в большую. Тогда, естественно вопрос о поведении  $\Bar{u}_\lambda \circ \dot{u}_\mu$ ($u_\lambda$ и  $u_\mu$ отвечают самой большой скобке) сведётся, согласно рассуждениям выше, к уравнению вида (\ref{EvolutionMixedU}) на эволюцию $\Bar{u}_\lambda \circ \dot{u}_\mu$ между $u_\lambda$, относящимися к скобке $3$, и всеми остальными. Заметим, что согласно (\ref{solutionU}), вклад от скобки $3$ будет определяться $\mathcal{T}$-упорядоченной экспонентой, решающей уравнение 
        \begin{gather}
            \frac{d}{dt}x=x
            \begin{pmatrix}
                \left(1-2\frac{\zeta}{\alpha_{ij}}\right) \Bar{u}_i\circ \dot{u}_j & -\left(1-\frac{\zeta}{\xi}\right)X^{\dagger}\\
                \left(1-\frac{\zeta}{\xi}\right)X & \left(1-2\frac{\zeta}{\beta_{ab}}\right) \Bar{u}_a\circ \dot{u}_b
            \end{pmatrix}, \label{euationEvolFinal}
        \end{gather}
        где $\zeta$ - коэффициент большой скобки, на диагоналях стоят две квадратные матрицы с соответствующими элементами\footnote{Отсутствие двойки в $\left(1-\frac{\zeta}{\xi}\right)$ в отличие от $\left(1-2\frac{\zeta}{\alpha_{ij}}\right)$ и $\left(1-2\frac{\zeta}{\beta_{ab}}\right)$ объясняется формой Лагранжиана (\ref{Lagrangian}): для единообразия с первыми двумя слагаемыми в $\mathcal{L}$ необходимо третье слагаемое представить в виде $\frac{1}{\xi} \Dot{\Bar{u}}_i \circ u_a \Bar{u}_a \circ \dot{u}_i = \frac{1}{2\xi} \Dot{\Bar{u}}_i \circ u_a \Bar{u}_a \circ \dot{u}_i + \frac{1}{2\xi} \Dot{\Bar{u}}_a \circ u_i \Bar{u}_i \circ \dot{u}_a$.}. 
        
        Отметим, что эта же $\mathcal{T}$-упорядоченная экспонента, но с $\zeta=0$ определяет эволюцию $\{u_i\}_i$ и $\{u_a\}_a$, параметризующих многообразие флагов, связанное со скобкой~$3$. Действительно, уравнения для них имеют вид
        \begin{gather}
            \Dot{u}_i = u_j \Bar{u}_j \circ \Dot{u}_i + u_a \Bar{u}_a \circ \Dot{u}_i,\\
            \Dot{u}_a = u_j \Bar{u}_j \circ \Dot{u}_a + u_b \Bar{u}_b \circ \Dot{u}_a.
        \end{gather}
        
        Таким образом, ответ на наши два вопроса сводится к поиску решения уравнения (\ref{euationEvolFinal}). Для этого надо сделать замену $x=yG$, где
        \begin{gather}
            G=\begin{pmatrix}
                \mathcal{T}
            \left[
                e^{-\int^{t}_0 d\tau\, B(\tau)}
            \right] & \\
             & \mathcal{T}
            \left[
                e^{-\int^{t}_0 d\tau\, A(\tau)}
            \right]
            \end{pmatrix}\equiv
            \begin{pmatrix} 
                \mathcal{F}^{\dagger} & \\
             & \mathcal{K}^{\dagger}
            \end{pmatrix}.
        \end{gather}
        
        Тогда уравнение приобретает вид
        \begin{gather}
            \frac{d}{dt}y = y (\xi - \zeta)
            \begin{pmatrix}
                \mathcal{F}^{\dagger} \frac{2}{\alpha_{ij}} \Bar{u}_i\circ \dot{u}_j \mathcal{F} & -\frac{1}{\xi}X^{\dagger}(0) \\
                \frac{1}{\xi}X(0) & \mathcal{K}^{\dagger}\frac{2}{\beta_{ab}} \Bar{u}_a\circ \dot{u}_b \mathcal{K}
            \end{pmatrix}.
        \end{gather}
        
        Необходимо сделать неочевидное, на первый взгляд, предположение, что $\mathcal{F}^{\dagger} \frac{2}{\alpha_{ij}} \Bar{u}_i\circ \dot{u}_j \mathcal{F} = \frac{2}{\alpha_{ij}} \Bar{u}_i\circ \dot{u}_j(0)$ и $\mathcal{K}^{\dagger}\frac{2}{\beta_{ab}} \Bar{u}_a\circ \dot{u}_b \mathcal{K} = \frac{2}{\beta_{ab}} \Bar{u}_a\circ \dot{u}_b(0)$. Ясно, что оно выполнено для базы индукции. Тогда справедливость этого предположения можно доказать a posteriori. То есть в начале найти вид решения, а потом всё проверить. 
        
        Для уравнения (\ref{euationEvolFinal}) имеем
        \begin{gather}
            x(t)=x(0)\exp
            \left(
                \begin{pmatrix}
                    \frac{2}{\alpha_{ij}} \Bar{u}_i\circ \dot{u}_j(0) & -\frac{1}{\xi}X^{\dagger}(0) \\
                    \frac{1}{\xi}X(0) & \frac{2}{\beta_{ab}} \Bar{u}_a\circ \dot{u}_b(0)
                \end{pmatrix}(\xi - \zeta)t
            \right)
            \times
            \begin{pmatrix} 
                \mathcal{F}^{\dagger} & \\
             & \mathcal{K}^{\dagger}
            \end{pmatrix} =
            \nonumber\\
            =x(0)\mathcal{M}^{\dagger}.
        \end{gather}
        
        Теперь можно проверить наше предположение. Для этого необходимо доказать, что
        \begin{gather}
            \mathcal{M}^{\dagger} \begin{pmatrix}
                 \frac{2}{\alpha_{ij}} \Bar{u}_i\circ \dot{u}_j  & -\frac{1}{\xi}U^{\dagger} \\
                \frac{1}{\xi}U & \frac{2}{\beta_{ab}} \Bar{u}_a\circ \dot{u}_b 
            \end{pmatrix} 
            \mathcal{M}
            =
            \begin{pmatrix}
                \frac{2}{\alpha_{ij}} \Bar{u}_i\circ \dot{u}_j(0) & -\frac{1}{\xi}X^{\dagger}(0) \\
                \frac{1}{\xi}X(0) & \frac{2}{\beta_{ab}} \Bar{u}_a\circ \dot{u}_b(0)
            \end{pmatrix}. \label{assumption}
        \end{gather}
        Легко видеть, что это действительно так. 


    \begin{gather}
        \mathcal{F}_{1,2,\ldots,N} \xrightarrow{\mathrm{Gr}(1,2)} \mathcal{F}_{1,2,\ldots,\hat{k},\ldots,N},
    \end{gather}
    при котором мы `забываем' часть тонкой структуры флага -- подпространства определённых размерностей. В данном случае $\hat{k}$ обозначает то, что мы пропускаем подпространства размерности $k$ ($1\leq k<N$). Слой такого расслоения -- это грассманиан $\mathrm{Gr}(1,2) \cong \CP^{1} \cong \mathcal{F}_{1,2}$. В терминах векторов $\{u\}_{k=1}^N$, описывающих точки на флаге, мы просто выбираем какую-то пару из них и отправляем в слой расслоения. Эти два вектора как раз будут описывать $\CP^{1}$ в слое, так как имеется условие ортогональности на весь набор векторов. 

    Можно пойти дальше и `забыть' ещё одну размерность. Тогда получится башня расслоений  
    
    \begin{figure}[H]
        \centering
        \begin{overpic}[scale=0.1,unit=0.5mm]{tower.jpeg}
        \end{overpic}
    \end{figure}
    
    Видно, что есть `сквозное' расслоение со слоем -- флагом $\mathcal{F}_{1,2,3}$. С точки зрения описания в терминах векторов $u_i$ мы просто отправили в слой ещё один какой-то вектор.

    Эту процедуру можно продолжать и дальше. Существенно, что если мы будем каждый раз `забывать' соседние размерности к уже исключенным, то тогда слоем в каждом `сквозном' расслоении будет полный флаг $\mathcal{F}_{1,2,\ldots,K}, K < N$, параметризуемый набором из $K$ векторов из $\{u\}_{k=1}^N$. Таким образом, мы получаем  
    
    \begin{figure}[h!]
        \centering
        \begin{overpic}[scale=0.1,unit=0.5mm]{BigTower.jpeg}
        \end{overpic}
    \end{figure}